\newtheorem{theorem}{Theorem}
\newtheorem{definition}{Definition}
\newtheorem{lemma}{Lemma}
\newtheorem{assumption}{Assumption}
\newtheorem{rem}{Remark}
\title{Safe and Quasi-Optimal Autonomous Navigation in Environments with Convex Obstacles}
\author{ Ishak Cheniouni \thanks{A preliminary version of the present work has been presented in the 2023 American Control Conference \citep{ACC23}.}\\
	Department of Electrical Engineering\\
	Lakehead University\\
	Thunder Bay, ON P7B 5E1, Canada \\
	\texttt{cheniounii@lakeheadu.ca} \\
	\And
	 Soulaimane Berkane\thanks{Soulaimane Berkane is also with the Department of Electrical Engineering, Lakehead University, Thunder Bay, ON P7B 5E1, Canada.}\\
	Department of Computer Science and Engineering\\
	University of Quebec in Outaouais\\
	101 St-Jean Bosco, Gatineau, QC, J8X 3X7, Canada \\
	\texttt{soulaimane.berkane@uqo.ca}\\
        \And
        Abdelhamid Tayebi \\
	Department of Electrical Engineering\\
	Lakehead University\\
	Thunder Bay, ON P7B 5E1, Canada \\
	\texttt{atayebi@lakeheadu.ca} \\
}
\date{}
\begin{document}
\maketitle

\begin{abstract}

We propose a continuous feedback control strategy that steers a point-mass vehicle safely to a destination, in a {\it quasi-optimal} manner, in sphere worlds. 
The main idea consists in avoiding each obstacle via the shortest path on the cone's surface enclosing the obstacle and heading straight toward the target when the vehicle has a clear line of sight to the target location. In particular, almost global asymptotic stability of the target location is achieved in two-dimensional (2D) environments under a particular assumption on the obstacles configuration. We also propose a reactive (sensor-based) approach, suitable for real-time implementations in \textit{a priori} unknown 2D environments with sufficiently curved convex obstacles, guaranteeing almost global asymptotic stability of the target location. Simulation results are presented to illustrate the effectiveness of the proposed approach.

\end{abstract}

\keywords{Autonomous navigation \and Obstacle avoidance \and Feedback control}

\section{Introduction}
\label{sec:introduction}
\subsection{Motivation} 
Autonomous navigation consists in steering a robot from an initial position to a final destination while avoiding obstacles. The existing solutions for this problem can be classified into two main approaches. The first approach is the plan-and-track approach, which consists in generating, from a map of the environment, a collision-free path to be tracked via a feedback controller. The second approach, referred to as feedback-based approach, is a direct approach which consists in designing, in one shot, a feedback control strategy that steers the robot to the target location along a collision-free path. While safe global (or almost global) convergence to a target is achieved in environments with specific geometries, the existing feedback-based approaches do not generally generate the shortest collision-free paths. In this paper, we address this problem by proposing a continuous feedback control strategy that generates {\it quasi-optimal}\footnote[1]{This term will be rigorously defined later.} trajectories and guarantees safe navigation in $n$-dimensional sphere worlds and two-dimensional arbitrary convex environments.
\subsection{Prior Literature}
Among the path-finding algorithms of the plan-and-track category, one can cite the Dijkstra algorithm \citep{Dijkstra1959ANO} or the A$^{\star}$ (A star) algorithm \citep{Astar}, which rely on grids or graphs representing the environment where the shortest path is determined. Sampling-based algorithms are an example of approaches using such strategies. One can find single query methods such as the family of rapidly exploring random trees (RRTs) that incrementally construct a search tree for a single initial/final goal pair with probabilistic completeness \citep{lavalle_2006}. The variant RRT* is asymptotically optimal \citep{Karaman}. For multiple query methods, the probabilistic roadmaps (PRMs) are another family of probabilistically complete algorithms that construct a roadmap (topological graph) for a given workspace on which different pairs of initial/final goals can be connected. Once an initial/final goal pair is connected to the roadmap, the search can be performed on the roadmap \citep{lavalle_2006}. The variant PRM* is endowed with asymptotic optimality \citep{Karaman}. While the optimality of the RRT* and PRM* depends on the density of the constructed graphs, the tangent visibility graph (TVG) approach designed for 2D environments with polygonal obstacles \citep{ROHNERT198671}, then extended to 2D curved obstacles \citep{LAUMOND198741,Arimoto}, is a multiple query method that provides the exact shortest paths. A more detailed discussion of such approaches can be found in \citep{lavalle_2006, latombe2012robot}. One can also find reactive motion planning algorithms, such as the family of Bug algorithms \citep{Bug1,Bug2} which are used to navigate in planar environments without guarantees on the optimality of the generated paths. 
Artificial potential field methods are an example of a feedback-based approach. They consider a robot moving in a force field where the destination generates an attractive force, and the obstacles generate repulsive forces \citep{khatib}. The destination is the minimum of the potential function, and the negative gradient leads safely to it. These methods suffer from two problems, namely, the generation of local minima where the robot may get trapped instead of reaching the goal, and if the goal is reached the generated path is not generally the shortest collision-free path. To address local minima problem, the authors in \citep{k_R_90} proposed a navigation function (NF) whose negative gradient is the control law that steers the robot from almost all initial conditions to the target location in an \textit{a propri} known sphere world.
In order to navigate in more general spaces, diffeomorphisms from sphere worlds to more complex worlds were proposed in \citep{Pr_K_R_91,R_k_92}. 
The authors in \citep{LoizouNT1, LoizouNT3} proposed tuning-free navigation functions and diffeomorphisms from a point world to a sphere world or a star world. A sufficient condition was given in \citep{SnsNF6} for an artificial potential to be a navigation function in environments containing smooth, non-intersecting, and strongly convex obstacles. More recently, a tuning-free navigation function based on harmonic functions has been proposed in \citep{loizou2021correctbyconstruction} for sensor-based autonomous navigation \citep{LoizouNT1,LoizouNT3}.

In \citep{Arslan2019}, the authors proposed a new sensor-based autonomous navigation strategy (different from the NF-based approach) by constructing a compact obstacle-free local set around the robot using the hyperplanes separating the robot from the neighboring obstacles and then steering the robot towards the projection of the target location onto the boundary of this compact set. This approach ensures safe navigation through unknown strongly convex obstacles and convergence to the destination from everywhere, except from a set of zero Lebesgue measure. This work has been extended for non-convex star-shaped obstacles in \citep{vasilopoulos1}, and polygonal obstacles with possible overlap in \citep{Vasilopoulos2}.
A sensor-based autonomous navigation approach, relying on Nagumo's theorem  \citep{Nagumo} and using tangent cones, was proposed in \citep{souVeloCones}. This approach guarantees safety through an appropriate switching between a stabilizing controller and an obstacle avoidance controller. Control Barrier Functions (CBFs) and Control Lyapunov Functions (CLFs) were used in \citep{Barrier6, Barrier7} along with a quadratic program to design navigation controllers ensuring the stabilization of the desired target location with safety guarantees. Hybrid feedback was used, for instance in \citep{hybr1,SoulaimaneHybTr,Mayur2022}, to achieve global convergence; a feature that is not possible to obtain via continuous time-invariant control due to topological obstructions \citep{k_R_90}.
 Unfortunately, the feedback-based approaches in the above-mentioned papers, although endowed with almost global or global asymptotic stability, do not take into account the optimality of the generated trajectories.
\subsection{Contributions}
The present paper proposes a continuous feedback control strategy that generates {\it quasi-optimal} trajectories, as per Definition \ref{def2} that will be provided in subsection \ref{section:Definition1}, and ensures safe autonomous navigation in $n$-dimensional sphere worlds and two-dimensional environments with arbitrary convex and sufficiently curved obstacles. Our approach relies on iteratively projecting the nominal feedback controller on the obstacles' enclosing cones for known environments, which generates locally optimal collision-free trajectories. A sensor-based implementation of our control approach is proposed for {\it a priori} unknown 2D environments with arbitrary convex and sufficiently curved obstacles. The main contributions of the proposed approach are summarized as follows:
\begin{itemize}
    \item The proposed continuous feedback control generates {\it quasi-optimal} trajectories in terms of length. The generated trajectories are often the shortest, as illustrated through extensive simulation results. 
    \item Except for the restrictions imposed by the standard separation conditions of Assumptions \ref{as:1} and \ref{as:2}, the environment can be highly dense, and the destination can be located arbitrarily close to the boundaries of the obstacles.
    \item The reactive (sensor-based) version of our approach applies to {\it a priori} unknown 2D environments with arbitrary convex and sufficiently curved obstacles, and ensures almost global asymptotic stabilization of the target location.
\end{itemize}
\subsection{Organization}
The remainder of this paper is organized as follows: Section II provides the preliminaries that will be used throughout this article. In Section III, we formulate our autonomous navigation problem. In Section IV, we define the subsets of the free space that
are needed for our proposed control design. In Sections V–VI, we present our control strategy and its properties. In Section VII, our control strategy is adapted to the sensor-based scenarios in two-dimensional sphere worlds and sufficiently curved convex worlds. Simulation results are presented in Section VIII. Section X concludes with a summary of our contributions and prospects for future work.

\section{Notations and Preliminaries}
Throughout the paper, $\mathbb{N}$, $\mathbb{R}$ and $\mathbb{R}_{>0}$ denote the set of natural numbers, real numbers and positive real numbers, respectively. The Euclidean space and the unit $n$-sphere are denoted by $\mathbb{R}^n$ and $\mathbb{S}^n$, respectively. The Euclidean norm of $x\in\mathbb{R}^n$ is defined as $\|x\|:=\sqrt{x^\top x}$ and the angle between two non-zero vectors $x,y\in\mathbb{R}^n$ is given by $\angle (x,y):=\cos^{-1}(x^\top y/\|x\|\|y\|)$ . The Jacobian matrix of a vector field $f: \mathbb{R}^n\rightarrow\mathbb{R}^n$ is given by $J_x(f(x))=[\nabla_x f_1\dots\nabla_x f_n]^\top$ where $\nabla_x f_i=[\frac{\partial f_i}{\partial  x_1}\,\dots\,\frac{\partial f_i}{\partial  x_n}]^\top$ is the gradient of the $i$-th element $f_i$.  Define the ball centered at $x\in\mathbb{R}^n$, of radius $r\in\mathbb{R}_{>0}$, by the set $\mathcal{B}(x,r):=\left\{q\in\mathbb{R}^n|\;\|q-x\| \leq r\right\}$. The interior and the boundary of a set $\mathcal{A}\subset\mathbb{R}^n$ are denoted by $\mathring{\mathcal{A}}$ and $\partial\mathcal{A}$, respectively. The relative complement of a set $\mathcal{B}\subset\mathbb{R}^n$ with respect to a set $\mathcal{A}$ is denoted by $\mathcal{B}^c_\mathcal{A}$. The distance of a point $x\in\mathbb{R}^n$ to a closed set $\mathcal{A}$ is defined as $d(x,\mathcal{A}):=\min\limits_{q\in\mathcal{A}}\|q-x\|$. The Minkowski sum of two convex sets $\mathcal{A}$ and $\mathcal{B}$ is defined as $\mathcal{A}\oplus\mathcal{B}:=\left\{a+b|a\in\mathcal{A},b\in\mathcal{B}\right\}$. The cardinality of a set $\mathcal{N}\subset\mathbb{N}$ is denoted by $\mathbf{card}(\mathcal{N})$. The half-line, starting from a point $x\in\mathbb{R}^n$, with direction $v\in\mathbb{R}^n\setminus\{0\}$ is defined as $\mathcal{L}_h(x,v):=\left\{q\in\mathbb{R}^n|q=x+\delta v,\;\delta\geq0\right\}$. The line segment connecting two points $x,y\in\mathbb{R}^n$ is defined as $\mathcal{L}_s(x, y):=\left\{q\in\mathbb{R}^n|q=x+\delta(y-x),\;\delta\in[0,1]\right\}$. The parallel and orthogonal projections are defined as follows:
\begin{align}
    \pi^{\parallel}(v):=vv ^\top,\quad\pi^{\bot}(v)&:=I_n-vv^\top,
\end{align}
where $I_n\in\mathbb{R}^{n\times n}$ is the identity matrix and $v\in\mathbb{S}^{n-1}$. Therefore, for any vector $x$, the vectors $\pi^{\parallel}(v) x$ and $\pi^{\bot}(v) x$ correspond, respectively, to the projection of $x$ onto the line generated by $v$ and onto the hyperplane orthogonal to $v$.
Let us define the set $\mathcal{P}_{\Delta}(x,v)=\left\{q\in\mathbb{R}^n|v^\top(q-x)~\Delta~0\right\}$, with $\Delta \in\{=,>,\geq,<,\leq\}$.
The hyperplane passing through $x\in\mathbb{R}^n$ and orthogonal to $v\in\mathbb{R}^n\setminus\{0\}$ is denoted by $\mathcal{P}_{=}(x,v)$. The closed negative half-space (resp. open negative half-space) is denoted by $\mathcal{P}_{\leq}(x,v)$ (resp. $\mathcal{P}_{<}(x,v)$) and the closed positive half-space (resp. open positive half-space) is denoted by $\mathcal{P}_{\geq}(x,v)$ (resp. $\mathcal{P}_{>}(x,v)$).
A conic subset of $\mathcal{A}\subseteq\mathbb{R}^n$, with vertex $x\in\mathbb{R}^n$, axis $v\in\mathbb{R}^n\setminus\{0\}$, and aperture $2\psi$ is defined as follows \citep{HybBerkaneECC2019}:
{\small
\begin{align}
    \mathcal{C}^{\Delta}_{\mathcal{A}}(x,v,\psi):=\left\{q\in\mathcal{A}|\|v\|\|q-x\|\cos(\psi)~\Delta~ v^\top(q-x)\right\},
\end{align}}
where $\psi\in(0,\frac{\pi}{2}]$ and $\Delta\in\left\{\leq,<,=,>,\geq\right\}$,  with $``="$, representing the surface of the cone, $``\leq"$ (resp. $``<"$) representing the interior of the cone including its boundary (resp. excluding its boundary), and $``\geq"$ (resp. $``>"$) representing the exterior of the cone including its boundary (resp. excluding its boundary). The set of vectors parallel to the cone $\mathcal{C}^=_{\mathbb{R}^n}(x,v,\psi)$ is defined as follows:
    \begin{align}\label{17}
        \mathcal{V}(v,\psi):=\left\{a\in\mathbb{R}^n|\;\;a^\top v=\|a\|\|v\|\cos(\psi)\right\}.
    \end{align}
\section{Problem Formulation}
Consider a point mass vehicle at position $x\in\mathbb{R}^n$ moving inside a spherical workspace $\mathcal{W}\subset\mathbb{R}^n$ centered at the origin $0$ and punctured by $m\in\mathbb{N}$ balls $\mathcal{O}_i$ such that:
\begin{align}
&\mathcal{W}:=\mathcal{B}(0,r_0),\\
&\mathcal{O}_i:=\mathcal{B}(c_i,r_i),\;\;i\in\mathbb{I}:=\{1,\dots,m\},\label{m2}
\end{align}
where $r_0>r_i>0$ for all $i\in\mathbb{I}$. The free space is, therefore, given by the closed set
\begin{align}\label{8}
   \mathcal{F}:=\mathcal{W}\setminus\bigcup\limits_{i=1}^{m}\mathring{\mathcal{O}}_i.
\end{align}
For $\mathcal{F}$ to be a valid sphere world, as defined in \citep{k_R_90}, the obstacles $\mathcal{O}_i$ must satisfy the following assumptions:
\begin{assumption}\label{as:1}
 The obstacles are completely contained within the workspace and separated from its boundary, {\it i.e.,}
\begin{align}\label{9}
    \min\limits_{a\in\mathcal{O}_i,b\in\partial\mathcal{W}}\|a-b\|>0,\,\forall i\in\mathbb{I}.
\end{align}
\end{assumption}
\begin{assumption}\label{as:2}
 The obstacles are disjoint, {\it i.e.,}
 \begin{align}\label{10}
\min\limits_{a\in\mathcal{O}_i,b\in\mathcal{O}_j}\|a-b\|>0,\,\forall i,j\in\mathbb{I},\,i\neq j.
\end{align}
\end{assumption}
Consequently, the boundary of the free space $\mathcal{F}$ is given by
\begin{align}\label{11}
    \partial\mathcal{F}:=\partial\mathcal{W}\bigcup\Bigl(\bigcup\limits_{i=1}^{m}\partial\mathcal{O}_i\Bigr).
\end{align}
Consider the following first-order dynamics
\begin{align}\label{12}
    \Dot{x}=u,
\end{align}
where $u$ is the control input. The objective is to determine a continuous Lipschitz state-feedback controller $u(x)$ that safely steers the vehicle from almost any initial position $x(0)\in\mathcal{F}$ to any given desired destination $x_d\in\mathring{\mathcal{F}}$. In particular, the closed-loop system
\begin{equation}\label{eq:closed-loop-system}
    \dot x=u(x),\quad x(0)\in\mathcal{F}
\end{equation}
must ensure forward invariance of the set $\mathcal{F}$, almost global asymptotic stability\footnote{An equilibrium point is almost globally asymptotically stable if it is stable and attractive from all initial conditions except from a set of zero Lebesgue measure.} of the equilibrium $x=x_d$, and generates {\it quasi-optimal} trajectories that will be rigorously defined later in subsection \ref{section:Definition1}.
\section{Sets Definition and Obstacles Classification}\label{section:sets}
In this section, we define the subsets of the free space that are needed for our proposed control design in Section \ref{section:control-design}. These are depicted in Fig. \ref{fig:fig1} and given as follows:
\begin{itemize}
\item The hat of a cone inside the workspace $\mathcal{W}$, enclosing an obstacle $\mathcal{O}_i$, of vertex $x\in\mathbb{R}^n$ and aperture $\theta_i$ is defined as follows:
     \begin{align}\label{17}
        \mathcal{H}(x,c_i):=\bigl\{q\in\mathcal{C}^{\leq}_{\mathcal{W}}(x,c_i-x,\theta_i(x))|(c_i-q)^{\top}(x-q)\leq0\bigr\},
    \end{align}
    where the angle $\theta_i(x)=\arcsin\left(r_i/\|c_i-x\|\right)\in(0,\frac{\pi}{2}]$.
    \item The shadow region of obstacle $\mathcal{O}_i$, which is the area hidden by obstacle $\mathcal{O}_i$, from which there is no line of sight to the destination, is defined as follows:
    \begin{align}\label{18}
        \mathcal{D}(x_d,c_i):=\bigl\{q\in\mathcal{C}^{\leq}_{\mathcal{F}}(x_d,c_i-x_d,\varphi_i)|(c_i-q)^\top(x_d-q)\geq0\bigr\},
    \end{align}
    where the angle $\varphi_i=\arcsin\left(r_i/\|c_i-x_d\|\right)\in(0,\frac{\pi}{2}]$.
    \item The exit set of obstacle $\mathcal{O}_i$ separates the set $\mathcal{D}(x_d,c_i)$ and its complement with respect to $\mathcal{F}$ and is defined as follows:
    \begin{align}\label{20}
        \mathcal{S}(x_d,c_i):=\bigl\{q\in\mathcal{C}^=_{\mathcal{F}}(x_d,c_i-x_d,\varphi_i)|(c_i-q)^\top(x_d-q)\geq0\bigr\}.
    \end{align}
    \item The blind set is a subset of $\mathcal{F}$ where there is no line of sight to the destination, and is defined as follows:
    \begin{align}
      \mathcal{BL}&:=\left\{q\in\mathcal{F}|\mathcal{L}_s(q,x_d)\cap\mathcal{O}_k\neq\varnothing,\,k\in\mathbb{I}\right\},\label{bl_1}\\
      &:=\bigcup\limits_{i\in\mathbb{I}}\mathcal{D}(x_d,c_i).
    \end{align}
    \item The visible set is the complement of the blind set with respect to the free space
    \begin{align}
        \mathcal{VI}:= \mathcal{BL}^c_\mathcal{F}.
    \end{align}
    \item The set of blocking obstacles between two given positions $x$ and $y$ is the set of obstacles crossed by the line-segment $\mathcal{L}_s(x,y)$, and  is defined as follows:
    \begin{align}
        \mathcal{LO}(x,y):=\{k\in\mathbb{I}|\mathcal{O}_k\cap\mathcal{L}_s(x,y)\neq\varnothing\}.
    \end{align}
\end{itemize}
\begin{figure}[h!]
     \centering
     \subfloat[]{\includegraphics[width=0.48\linewidth,height=0.48\linewidth,keepaspectratio]{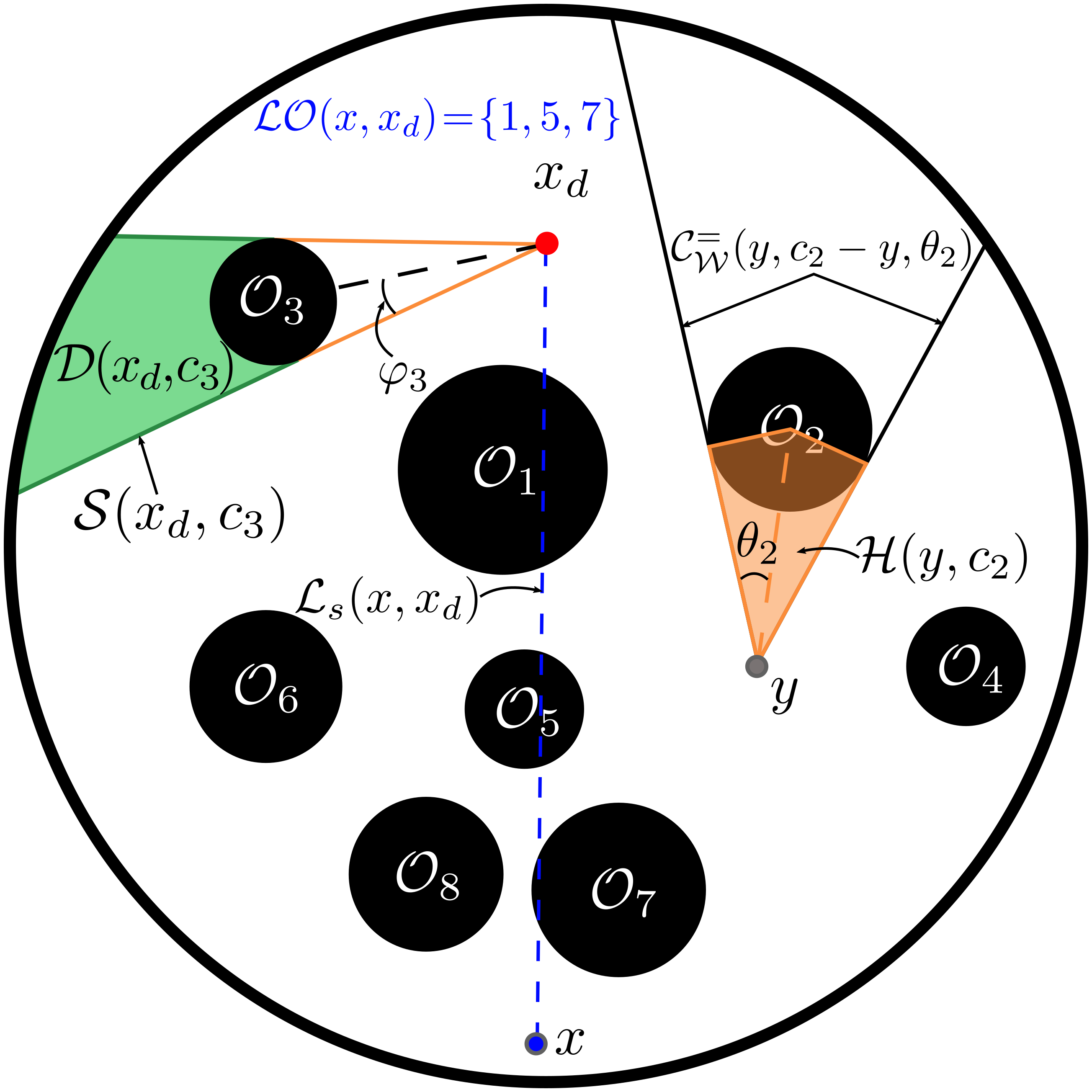}\label{fig:fig1_left}}
     \subfloat[]{\includegraphics[width=0.48\linewidth,height=0.48\linewidth,keepaspectratio]{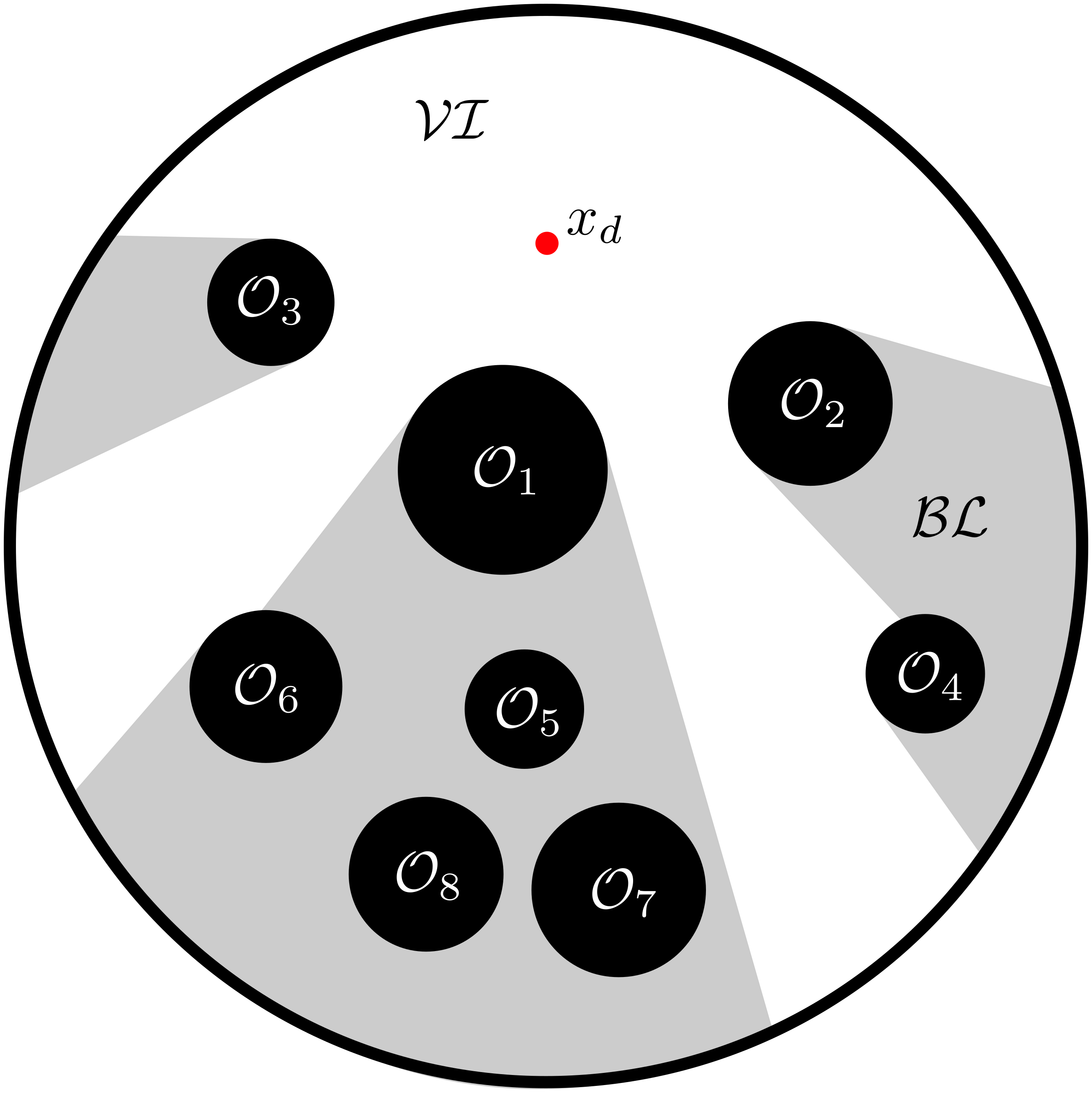}\label{fig:fig1_right}}
     \caption{2D representation of the sets in Section IV.}
     \label{fig:fig1}
\end{figure}
\section{Control Design}\label{section:control-design}
\subsection{Single Obstacle Case}
We design a preliminary control law for the single obstacle case, which will be used as a baseline in the multiple obstacle case. Let us start by considering a single obstacle $\mathcal{O}_i$ and ignoring all others. In the case where the path is clear (\textit{i.e.,} $x$ belongs to the visible set  $\mathcal{VI}$), the vehicle follows a straight line to the destination under the control law $u_d(x)=-\gamma(x-x_d)$ where $\gamma \in\mathbb{R}_{>0}$. In the case where the path is not clear (\textit{i.e.,} $x\in \mathcal{D}(x_d,c_i)$), we generate a control input (vehicle's velocity) that is in the direction of the cone $\mathcal{C}^{=}_{\mathcal{F}}(x,c_i-x,\theta_i)$ enclosing the obstacle. In particular, the direction of the control input should minimize the angle between the nominal control direction, given by $(x_d-x)$, and the set of all vectors parallel to the enclosing cone, that is 
    \begin{equation}\label{min}
    u(x)\in\mathcal{U}(x):=\mathrm{arg}\min\limits_{v_i\in\mathcal{V}(c_i-x,\theta_i)}\angle(x_d-x,v_i),\,x\in\mathcal{D}(x_d,c_i).
    \end{equation}
    Moreover, to ensure continuity of the control input, we impose further that the control is equal to $u_d(x)$ at the exit set $\mathcal{S}(x_d,c_i)\subset\mathcal{D}(x_d,c_i)$, namely
    \begin{align}\label{constraint}
        \forall x\in\mathcal{S}(x_d,c_i),\;u(x)=u_d(x),
    \end{align}
    The following lemma provides the solution of the optimization problem \eqref{min}-\eqref{constraint} and shows its uniqueness.
\begin{lemma}\label{lem1}
The solution of the optimization problem \eqref{min}-\eqref{constraint} is unique and is given by
\begin{align}
    u(x)=\xi(u_d(x),x,i),
\end{align}
where $\xi:\mathbb{R}^n\times\mathbb{R}^n\times\mathbb{N}\to\mathbb{R}^n$ is given by
    \begin{align}\label{alg}
   \xi(u,x,i):=\frac{\sin(\beta_i(u,x))\sin^{-1}(\theta_i(x))}{\cos(\theta_i(x)-\beta_i(u,x))}\pi^{\parallel}(\bar\xi_i)u,
    \end{align}
    with $\bar\xi_i\in\mathcal{V}(c_i-x,\theta_i)$,
    \begin{align*}
        &\bar\xi_i:=\frac{\sin(\theta_i(x))u}{\sin(\beta_i(u,x))\|u\|}-\frac{\sin(\theta_i(x)-\beta_i(u,x))}{\sin(\beta_i(u,x))}\frac{(c_i-x)}{\|c_i-x\|},\\
        &\beta_i(u,x):=\angle(u,c_i-x)\leq\theta_i(x).
    \end{align*}
\end{lemma}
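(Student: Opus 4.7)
My plan is to exploit the rotational symmetry of the set $\mathcal{V}(c_i-x,\theta_i)$ about its axis $c_i - x$ to reduce the minimization to a two-dimensional geometric problem, solve it explicitly, and then use the continuity requirement \eqref{constraint} to fix the unknown magnitude.

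First, I would treat the angle $\beta_i := \angle(u,c_i-x)$ as the key quantity, noting that on the shadow region $\mathcal{D}(x_d,c_i)$ the vector $u_d(x)=-\gamma(x-x_d)$ points into the enclosing cone, so $\beta_i \leq \theta_i(x)$ (this is exactly the standing inequality in the lemma's statement, and it can be read off from the fact that $x \in \mathcal{D}(x_d,c_i)$ forces $\mathcal{L}_s(x,x_d)$ to cross $\mathcal{O}_i$). Next, I would establish that any minimizer of the angle $\angle(u,v)$ subject to $v \in \mathcal{V}(c_i-x,\theta_i)$ must lie in the 2D plane $\Pi$ spanned by $u$ and $c_i - x$. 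For this, decompose a unit candidate $v$ into its axial component (of fixed length $\cos\theta_i$ along $c_i-x$) and its orthogonal component (of fixed length $\sin\theta_i$); since $u^\top v$ equals a constant plus the inner product of $u$ with the orthogonal component, and that inner product is maximized when the orthogonal component is aligned with the projection of $u$ onto the hyperplane $\pi^{\bot}(c_i-x)$, the optimal unit direction is uniquely determined and lies in $\Pi$.

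Second, I would parameterize this unique direction by writing $\bar\xi_i := \cos(\theta_i) v_c + \sin(\theta_i) v_\perp$ where $v_c = (c_i-x)/\|c_i-x\|$ and $v_\perp$ is the unit vector in $\Pi$ orthogonal to $v_c$ on the same side of $u$, namely $v_\perp = (v_u - \cos(\beta_i) v_c)/\sin(\beta_i)$ with $v_u = u/\|u\|$. Expanding and using $\cos\theta_i - \sin\theta_i \cos\beta_i/\sin\beta_i = -\sin(\theta_i-\beta_i)/\sin\beta_i$ gives exactly the expression for $\bar\xi_i$ stated in \eqref{alg}. A short computation using the product-to-sum identity then yields $u^\top \bar\xi_i = \|u\|\cos(\theta_i-\beta_i)$ and $\|\bar\xi_i\|=1$, confirming both that $\bar\xi_i \in \mathcal{V}(c_i-x,\theta_i)$ and that the minimum angle attained is $\theta_i-\beta_i$.

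Third, since the minimization fixes only the direction of $u(x)$, I would use \eqref{constraint} to fix its magnitude. On the exit set $\mathcal{S}(x_d,c_i)$, the half-line from $x_d$ through $x$ is tangent to $\mathcal{O}_i$, so the right triangle with vertices $x$, $x_d$, $c_i$ (right angle at the tangency point) gives $\sin\beta_i(u_d,x) = r_i/\|c_i-x\| = \sin\theta_i(x)$, hence $\beta_i = \theta_i$ on $\mathcal{S}(x_d,c_i)$. Substituting $\beta_i = \theta_i$ into \eqref{alg}, the prefactor collapses to $1$ and $\bar\xi_i$ reduces to $v_u$, so $\pi^{\parallel}(\bar\xi_i)u_d = u_d$ and the continuity requirement is met. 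Writing $u(x) = \lambda \bar\xi_i$ for an unknown $\lambda>0$ and demanding agreement with $u_d$ at $\beta_i=\theta_i$ forces $\lambda = \sin\beta_i \|u\|/\sin\theta_i$, which after using $\bar\xi_i^\top u = \|u\|\cos(\theta_i-\beta_i)$ is precisely the scalar coefficient of $\pi^{\parallel}(\bar\xi_i)u$ in \eqref{alg}. Uniqueness is immediate from step one (unique optimal direction) plus step three (continuity pins down the scalar).

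The step I expect to be most delicate is the reduction to the plane $\Pi$: the symmetry argument must handle the degenerate case when $u$ is collinear with $c_i-x$ (where $\Pi$ is not well defined but the minimum value $\theta_i - \beta_i = \theta_i$ is attained on an entire circle of directions), in which case the continuity requirement still selects the right formula by a limiting argument. The trigonometric manipulations in step two are routine but must be carried out carefully to match the exact form given in \eqref{alg}.
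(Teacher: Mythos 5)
Your proof is correct, but it follows a genuinely different route from the paper. The paper sets up a Lagrangian for the constrained problem $\min_{v_i} (1 - V_d^\top v_i/\|v_i\|)$ subject to $v_i^\top V_{ci}/\|v_i\| = \cos\theta_i$, solves the first-order conditions, obtains two candidate multipliers $\lambda_{1,2} = -\sin(\theta_i\pm\beta_i)/\sin\theta_i$ (after a squaring step), and then compares the cost at the two critical points to identify the minimizer. You instead decompose a unit vector on the cone as $\cos\theta_i\, v_c + \sin\theta_i\, w$ with $w\perp v_c$, $\|w\|=1$, and observe that maximizing $u^\top v$ reduces to aligning $w$ with $\pi^{\bot}(v_c)u$; this yields the unique optimal direction in one step, without introducing and then discarding a spurious critical point, and it makes the planarity of the solution and the optimal value $\theta_i-\beta_i$ transparent. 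Your trigonometric reduction to the stated $\bar\xi_i$, the verification $u^\top\bar\xi_i = \|u\|\cos(\theta_i-\beta_i)$, and the identification $\beta_i=\theta_i$ on the exit set all check out and agree with the paper's final expression $u(x)=\gamma\|x_d-x\|\bigl(V_d-\sin^{-1}(\theta_i)\sin(\theta_i-\beta_i)V_{ci}\bigr)$. Two remarks. First, you correctly flag the degenerate case $\beta_i=0$, where the optimal direction is a whole circle; the paper's proof is silent on this, so you are if anything more careful (and the closed-form $\xi$ vanishes there, consistent with the undesired-equilibria analysis later). Second, your step ``demanding agreement with $u_d$ at $\beta_i=\theta_i$ forces $\lambda=\sin\beta_i\|u\|/\sin\theta_i$'' is slightly overstated: the constraint \eqref{constraint} only pins the scale on $\mathcal{S}(x_d,c_i)$, and extending that scale into the interior of the shadow region is a choice (the natural one, made so that the correction to $u_d$ is purely along $V_{ci}$), not a logical consequence. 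The paper's own proof makes exactly the same silent extension when it sets $\bar\alpha=\gamma\|x_d-x\|$ everywhere, so this is not a gap relative to the paper, but it is worth being aware that the ``uniqueness'' in the lemma rests on this convention.
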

\begin{proof}
See Appendix \ref{appendix:Lemma 1}.
\end{proof}
In other words, Lemma \ref{lem1} shows that,  when $x\in\mathcal{D}(x_d,c_i)$, the control $u(x)$ is a scaled parallel projection of the nominal controller $u_d(x)$ in the direction of $\bar\xi_i$ which represents a unit vector on the cone enclosing the obstacle.
Finally, one obtains the following continuous control strategy in the case of a single obstacle
\begin{align}
   u(x)= \begin{cases}\label{25}
      \displaystyle u_d(x), & x\in\mathcal{VI},\\
      \displaystyle\xi(u_d(x),x,i), & x\in\mathcal{D}(x_d,c_i).
    \end{cases} 
\end{align}
The trajectory of the closed-loop system \eqref{12}-\eqref{25} is length-optimal as shown in the following lemma and illustrated in Fig. \ref{fig:fig6}.
\begin{lemma}\label{lem2}
The path generated by the closed-loop system \eqref{12}-\eqref{25} is the shortest path to the destination $x_d$ from every initial condition $x(0)\in\mathcal{F}\setminus\mathcal{L}_d(x_d,c_i)$ where $\mathcal{L}_d(x_d,c_i):=\mathcal{D}(x_d,c_i)\cap\mathcal{L}_h(c_i,c_i-x_d)$.
\end{lemma}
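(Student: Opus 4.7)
The plan is to handle the two sub-regions $x(0)\in\mathcal{VI}$ and $x(0)\in\mathcal{D}(x_d,c_i)\setminus\mathcal{L}_d(x_d,c_i)$ separately, and in each case explicitly identify the closed-loop trajectory with the classical geodesic in $\mathcal{F}$ joining $x(0)$ to $x_d$. For a single convex obstacle, this geodesic is the straight segment $\mathcal{L}_s(x(0),x_d)$ when the line of sight is unobstructed, and otherwise the \emph{taut-string} path consisting of a tangent segment from $x(0)$ to a point $p_1\in\partial\mathcal{O}_i$, an arc of $\partial\mathcal{O}_i$ from $p_1$ to a point $p_2\in\partial\mathcal{O}_i$, and a tangent segment from $p_2$ to $x_d$. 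Once the trajectory is matched to this geometric object, optimality follows from the classical shortest-path result.

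The first case is immediate: for $x(0)\in\mathcal{VI}$ the control law reduces to $u(x)=-\gamma(x-x_d)$, whose integral curve is the segment $\mathcal{L}_s(x(0),x_d)$; every point of this segment still has line of sight to $x_d$, so the segment stays in $\mathcal{VI}\subset\mathcal{F}$ and is trivially the shortest path.

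The main work is the second case, where I would verify three sub-claims in sequence. Along the tangent line $L$ from $x(0)$ to its tangent point $p_1\in\partial\mathcal{O}_i$, for any $y\in L$ strictly between $x(0)$ and $p_1$ the line $L$ is still tangent to $\mathcal{O}_i$ at $p_1$, so one element of $\mathcal{V}(c_i-y,\theta_i(y))$ is exactly the unit direction of $L$; the hypothesis $x(0)\notin\mathcal{L}_d$ provides a strict angular margin between the two candidate tangents with respect to $u_d$, and since the two tangent angles to $u_d$ can only coincide on $\mathcal{L}_d$ (which $L$ avoids when $x(0)\notin\mathcal{L}_d$), this margin persists along $L$ by continuity, so the minimizer in \eqref{min} keeps selecting the same $\bar\xi_i$ and the trajectory remains on $L$ until it hits $p_1$. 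Once on $\partial\mathcal{O}_i$ the key observation is that $\theta_i(x)=\pi/2$ for every $x\in\partial\mathcal{O}_i$, which makes the conic enclosing $\mathcal{O}_i$ from $x$ degenerate to the tangent hyperplane of $\partial\mathcal{O}_i$ at $x$ and forces $\xi(u_d(x),x,i)$ in \eqref{alg} to be tangent to $\partial\mathcal{O}_i$; hence $\partial\mathcal{O}_i$ is an invariant set of the closed-loop flow and the trajectory slides along the boundary, with the direction of travel still selected by \eqref{min}, until it reaches the exit point $p_2\in\partial\mathcal{O}_i\cap\mathcal{S}(x_d,c_i)$. At $p_2$ the segment $\mathcal{L}_s(p_2,x_d)$ is tangent to $\mathcal{O}_i$ by the definition of $\mathcal{S}(x_d,c_i)$, so $p_2\in\mathcal{VI}$ and the continuity condition \eqref{constraint} gives $u(p_2)=u_d(p_2)$, whose direction coincides with the boundary tangent at $p_2$; the remainder of the trajectory is the straight segment $\mathcal{L}_s(p_2,x_d)$ exactly as in the first case.

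Concatenating the three pieces yields the taut-string path, whose optimality around a single convex obstacle is the classical geometric result I would then invoke. The role of the exclusion $\mathcal{L}_d$ is precisely to break the symmetry in the first sub-claim: on $\mathcal{L}_d$ the two elements of $\mathcal{V}(c_i-x,\theta_i(x))$ are symmetric with respect to $u_d$, the minimizer in \eqref{min} is not single-valued, and there is no continuous way to pick a side without a hybrid switching mechanism. The most delicate step is the boundary-invariance argument in the middle sub-claim: a naive reading of $\dot x=u(x)$ at $p_1$ suggests the trajectory peels off along $L$ past $p_1$, and one has to exploit the explicit expression \eqref{alg} at $\theta_i(x)=\pi/2$ to see that $\xi(u_d(x),x,i)\cdot(c_i-x)=0$ on $\partial\mathcal{O}_i$, so $\|c_i-x\|$ is instantaneously constant and the trajectory actually follows $\partial\mathcal{O}_i$.
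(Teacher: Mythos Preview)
Your proposal is correct and arguably more complete than the paper's own argument. Both proofs rest on the same geometric fact—that the closed-loop trajectory in the shadow region coincides with the taut-string path (tangent segment, boundary arc, tangent segment)—but the two executions differ. The paper proceeds by an ad-hoc comparison: it asserts that the closed-loop trajectory lies ``on the enclosing cone,'' then argues that any alternative path staying between $\mathcal{L}_s(x(0),x_d)$ and the selected tangent must pass through the tangent point $p_1$ (so the straight segment to $p_1$ wins), and dismisses paths outside the cone as ``dilated versions'' with larger curvature radius. You instead decompose the trajectory explicitly into its three pieces, verify each using the formula in Lemma~\ref{lem1} (in particular the tangency $\xi(u_d,x,i)\cdot(c_i-x)=0$ at $\theta_i=\pi/2$, which the paper never spells out), and then invoke the classical single-obstacle geodesic result. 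Your route is cleaner and covers the arc and the exit segment, which the paper's sketch glosses over. One small correction: on $\mathcal{L}_d$ the issue is not that the minimizer in \eqref{min} is multi-valued but that $\beta_i=0$ and the control \eqref{alg} vanishes (these are exactly the undesired equilibria of the single-obstacle system); the practical consequence—no well-defined avoidance direction—is the same.
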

\begin{proof}
See Appendix \ref{appendix:Lemma 2}.
\end{proof}
\begin{figure}[h!]
\centering
\includegraphics[scale=0.42]{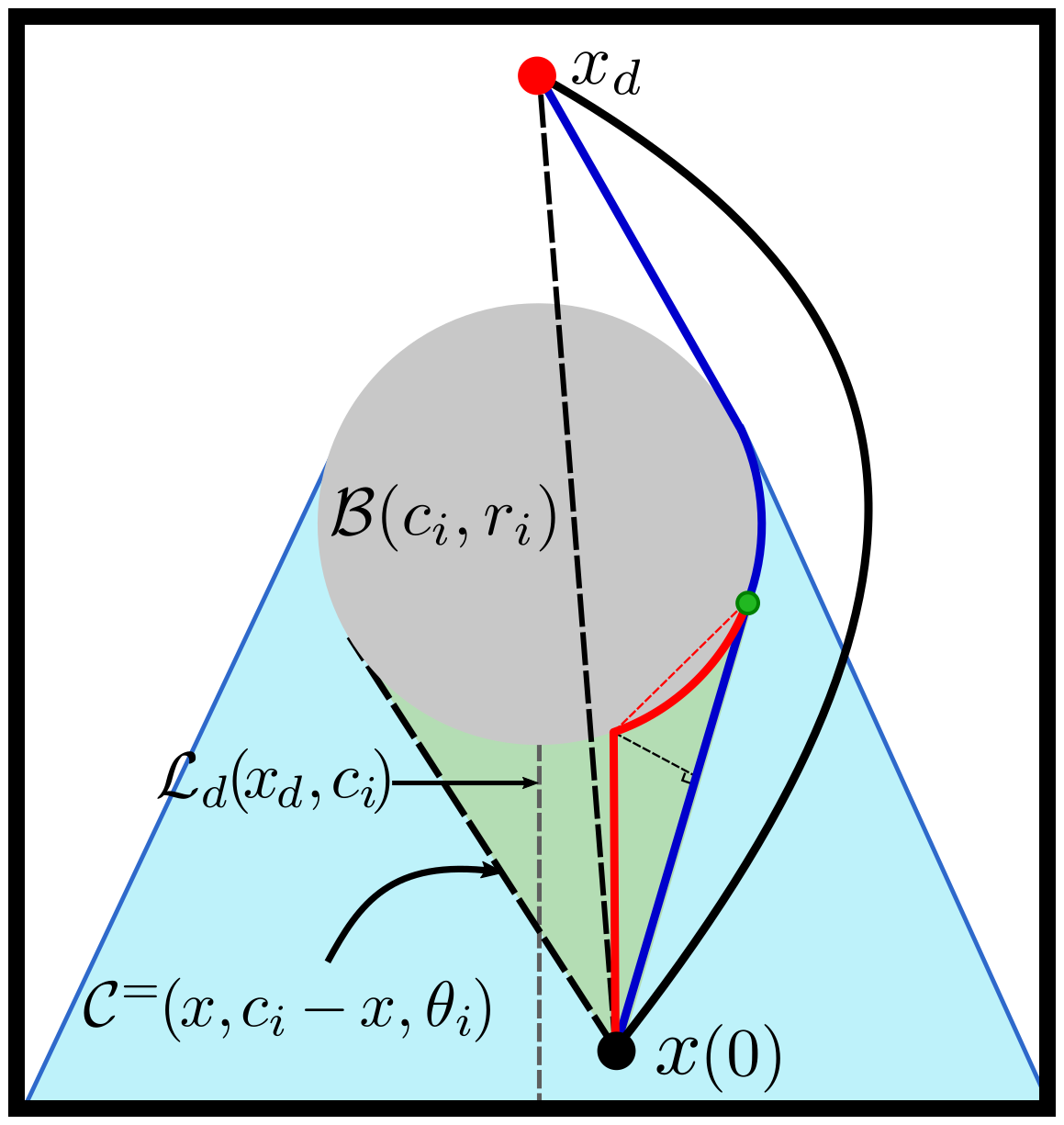}
\caption{Shortest path in a single-obstacle sphere world (blue curve).}
\label{fig:fig6}
\end{figure}
\subsection{Multiple Obstacles Case}
The objective in this subsection is to extend the controller \eqref{25} to the multiple obstacle case. The robot moves under the nominal control $u_d(x)$ when the robot has a clear line of sight to the destination ({\it i.e.,} $x\in\mathcal{VI}$). When there is no clear line of sight to the destination ({\it i.e.,} $x\in\mathcal{BL}$), one proceeds with multiple projections as described hereafter. As per \eqref{bl_1}, at every position $x\in\mathcal{BL}$, the blocking obstacles between $x$ and $x_d$ are represented by the set $\mathcal{LO}(x,x_d)\neq\varnothing$. Among the set $\mathcal{LO}(x,x_d)$, we select the one closest to destination $\bigl(${\it i.e.,} $i=\arg\{\min d(x_d,\mathcal{O}_k),\,k\in\mathcal{LO}(x,x_d)\}\bigr)$, where $u_d(x)$ is projected onto the enclosing cone of the selected obstacle $\mathcal{O}_i$ using \eqref{alg}, as in the case of a single obstacle. The resulting control vector is denoted by $u_1(x)$. The next obstacle to be considered is selected from the set of blocking obstacles $\mathcal{LO}(x,\hat{c}_i(x))$, where $\hat{c}_i(x):=x+\pi^{\parallel}(u_1(x)/\|u_1(x)\|)(c_i-x)$ is the point at which the line directed by $u_1(x)$ is tangent to obstacle $\mathcal{O}_i$. One chooses the closest obstacle among the set $\mathcal{LO}(x,\hat{c}_i)$ in terms of the Euclidean distance to $\mathcal{O}_i$. If $\mathcal{LO}(x,\hat{c}_i(x))=\varnothing$, the path is free. Otherwise, $u_1$ will be considered as $u_d$ for the newly selected obstacle and the same approach is followed to obtain $u_2$. Obstacle $\mathcal{O}_i$ is called an \textbf{ancestor} to the selected obstacle and the selection and projection are repeated until the path is free (see Fig. \ref{fig:fig3}).
The obstacles selected during the successive projections at a position $x$, are grouped in an ordered list $\mathcal{I}(x)\subset\mathbb{I}$ from the first obstacle $\bigl(\mathcal{O}_i$, such that $i=\arg\{\min d(x_d,\mathcal{O}_k),\,k\in\mathcal{LO}(x,x_d)\}\bigr)$ to the last one (obstacle involved in the last projection). Let $h(x)=\mathbf{card}(\mathcal{I}(x))$ be the number of required projections at  position $x$. Define the map $\iota_x :\{1,\dots,h(x)\}\rightarrow\mathcal{I}(x)$ which associates to each projection $p\in\{1,\dots,h(x)\}$ the corresponding obstacle $\iota_x(p)\in\mathcal{I}(x)$. 
The set of positions involving obstacle $k$ in the successive projections is called active region and defined as $\mathcal{AR}_k:=\left\{q\in\mathcal{BL}|k\in\mathcal{I}(q)\right\}$. To sum up, the intermediary control at a step $p\in\{1,\dots,h(x)\}$ and position $x\in\mathcal{AR}_{\iota_x(p)}$ is given by the recursive formula
    \begin{align}\label{eq:recursive-control}
        u_p(x)=\xi(u_{p-1}(x),x,\iota_x(p)),
    \end{align}
    with $u_0(x)=u_d(x)$ and $\xi(\cdot,\cdot,\cdot)$ as defined in Lemma \ref{lem1}. The point at which the line directed by $u_p(x)$ is tangent to obstacle $\mathcal{O}_{\iota_x(p)}$ is given by $\hat{c}_{\iota_x(p)}(x):=x+\pi^{\parallel}(u_p(x)/\|u_p(x)\|)(c_{\iota_x(p)}-x)$.
Finally, the proposed control law is obtained by performing $h(x)$ successive projections and is given by
\begin{align}\label{36}
   u(x)= \begin{cases}
      u_d(x), & x\in\mathcal{VI},\\
      u_{h(x)}(x),&
      x\in \mathcal{BL}.
    \end{cases} 
\end{align}
    \begin{figure}[h!]
    \centering
    \includegraphics[scale=0.4]{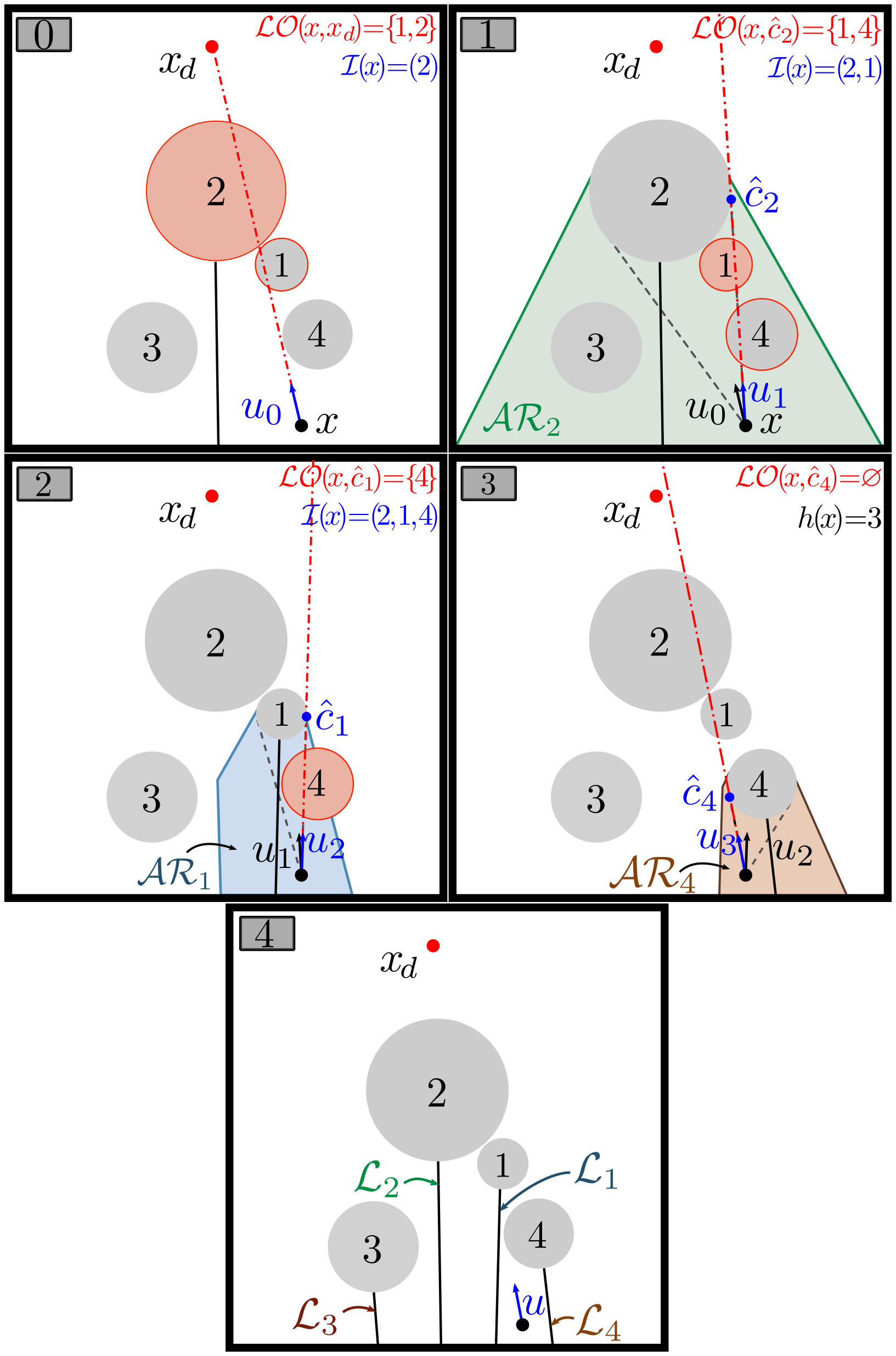}
    \caption{Successive projections of the control $u_d$ in a two-dimensional sphere world with four obstacles.}
    \label{fig:fig3}
    \end{figure}
    The implementation of the control strategy \eqref{36}  is summarized in Algorithm \ref{alg1}.
    \begin{rem}
     It is worth pointing out that the successive projections, involved in the control design, start from the closest obstacle to the destination. This approach enables our controller to enjoy the following features: 1) generates {\it quasi-optimal} trajectories; 2) guarantees the continuity of the control input.
    \end{rem}
 \begin{algorithm}
 \caption{Implementation of the control law \eqref{36} in the closed-loop system \eqref{eq:closed-loop-system}}\label{alg1}
 \begin{algorithmic}[1]
 \renewcommand{\algorithmicrequire}{\textbf{Initialization:}}
  \REQUIRE : $x_d$, $e_c$;\\
  \WHILE{true}
  \STATE Measure $x$;
  \IF{$\|x-x_d\|\leq e_c$}
  \STATE Break;
  \ELSE
  \IF{ $x\in\mathcal{BL}$}  
  \STATE $i\leftarrow\arg\min\limits_{k\in \mathcal{LO}(x,x_d)}d(x_d,\mathcal{O}_k)$;
  \WHILE{$i\neq\{\varnothing\}$}
    \STATE Update $u$ using \eqref{eq:recursive-control};
    \IF{ $\mathcal{LO}(x,\hat{c}_i(x))=\varnothing$}
    \STATE $i\leftarrow\{\varnothing\}$;
    \ELSE
    \STATE $i\leftarrow\arg\min\limits_{k\in \mathcal{LO}(x,\hat{c}_i(x))} d(\hat{c}_i(x),\mathcal{O}_k)$;
    \ENDIF
  \ENDWHILE
  \ELSE
  \STATE $u\leftarrow u_d$;
  \ENDIF
  \STATE Execute $u$ in \eqref{eq:closed-loop-system};
  \ENDIF
 \ENDWHILE
 \end{algorithmic} 
 \end{algorithm}
\subsection{Characterization of the generated trajectories}\label{section:Definition1}
The proposed control strategy steers the robot from an initial location $x_0\in\mathcal{F}$ to a final destination $x_d\in\mathring{\mathcal{F}}$ by tracking a position-dependant virtual destination. A virtual destination at a position $x\in\mathcal{BL}$ is given by $P(x):=P_{h(x)}(x)$, where $P_{h(x)}(x)$ is the last in a list of successive intermediary destinations $P_p(x):=x+u_{p}(x)$, with $p\in\{1,\dots,h(x)\}$, $h(x)=\mathbf{card}(\mathcal{I}(x))$ and $P_0(x)=x_d$. The point $P_p(x)$ lies on the surface of the cone enclosing the obstacle of index $\iota_x(p)\in\mathcal{I}(x)$. The intermediary destinations are designed to guarantee a minimum deviation between $(P_{p-1}(x)-x)$ and $(P_p(x)-x)$ for all $p\in\{1,\dots,h(x)\}$. This deviation represented by the angle $\angle((P_{p-1}(x)-x),(P_{p}(x)-x))=\angle(u_{p-1}(x),u_p(x))$ is the smallest possible since $u_p(x)=\xi(u_{p-1}(x),x,\iota_x(p))$, where the operator $\xi(\cdot,\cdot,\cdot)$, defined in Lemma \ref{lem1}, minimizes the angle $\angle(u_{p-1}(x),u_p(x))$ such that $u_p(x)\in\mathcal{V}(c_{\iota_x(p)}-x,\theta_{\iota_x(p)}(x))$. Recall that the set $\mathcal{V}(c_{\iota_x(p)}-x,\theta_{\iota_x(p)}(x))$ is the set of vectors parallel to the cone $\mathcal{C}^{=}(x,c_{\iota_x(p)}-x,\theta_{\iota_x(p)}(x))$ enclosing obstacle $\mathcal{O}_{\iota_x(p)}$. The virtual destination $P(x)$, at a position $x\in\mathcal{BL}$, is the final intermediary destination obtained through the following recursive minimization process:
\begin{equation}\label{min_process}
\begin{aligned}
    &P_p(x):=\arg\min\limits_{y\in\mathcal{C}^{=}_{\mathcal{F}}(x,c_{\iota_x(p)}-x,\theta_{\iota_x(p)}(x))\setminus\{x\}}\angle(y,P_{p-1}(x)),\\
    &P_0(x)=x_d,~~p\in\{1,\dots,h(x)\}. 
\end{aligned}
\end{equation}
The virtual destination coincides with the final destination (\textit{i.e.,} $P(x):=x_d$) when $x\in\mathcal{VL}$.\\
Throughout this paper, the trajectories generated with our optimized successive projections approach are referred to as \textit{quasi-optimal} trajectories, and are defined as follows:
\begin{definition}\label{def2}
    Given an initial position $x_0\in\mathcal{F}$ and a final destination $x_d\in\mathring{\mathcal{F}}$, a continuously differentiable trajectory connecting $x_0$ and $x_d$ is said to be {\it quasi-optimal} if it has the shortest length when $x_0\in\mathcal{VI}$ and when $x_0\in\mathcal{BL}$, the tangent vector to the trajectory, at each $x$, points towards the virtual destination $P(x)$ obtained by the recursive minimization process \eqref{min_process}. 
\end{definition}
\vskip 0.15cm
A {\it quasi-optimal} trajectory, as per Definition \ref{def2}, is a trajectory along which the vehicle's velocity, at a given location $x$, always points to a virtual destination (depending on $x$). The virtual destination, at position $x$ on the trajectory, is a result of a series of minimized deviations from the nominal direction (the direction from $x$ to $x_d$) with respect to the blocking obstacles, starting from the closest to the destination $x_d$. An example of a {\it quasi-optimal} trajectory is shown in Fig. \ref{fig:quasi_optimality_def} in blue color. Fig. \ref{fig:quasi_optimality_def}-\subref{fig:quasi_left}  and Fig. \ref{fig:quasi_optimality_def}-\subref{fig:quasi_middle} illustrate the characteristics of a {\it quasi-optimal} trajectory where at each position $x$ on the trajectory, the tangent to the trajectory points toward the green virtual destination. The green virtual destination $P(x)$ is obtained by first minimizing the deviation of the nominal direction to the final destination (red point) with respect to obstacle $4$, which gives the intermediary destination $P_1(x)$ (orange point). The same operation is repeated with all the intermediary destinations until the green virtual destination is obtained. A simulation video highlighting the characteristics of a {\it quasi-optimal} trajectory can be found at \url{https://youtu.be/CzIjtsy6HBA}. The generated {\it quasi-optimal} trajectory shown in Fig. \ref{fig:quasi_optimality_def}-\subref{fig:quasi_right} coincides with the shortest path (green curve). However, it is not always the case, as shown in Fig. \ref{fig:optimal}, where one can observe that, for the initial position $x_0^2$, the {\it quasi-optimal} trajectory (blue) coincides with the shortest path (green), while for the initial position $x_0^1$, it does not. 
The following remark provides some additional interesting features of {\it quasi-optimal} trajectories in two-dimensional environments.  
\begin{rem}
In two-dimensional environments, the {\it quasi-optimal} trajectories are length-optimal between any two successive avoided obstacles. They are generated by smoothly connected lines (common tangents to pairs of obstacles) and arcs of obstacles' boundaries. These trajectories belong to the tangent visibility graph (TVG) (also known as the reduced visibility graph) that was introduced in \citep{ROHNERT198671} for two-dimensional environments with polygonal obstacles and shown to contain the shortest path, then extended to two-dimensional environments with curved obstacles \citep{LAUMOND198741,Arimoto}.
\end{rem}

\begin{figure}[h!]
     \centering
     \subfloat[]{\includegraphics[width=0.29\linewidth,height=0.89\linewidth,keepaspectratio]{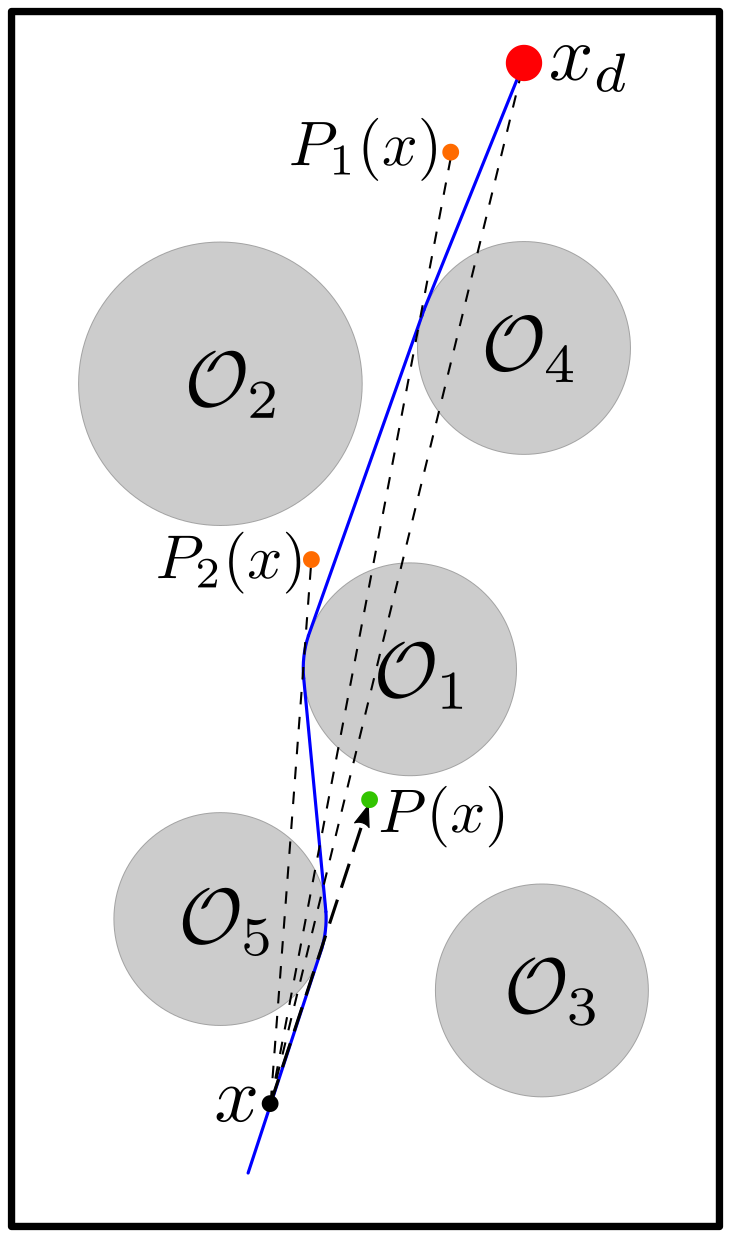}\label{fig:quasi_left}}
     \subfloat[]{\includegraphics[width=0.29\linewidth,height=0.89\linewidth,keepaspectratio]{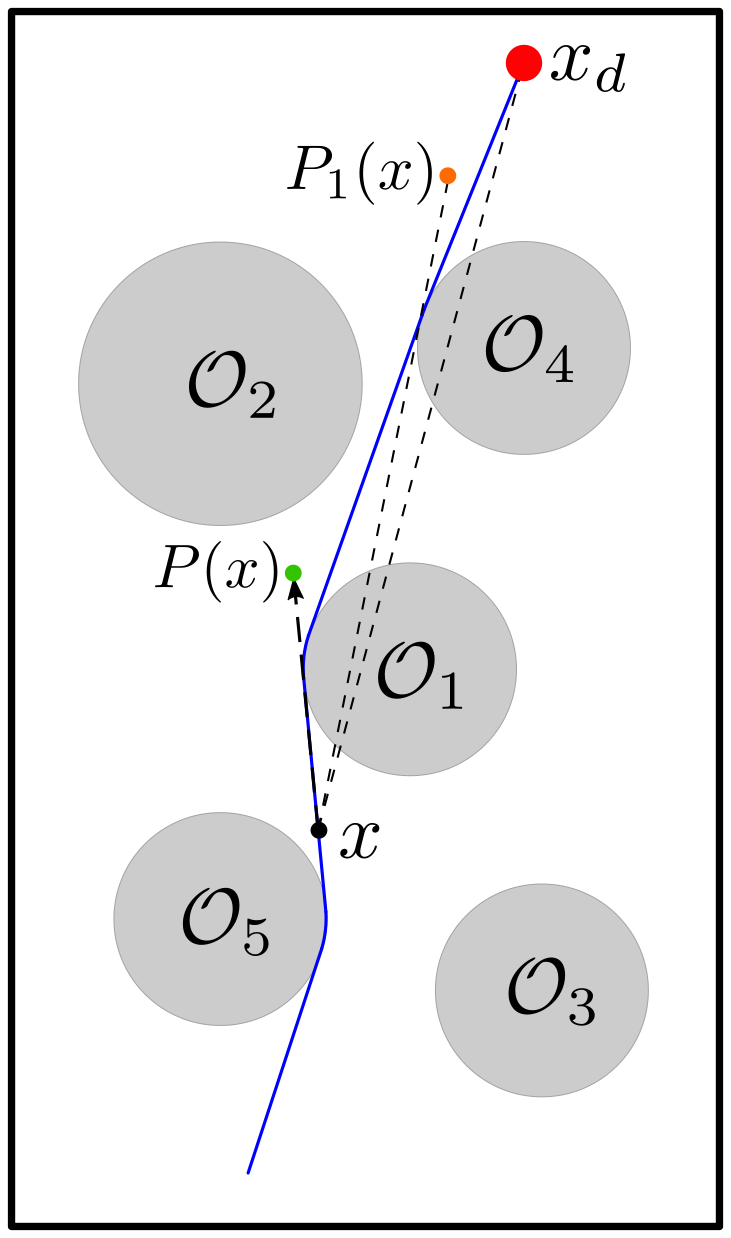}\label{fig:quasi_middle}}
      \subfloat[]{\includegraphics[width=0.29\linewidth,height=0.89\linewidth,keepaspectratio]{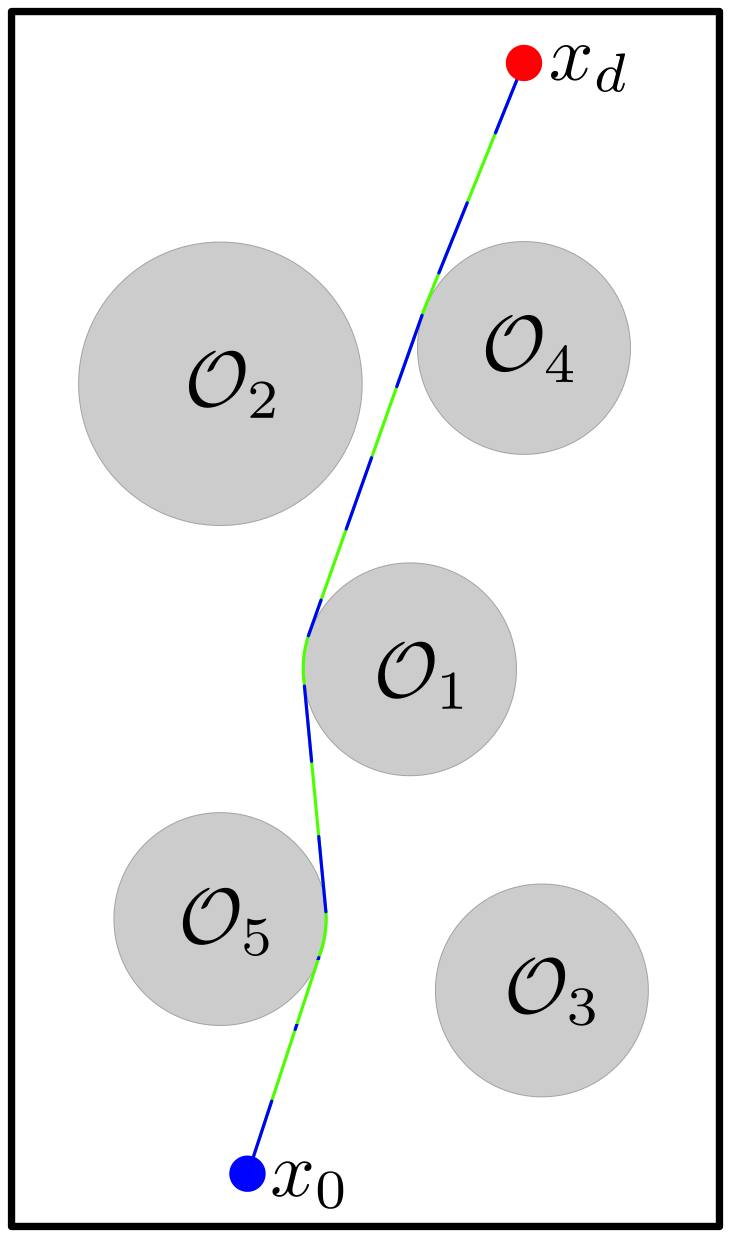}\label{fig:quasi_right}}
      \caption{Quasi-optimal trajectory in 2D workspace.}
     \label{fig:quasi_optimality_def}
\end{figure}

\begin{figure}[h!]
     \centering
     \includegraphics[width=2\linewidth,height=0.3\linewidth,keepaspectratio]{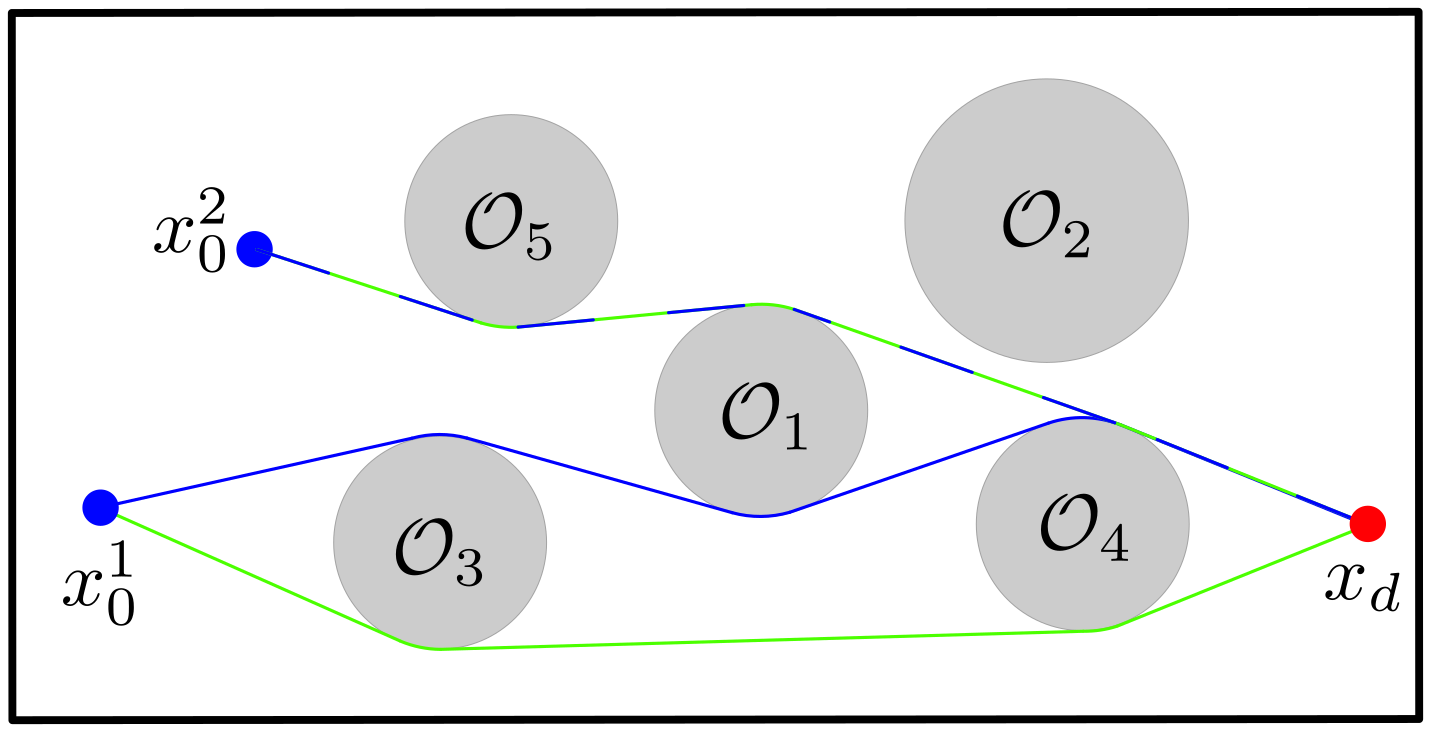}
     \caption{Optimal and quasi-optimal trajectories.}
     \label{fig:optimal}
\end{figure}
\section{Safety and Stability Analysis}
In this section, the safety and stability of the trajectories of the closed-loop system \eqref{12}-\eqref{36} will be analyzed.
Nagumo's theorem (\citep{Nagumo,Set-Theoretic_Methods_in_Control}), offers an important tool to prove safety. One of the statements of this theorem is the one based on Bouligand's tangent cones \citep{Bouligand}. 
\begin{definition}Given a closed set $\mathcal{K}$, the tangent cone to $\mathcal{K}$ at $x$ is
$
    \mathcal{T}_{\mathcal{K}}(x):=\{z:\lim_{\tau\rightarrow0}\mathrm{inf}\,\tau^{-1}d(x+\tau z, \mathcal{K})=0\}.
$
\end{definition}
In our case, when $x\in\mathring{\mathcal{F}}$, the tangent cone is the Euclidean space ($\mathcal{T}_{\mathcal{F}}(x)\equiv\mathbb{R}^n$), and since the free space is a sphere world (smooth boundary), the tangent cone at its boundary is a half-space (see Fig. \ref{fig:Tangent_cones}). Nagumo's theorem guarantees, in a navigation problem, that the robot stays inside the free space $\mathcal{F}$. For Nagumo's condition to be satisfied, the velocity vector $u(x)$ must point inside (or is tangent to) the free space \citep{souVeloCones}. 
In what follows, we rely on Nagumo's theorem to prove the safety of the trajectories generated by our closed-loop system.
\begin{lemma}[Safety]\label{lem3}
Consider the closed set $\mathcal{F}$ described in \eqref{8}
and the kinematic system \eqref{12} under the control law \eqref{36}. Then, the closed-loop system \eqref{12}-\eqref{36} admits a unique solution for all $t\geq0$ and the set $\mathcal{F}$ is forward invariant.
\end{lemma}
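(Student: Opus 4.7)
The plan is to apply Nagumo's theorem for Bouligand tangent cones in two steps: establish that the closed-loop vector field is well-posed (continuous and locally Lipschitz) on $\mathcal{F}$, then verify $u(x) \in \mathcal{T}_{\mathcal{F}}(x)$ at every boundary point so that $\mathcal{F}$ is forward invariant. Forward completeness for all $t \geq 0$ will follow immediately from the compactness of $\mathcal{F} \subset \mathcal{W}$, which precludes finite-time escape.

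For well-posedness, $u_d$ is smooth on $\mathcal{F}$ and each projection $\xi(\cdot,\cdot,i)$ of Lemma \ref{lem1} is smooth in its first two arguments over its effective domain (the singularity $\beta_i = \theta_i$ arises only on the cone surface and is compatible with the exit-set constraint \eqref{constraint}). Continuity across $\mathcal{S}(x_d,c_i)$ separating $\mathcal{VI}$ from $\mathcal{D}(x_d,c_i)$ is exactly \eqref{constraint}; in the multi-projection setting continuity propagates inductively, since when $x$ crosses the exit surface of the terminal obstacle in the current chain $\mathcal{I}(x)$ the corresponding $\xi$ reduces to the identity and the cardinality $h(x)$ drops by one without changing the output. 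Local Lipschitzness of the finite composition $u_{h(x)}$ follows, yielding a unique local solution.

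For Nagumo's condition I would split the boundary $\partial\mathcal{F}$ into $\partial\mathcal{W}$ and $\partial\mathcal{O}_i$. On $\partial\mathcal{W}$ the tangent cone is the half-space $\mathcal{P}_{\leq}(x,x)$. If $x\in\mathcal{VI}$, then $x^\top u_d(x) = \gamma(x^\top x_d - r_0^2) < 0$ by Cauchy--Schwarz since $\|x_d\|<r_0$. If $x\in\mathcal{BL}\cap\partial\mathcal{W}$, then $u(x)$ is proportional to $P(x)-x$ with $P(x)$ lying in the hat \eqref{17} of the terminal obstacle, hence in $\mathcal{W}$; Assumption \ref{as:1} then yields $\|P(x)\|<r_0$ and Cauchy--Schwarz concludes. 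On $\partial\mathcal{O}_i$ the tangent cone is $\mathcal{P}_{\leq}(x,c_i - x)$ and $\theta_i(x)=\pi/2$, so every vector of $\mathcal{V}(c_i-x,\pi/2)$ is orthogonal to $c_i-x$, i.e., tangent to $\partial\mathcal{O}_i$. Next I would show that whenever $i \in \mathcal{I}(x)$, $i$ must be the terminal element of the chain: at the degenerate cone $u_p(x)\perp(c_i-x)$, so $\hat{c}_i(x) = x + \pi^{\parallel}(u_p/\|u_p\|)(c_i-x) = x$, giving $\mathcal{LO}(x,\hat{c}_i(x))=\varnothing$ and stopping the recursion; hence $u(x)\in\mathcal{V}(c_i-x,\pi/2)$ and $(c_i-x)^\top u(x) = 0$. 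If $i \notin \mathcal{I}(x)$ and $x\in\mathcal{VI}$, the line-of-sight condition forces $(c_i-x)^\top(x_d-x)\leq 0$, hence $(c_i-x)^\top u_d(x)\leq 0$; the remaining subcase is $x\in\partial\mathcal{O}_i\cap\mathcal{BL}$ with $i\notin\mathcal{I}(x)$.

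I expect this last subcase to be the main obstacle. One must show that the successive projections toward blocking obstacles in $\mathcal{I}(x)$ cannot rotate the direction $u_d$ into the outward half-space of a non-blocking obstacle $\mathcal{O}_i$ whose boundary the robot is grazing. Resolving this cleanly should combine Assumption \ref{as:2} (positive separation between obstacles) with the geometric meaning of the tangent points $\hat{c}_k(x)$ to argue that the enclosing cones of the obstacles in $\mathcal{I}(x)$ are separated from $\mathcal{O}_i$ at $x$, so that the final direction stays in the tangent half-space $\mathcal{P}_{\leq}(x,c_i-x)$. The rest of the proof is then bookkeeping: combining continuity and the tangent-cone inclusion to invoke Nagumo, and using compactness of $\mathcal{F}$ to extend the unique solution to $[0,\infty)$.
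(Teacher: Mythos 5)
Your overall architecture (well-posedness plus a case-by-case verification of Nagumo's condition on $\partial\mathcal{F}$) is the same as the paper's, but two steps that you gloss over are precisely the ones that carry the weight of the proof. First, well-posedness: you assert that "local Lipschitzness of the finite composition $u_{h(x)}$ follows" from continuity of the pieces and the exit-set matching \eqref{constraint}. That does not follow. Continuity of the value across the boundaries of the active regions $\mathcal{AR}_k$ (where the chain $\mathcal{I}(x)$ and the count $h(x)$ change) says nothing about the difference quotient across those boundaries. The paper does not claim local Lipschitzness on $\mathcal{BL}$ at all; it expands $u(x)=u_d(x)-\gamma\|x-x_d\|\sum_{p}\prod_{k<p}s^d_{\iota_x(k)}(x)\,\frac{s^s_{\iota_x(p)}(x)}{r_{\iota_x(p)}}(c_{\iota_x(p)}-x)$, derives the \emph{one-sided} Lipschitz estimate $(u(x)-u(y))^\top(x-y)\leq L\|x-y\|^2$, and invokes a uniqueness theorem for one-sided Lipschitz vector fields. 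Without either that computation or an actual proof of local Lipschitzness, your uniqueness claim is unsupported.

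Second, you explicitly leave open the subcase $x\in\partial\mathcal{O}_i\cap\mathcal{BL}$ with $i\notin\mathcal{I}(x)$, which you yourself identify as "the main obstacle." A proof that ends with "resolving this cleanly should combine Assumption \ref{as:2} with the geometric meaning of the tangent points" is a conjecture, not an argument. The paper closes this case (tersely) by noting that since $\mathcal{O}_i$ is never selected in the successive projections, each intermediate control satisfies $u_p(x)\notin\mathcal{C}^{\leq}_{\mathbb{R}^n}(x,c_i-x,\frac{\pi}{2})$, which at $x\in\partial\mathcal{O}_i$ (where $\theta_i(x)=\pi/2$) is exactly the inward half-space of the obstacle; hence $u(x)\in\mathcal{C}^{\geq}_{\mathbb{R}^n}(x,c_i-x,\frac{\pi}{2})=\mathcal{T}_{\mathcal{F}}(x)$. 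Two smaller points: your claim that $P(x)\in\mathcal{W}$ "because it lies in the hat" needs the magnitude bound $\|u_{h(x)}(x)\|\leq\gamma\|x-x_d\|$ (the paper avoids this by arguing directly that $u(x)$ is parallel to $\mathcal{C}^{=}_{\mathcal{F}}(x,c_i-x,\theta_i)\subseteq\mathcal{C}^{\leq}_{\mathbb{R}^n}(x,-x,\frac{\pi}{2})$); and your termination argument via $\hat{c}_i(x)=x$ runs into the technicality that $\mathcal{LO}(x,x)\ni i$ when $x\in\partial\mathcal{O}_i$, so it needs the convention that the currently selected obstacle is excluded from the next selection.
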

\begin{proof}
See Appendix \ref{appendix:lemma 3}.
\end{proof}
\begin{figure}[!h]
\centering
\includegraphics[scale=0.4]{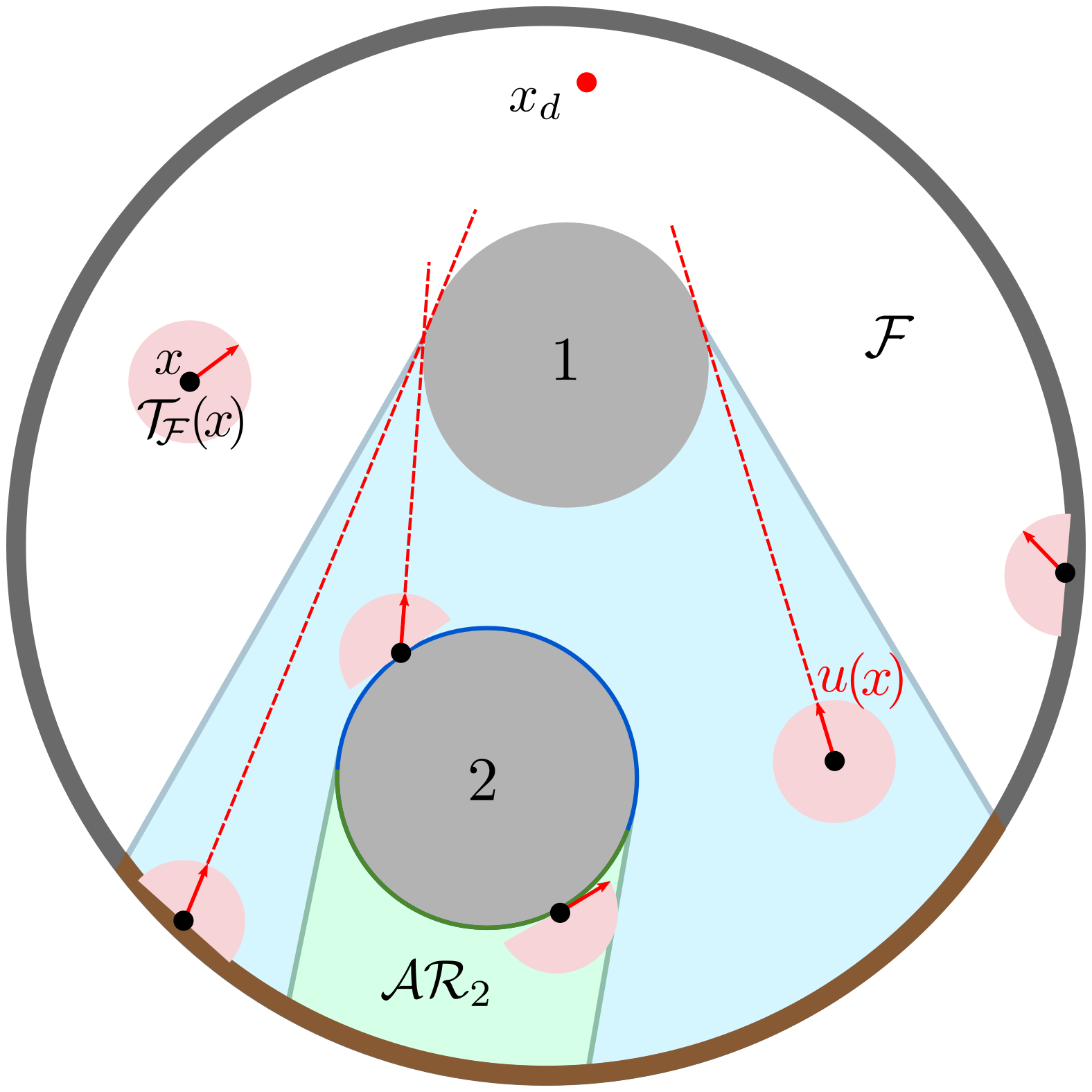}
\caption{Bouligand's tangent cones.}
\label{fig:Tangent_cones}
\end{figure}
Let us look for the equilibria of the closed-loop system \eqref{12}-\eqref{36} by setting $u(x)=0$ in \eqref{36}. Then, from the first equation of \eqref{36}, the equilibrium point is $x_d$. From \eqref{eq:recursive-control}, one can rewrite the control at step $p\in\{1,\dots,h(x)\}$ and position $x\in\mathcal{AR}_{\iota_x(p)}$, as $u_p=\sin(\beta_i)\sin^{-1}(\theta_i)\|u_{p-1}\|\bar\xi_i$\footnote{For simplicity, the arguments $(x,u)$ for the angles $\beta_i$ and $\theta_i$ are omitted whenever
clear from context.} where $\iota_x(p)=i$. In the case where $u_{p-1}\neq0$, and since $\bar\xi_i\in\mathbb{S}^{n-1}$, $u_p=0$ if and only if $\beta_i=0$. The set of positions leading to $\beta_i=0$ is the segment (or segments) of the line, tangent to the ancestor obstacle $k=\iota_x(p-1)$, crossing the center of obstacle $i$,  within the active region of obstacle $i$.  
When $\beta_i=0$, the control input, at step $p-1$, is aligned with $(c_i-x)$, which is also tangent to the ancestor of obstacle $i$. 
We define the set of undesired equilibria (shown in Fig. \ref{fig:fig3}) generated by obstacle $i$ as follows: 
\begin{align}\label{m22}
     \mathcal{L}_i:=\left\{q\in\mathcal{AR}_i|\;\; \beta_i(u_{p-1}(q),q)=0,p=\iota^{-1}_q(i)\right\}.
\end{align}
The central half-line generated by obstacle $i$ in the workspace, starts from the center $c_i$ and extends the set of undesired equilibria $\mathcal{L}_i$ (as shown in Fig. \ref{fig:obswithnoequilibriumpt}), and is defined as $\mathcal{L}_i^e:=\mathcal{L}_h(c_i,y-c_i)$, where $y\in\mathcal{L}_i$. Some obstacles may not generate undesired equilibria, in specific configurations, as will be shown later, and in this case $\mathcal{L}_i$ and $\mathcal{L}_i^e$ are empty sets. Therefore, $u(x)=0$ if $x\in\mathcal{L}_i$ where $i\in\mathcal{Z}$ and $\mathcal{Z}$ is the set of obstacles generating undesired equilibria. Finally, one can conclude that the set of equilibrium points of the system \eqref{12}-\eqref{36} is given by
$
     \zeta:=\{x_d\}\cup(\cup_{i\in\mathcal{Z}}\mathcal{L}_i)
$. The previous developments can be summarized in the following lemma:
\begin{lemma}\label{lem4}
All trajectories of the closed-loop system \eqref{12}-\eqref{36} converge to the set $ \zeta=\{x_d\}\cup(\cup_{i\in\mathcal{Z}}\mathcal{L}_i)$.\hfill\(\Box\)
\end{lemma}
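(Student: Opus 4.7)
The plan is to apply LaSalle's invariance principle with the natural quadratic Lyapunov candidate $V(x) = \tfrac{1}{2}\|x-x_d\|^2$, whose derivative along the closed-loop dynamics \eqref{eq:closed-loop-system} reads $\dot V(x) = (x-x_d)^\top u(x) = -\gamma^{-1}\, u_d(x)^\top u(x)$. Since Lemma \ref{lem3} already secures forward invariance of the compact set $\mathcal{F}$ and well-posedness of solutions, the whole argument reduces to: (i) showing $\dot V \leq 0$ on $\mathcal{F}$, and (ii) identifying the largest invariant set inside $\mathcal{E} := \{x \in \mathcal{F}: \dot V(x) = 0\}$ as $\zeta$.

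For (i), the easy case is $x \in \mathcal{VI}$, where $u = u_d$ and $\dot V = -\gamma\|x-x_d\|^2 \leq 0$ with equality only at $x = x_d$. The hard case is $x \in \mathcal{BL}$, where $u(x) = u_{h(x)}(x)$ is obtained via the recursion \eqref{eq:recursive-control}. From Lemma \ref{lem1}, each step yields $u_p = \kappa_p \bar\xi_{i_p}$ with $\kappa_p = (\sin\beta_{i_p}/\sin\theta_{i_p})\|u_{p-1}\| \geq 0$ and $\bar\xi_{i_p}$ a unit vector on the cone enclosing $\mathcal{O}_{\iota_x(p)}$, which gives the stepwise estimate $u_{p-1}^\top u_p = \kappa_p \|u_{p-1}\|\cos(\theta_{i_p}-\beta_{i_p}) \geq 0$ because $\beta_{i_p} \leq \theta_{i_p} \leq \pi/2$. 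Consecutive positivity does not immediately chain into $u_d^\top u_{h(x)} \geq 0$, and this is the main obstacle: I would carry out an induction on $p$ that exploits the selection rule starting from the blocking obstacle closest to $x_d$ together with the fact that $x \in \mathcal{D}(x_d, c_{i_1})$ pins $u_d$, $c_{i_1}-x$, and $\bar\xi_{i_1}$ in a common half-space; each ancestor projection should then inherit its $\bar\xi_{i_p}$ inside $\{v : u_d^\top v \geq 0\}$ because the axis $c_{i_p}-x$ is precisely the one forcing $u_{p-1}$ to rotate around $\mathcal{O}_{i_p}$. Formalizing this inductive invariant on the forward half-space is the delicate geometric step.

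Once $\dot V \leq 0$ is in hand, LaSalle's invariance principle guarantees convergence of every trajectory to the largest invariant set $\mathcal{M} \subseteq \mathcal{E}$. Step (ii) then amounts to showing $\mathcal{M} \subseteq \zeta$. On $\mathcal{VI} \cap \mathcal{E}$ one has $u = u_d = 0$, forcing $x = x_d$. On $\mathcal{BL} \cap \mathcal{E}$, the requirement that $V$ stays constant along the trajectory, combined with the structure $u_p = \kappa_p \bar\xi_{i_p}$ derived above, forces $\kappa_p = 0$ at some step; since $\kappa_p = 0 \Leftrightarrow \sin\beta_{i_p} = 0 \Leftrightarrow \beta_{i_p}(u_{p-1}(x),x) = 0$, this places $x$ in $\mathcal{L}_{\iota_x(p)}$ for some $p \in \{1,\dots,h(x)\}$, i.e.\ $x \in \bigcup_{i \in \mathcal{Z}} \mathcal{L}_i$. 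Combined with the trivial inclusion $\zeta \subseteq \mathcal{M}$ (every element of $\zeta$ is an equilibrium of \eqref{eq:closed-loop-system}), this yields $\mathcal{M} = \zeta$ and completes the convergence claim.
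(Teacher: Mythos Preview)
Your route via LaSalle is \emph{not} what the paper does. The paper's argument for Lemma~\ref{lem4} is entirely contained in the paragraph preceding the lemma: it simply computes the zero set of the vector field $u(x)$ by tracking the recursion $u_p=\sin(\beta_i)\sin^{-1}(\theta_i)\|u_{p-1}\|\bar\xi_i$, observes that $u_p=0$ iff $\beta_i=0$ (given $u_{p-1}\neq0$), and thereby identifies the equilibrium set as $\zeta$. No Lyapunov function, no decrease condition, no invariance principle---the lemma is stated as a summary of that equilibrium computation. In particular, the paper never claims or proves $\dot V\le 0$ for the nested-projection controller~\eqref{36}; the Lyapunov calculation you are attempting appears in the paper only for the \emph{single-projection} sensor-based law (proof of Theorem~\ref{the3}, item iv), where $\dot V_1=-\gamma\|x-x_d\|^2\sin(\tilde\beta)\cos(\tilde\theta-\tilde\beta)/\sin(\tilde\theta)\le 0$ is immediate from $0\le\tilde\beta\le\tilde\theta\le\pi/2$.

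Your proposal has two genuine gaps. The first you already flag: the inequality $u_d(x)^\top u_{h(x)}(x)\ge 0$ does not follow from the stepwise bound $u_{p-1}^\top u_p\ge 0$, and your sketched induction on a ``forward half-space'' is not a proof. Each projection rotates the control by $\theta_{i_p}-\beta_{i_p}\in(0,\pi/2]$, and nothing in the construction prevents the cumulative rotation over $h(x)$ steps from exceeding $\pi/2$; so $\dot V\le 0$ may simply be false for~\eqref{36}. The second gap is in step~(ii): even granting $\dot V\le 0$, the condition $\dot V(x)=0$ only says $u(x)\perp(x_d-x)$, which does \emph{not} force any $\kappa_p$ to vanish---$u_{h(x)}(x)$ can be nonzero and orthogonal to $u_d(x)$. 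To conclude $\mathcal{M}\subseteq\zeta$ you would still need to rule out nontrivial invariant motion on a sphere $\|x-x_d\|=\text{const}$, which your argument does not do.
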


The indices of obstacles crossed by the central half-line $\mathcal{L}^e_i$ of obstacle $\mathcal{O}_i$ are grouped in the set defined as $\mathcal{M}_i:=\left\{j\in\mathbb{I}\setminus\{i\}\|\mathcal{L}^e_i\cap\mathcal{O}_j\neq\varnothing \right\}$ and $N_i=\textbf{card}(\mathcal{M}_i)$. 
Define the map $\kappa_i:\mathcal{M}_i\rightarrow \{1,\dots,N_i\}$ that associates to each index $k\in\mathcal{M}_i$ the corresponding order of the obstacle $\mathcal{O}_k$ according to its proximity with respect to obstacle $\mathcal{O}_i$ among the obstacles of indices in the set $\mathcal{M}_i$, where the order goes from the closest to the farthest obstacle.  The set $\mathcal{M}_i^p:=\{\kappa_i^{-1}(1),\dots,\kappa_i^{-1}(p)\}$, $p\leq N_i$, contains the indices of the set $\mathcal{M}_i$ representing the $p$ first obstacles in increasing order of their distance from obstacle $i$, and $\mathcal{M}_i^0:=\varnothing$. In the following lemma, we show that under certain conditions, obstacles in the set $\mathcal{M}_i$ can be spared from generating undesired equilibria.
\begin{lemma}\label{lem5}
Let $i\in\mathbb{I}$ such that $\mathcal{M}_i\neq\varnothing$. Obstacles of indices in the set $\mathcal{M}_i^p$, where $p\leq N_i$, do not generate undesired equilibria if, for all $k\in\mathcal{M}_i^p$, the following conditions are satisfied:
\begin{enumerate}
    \item $c_k\in\mathring{\mathcal{H}}(x^*_{k,i},c_i)\cup(\cup_{j\in\mathcal{M}_i^{p-1}}\mathring{\mathcal{H}}(x^*_{k,i},c_j))$,
    \item $(\mathring{\mathcal{H}}(x^*_{k,i},c_i)\cup(\cup_{j\in\mathcal{M}_i^{p-1}}\mathring{\mathcal{H}}(x^*_{k,i},c_j)))\cap\mathcal{O}_l=\varnothing$ for all $l\in\mathbb{I}\setminus(\mathcal{M}_i^{p}\cup\{i\})$,
\end{enumerate}
where $x^*_{k,i}=\mathrm{arg}\max\limits_{q\in\mathcal{L}_i^e\cap\partial\mathcal{O}_k}||c_i-q||$. Moreover, if $p=N_i$, or ($p<N_i$ and the obstacle of index $k=\kappa_i^{-1}(p+1)$ does not satisfy conditions 1) and 2)), the set $\mathcal{M}_i$ is said to be of order $\Bar{N}_i=p$ which is the total number of obstacles, of indices in the set $\mathcal{M}_i$, that do not generate undesired equilibria and the set $\mathcal{M}_i^{\Bar{N}_i}$ groups them.
\end{lemma}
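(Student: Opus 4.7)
The plan is to prove this by induction on $p\in\{1,\dots,N_i\}$, with the inductive hypothesis being that all obstacles of indices in $\mathcal{M}_i^{p-1}$ generate no undesired equilibria (the case $\mathcal{M}_i^0=\varnothing$ serving as the trivial starting point). At each step I would show that the obstacle $k=\kappa_i^{-1}(p)$ cannot contribute any point to the set $\mathcal{L}_k$ defined in \eqref{m22}, i.e.\ there is no $q\in\mathcal{AR}_k$ at which $\beta_k(u_{p'-1}(q),q)=0$ with $p'=\iota_q^{-1}(k)$.

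The first key step is to characterize geometrically what a candidate equilibrium for $k$ would look like. The condition $\beta_k=0$ means that, at the step preceding the selection of obstacle $k$, the control vector $u_{p'-1}(q)$ points exactly toward $c_k$; equivalently, the line from $q$ in the direction $c_k-q$ is tangent to the ancestor obstacle selected at step $p'-1$. Tracing such a tangent line together with the definition of $\mathcal{L}_i^e$ and of $x^*_{k,i}=\arg\max_{q\in\mathcal{L}_i^e\cap\partial\mathcal{O}_k}\|c_i-q\|$, I would locate the candidate positions on the rays emanating from (or passing through) $x^*_{k,i}$ in directions that are tangent to an ancestor within $\mathcal{M}_i^{p-1}\cup\{i\}$.

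The second step is where conditions 1) and 2) come in. Unpacking the hat definition \eqref{17}, condition 1) states that $c_k$ lies strictly inside the enclosing cone of obstacle $i$ (or of some $j\in\mathcal{M}_i^{p-1}$) seen from $x^*_{k,i}$, and on the near side of $c_i$ (resp.\ $c_j$). This means obstacle $k$ is fully occluded, from the viewpoint of $x^*_{k,i}$, by $\mathcal{O}_i$ (or $\mathcal{O}_j$); hence any line from a nearby position $q$ tangent to $\mathcal{O}_i$ (or $\mathcal{O}_j$) and passing through $c_k$ must also cross $\mathcal{O}_i$ (or $\mathcal{O}_j$). Condition 2) ensures that no stray obstacle $\mathcal{O}_l$, $l\notin\mathcal{M}_i^p\cup\{i\}$, intrudes into this hat region. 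Combined, these facts allow me to invoke the selection rule of Algorithm \ref{alg1}, which at each step picks the blocking obstacle closest to the current tangent point: the tangent point $\hat c$ landing in $\mathring{\mathcal{H}}(x^*_{k,i},c_i)$ (or $\mathring{\mathcal{H}}(x^*_{k,i},c_j)$) forces $\mathcal{O}_i$ (or $\mathcal{O}_j$) to be picked ahead of $\mathcal{O}_k$ in $\mathcal{LO}(x,\hat c)$. Therefore obstacle $k$ is never the one at which $\beta_k=0$ would arise, so $\mathcal{L}_k=\varnothing$. By the inductive hypothesis, the ancestor $j\in\mathcal{M}_i^{p-1}$ carries no undesired equilibria of its own, so the argument terminates cleanly.

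The main obstacle I expect is the bookkeeping around the selection rule: one must check that the ordering \textbf{``closest to the current tangent point''} used in the iterative projections of \eqref{eq:recursive-control}--\eqref{36} consistently places $\mathcal{O}_i$ or $\mathcal{O}_j$ before $\mathcal{O}_k$ in the projection sequence at \emph{all} candidate equilibrium locations for $k$, not merely at $x^*_{k,i}$ itself; this is where the maximality in the definition of $x^*_{k,i}$ (the farthest exit point on $\mathcal{L}_i^e\cap\partial\mathcal{O}_k$) becomes crucial. Once this verification is in place, the final claim of the lemma about $\bar N_i$ follows directly: by construction $\bar N_i$ is the largest $p\leq N_i$ for which conditions 1)--2) persist, so the induction terminates exactly at $\mathcal{M}_i^{\bar N_i}$.
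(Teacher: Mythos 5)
Your proposal follows essentially the same route as the paper's proof: an induction over $\mathcal{M}_i^p$ ordered by proximity to $\mathcal{O}_i$, where the base case uses the fact that a control tangent to $\mathcal{O}_i$ from $\mathcal{AR}_i\setminus\mathring{\mathcal{H}}(x^*_{k,i},c_i)$ cannot point at a center $c_k$ occluded inside the hat (so $\beta_k\neq 0$ on $\mathcal{AR}_k$), the inductive step treats $\mathcal{O}_i$ together with the already-cleared obstacles of $\mathcal{M}_i^{p-1}$ as a single merged obstacle whose hat is the union appearing in conditions 1)--2), and condition 2) plays the same non-interference role. The paper states the key geometric fact directly as ``the tangent control never points inside the hat'' rather than routing it through the selection rule of Algorithm \ref{alg1}, but this is a difference of phrasing, not of substance, and your closing remark on $\bar N_i$ matches the paper's.
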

\begin{proof} See Appendix \ref{appendix:Lemma 5}.
\end{proof}
Lemma \ref{lem5} provides sufficient conditions for the first $p$ obstacles, with indices in the set $\mathcal{M}_i$ and ordered according to their proximity with respect to obstacle $\mathcal{O}_i$, to be free of undesired equilibria, and if $p=N_i$, or ($p<N_i$ and the $(p+1)-th$ obstacle does not satisfy these conditions), the set $\mathcal{M}_i^{\Bar{N}_i}$ groups all the obstacles, with indices in the set $\mathcal{M}_i$, which do not generate undesired equilibria where $\Bar{N}_i=p$ is the number of these obstacles and the order of the set $\mathcal{M}_i$. Condition 1)
    requires the center of each obstacle $k\in\mathcal{M}_i^p$ to be inside the union of the hats of the cones, of vertex $x^*_{k,i}$, enclosing obstacle $i$ and the obstacles of the list $\mathcal{M}_i^p$ closer to obstacle $i$ than obstacle $k$. Condition 2) requires that the union of hats considered in condition 1) does not intersect any obstacle other than those considered in condition 1)  ({\it i.e.,} obstacles $i$, $k$, and the obstacles closer to obstacle $i$ than obstacle $k$ among the list $\mathcal{M}_i^p$). Let us use obstacle $1$ in Fig. \ref{fig:obswithnoequilibriumpt} to verify (visually) the two conditions. The union of the hats of cones enclosing obstacles $3$ and $4$ (blue and green conic subsets in the left figure) includes the center of obstacle $1$ and does not intersect with any obstacle other than obstacles $3$, $1$, and $4$. Obstacle $4$ satisfies the conditions, but obstacle $2$ does not, as its center is outside the union of the hats enclosing obstacles $3$, $4$, and $1$ (red, blue, and green conic subsets in the right figure).
\begin{figure}[!h]
\centering
\includegraphics[scale=0.5]{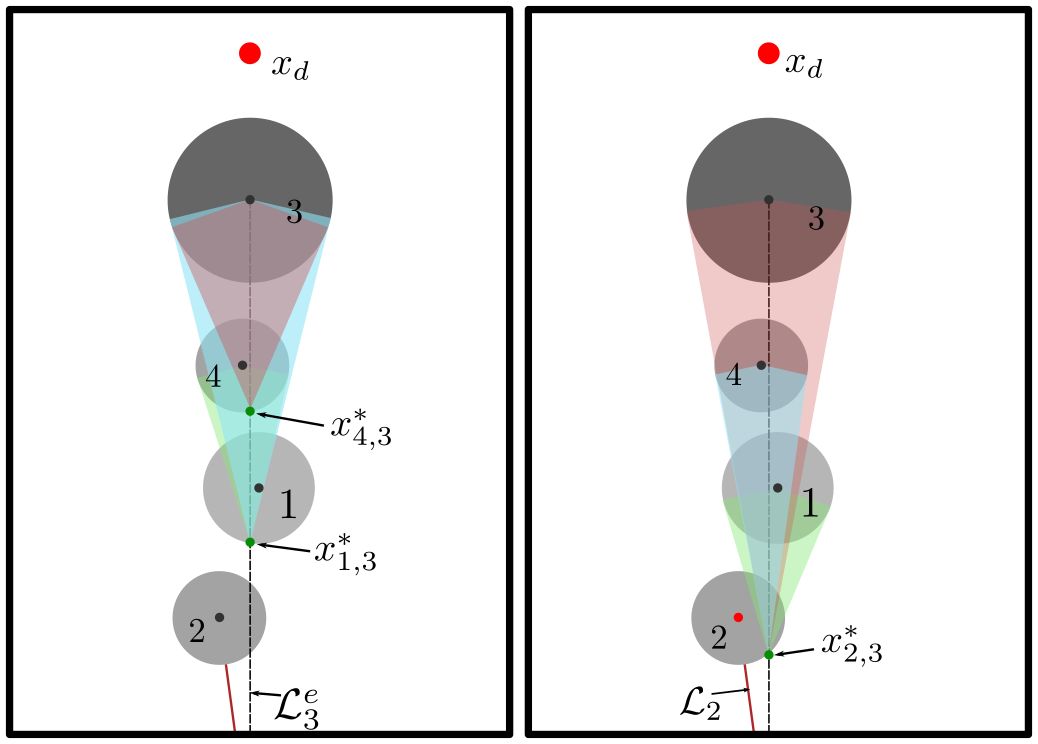}
\caption{Illustration of obstacles that generate, and those that do not generate, undesired equilibria.}
\label{fig:obswithnoequilibriumpt}
\end{figure}
Our main result in $n$-dimensional workspaces is stated in the following theorem.

\begin{theorem}\label{the1}
Consider the free space $\mathcal{F}\subset\mathbb{R}^n$ described in \eqref{8}, for $n\geq2$, and the closed-loop system \eqref{12}-\eqref{36}. Under Assumptions \ref{as:1} and \ref{as:2}, the following statements hold:
\begin{itemize}
\item [i)] The set $\mathcal{F}$ is forward invariant.
\item [ii)] All trajectories converge to the set $ \zeta=\{x_d\}\cup(\cup_{i\in\mathcal{Z}}\mathcal{L}_i)$.
\item [iii)] The set of equilibrium points $\cup_{i\in\mathcal{Z}}\mathcal{L}_i$ is unstable.
\item[iv)] The equilibrium point $x_d$ is locally exponentially stable on $\mathcal{F}$.
\item[v)] The generated trajectories are {\it quasi-optimal}.
\end{itemize}
\end{theorem}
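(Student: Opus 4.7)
My proof plan dispatches the five items in increasing order of difficulty, treating item (iii) as the real content. Items (i) and (ii) are immediate: (i) is exactly Lemma \ref{lem3} and (ii) is Lemma \ref{lem4}, so the proof only needs to cite them.

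For item (iv), I would use that $x_d \in \mathring{\mathcal{F}}$, so by Assumptions \ref{as:1}--\ref{as:2} there exists $\rho>0$ with $\mathcal{B}(x_d,\rho)\subset\mathcal{F}$ and $\mathcal{B}(x_d,\rho)\cap\mathcal{O}_i=\varnothing$ for all $i\in\mathbb{I}$. For any $x\in\mathcal{B}(x_d,\rho)$, the segment $\mathcal{L}_s(x,x_d)\subset\mathcal{B}(x_d,\rho)$ is obstacle-free, so $x\in\mathcal{VI}$ and therefore $u(x)=u_d(x)=-\gamma(x-x_d)$ by \eqref{36}. The closed-loop on this neighborhood is the linear system $\dot x=-\gamma(x-x_d)$, which is exponentially stable with rate $\gamma$. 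A standard comparison argument together with forward invariance of $\mathcal{F}$ (item (i)) upgrades this to local exponential stability on $\mathcal{F}$.

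For item (v), I would argue directly from Definition \ref{def2}. If $x_0\in\mathcal{VI}$, the argument of the previous paragraph applies globally along the trajectory: since $\mathcal{L}_s(x_0,x_d)$ is obstacle-free and the closed-loop under $u_d$ traces this very segment, every $x(t)$ remains in $\mathcal{VI}$, the path is a straight line segment, hence the shortest path. If $x_0\in\mathcal{BL}$, then at each $x$ on the trajectory $\dot x=u_{h(x)}(x)$, and by Lemma \ref{lem1} each recursion step $u_p(x)=\xi(u_{p-1}(x),x,\iota_x(p))$ minimizes $\angle(u_{p-1}(x),u_p(x))$ over $\mathcal{V}(c_{\iota_x(p)}-x,\theta_{\iota_x(p)}(x))$. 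An induction on $p$ then shows that the direction of $u_p(x)$ coincides with the direction of $P_p(x)-x$ produced by the recursion \eqref{min_process}, so $\dot x$ points toward $P(x)=P_{h(x)}(x)$, matching Definition \ref{def2}.

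Item (iii) is the main obstacle. Fix $i\in\mathcal{Z}$ and $x^*\in\mathcal{L}_i$; by \eqref{m22} there exists $p=\iota^{-1}_{x^*}(i)$ with $\beta_i(u_{p-1}(x^*),x^*)=0$, i.e.\ $u_{p-1}(x^*)$ is aligned with $c_i-x^*$. My plan is a Chetaev-type argument. I would introduce local coordinates in which the central half-line $\mathcal{L}_i^e$ is an axis, and decompose any nearby $x=x^*+\delta x$ into components parallel and transverse to $c_i-x^*$. Using the explicit formula \eqref{alg} for $\xi$, I would expand $u_p(x)$ to first order in $\delta x$: the transverse perturbation makes $\beta_i$ linear in $\|\delta x_\perp\|$, so $\sin(\beta_i)/\cos(\theta_i-\beta_i)$ is linear in the perturbation, and the resulting $\bar\xi_i$ has a nonzero transverse component pointing \emph{away} from $\mathcal{L}_i$ (because the projected direction tilts around the obstacle on the side where the transverse perturbation lies). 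Then I would define the open cone $\mathcal{U}=\{x:V(x)>0,\ \|x-x^*\|<\varepsilon\}$ where $V(x)$ measures signed transverse distance to $\mathcal{L}_i$, verify $\mathcal{U}\cap\partial\mathcal{F}\subset\partial\mathcal{L}_i$, and show $\dot V(x)=\nabla V(x)^\top u(x)\geq \alpha V(x)$ on $\mathcal{U}$ for some $\alpha>0$, giving Chetaev's instability criterion. The delicate point is that $u$ is not smooth across the stratification induced by switching between visible/blind regions and across transitions in $\mathcal{I}(x)$, so I would have to verify that a sufficiently small neighborhood of $x^*$ lies entirely in a single active-region stratum where $u$ is smooth. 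Because $x^*\in\mathring{\mathcal{AR}}_i$ (any point of $\mathcal{L}_i$ is interior to the active region of $i$ by construction), such a neighborhood exists, and the expansion above is valid. Combining instability of each $\mathcal{L}_i$ with item (iv) yields almost global asymptotic stability of $x_d$, completing the theorem.
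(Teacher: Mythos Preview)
Your treatment of items (i), (ii), (iv), and (v) is essentially identical to the paper's: it too cites Lemma~\ref{lem3} and Lemma~\ref{lem4} for (i)--(ii), uses a ball $\mathcal{B}(x_d,r_d)\subset\mathcal{VI}$ on which the closed-loop is $\dot x=-\gamma(x-x_d)$ for (iv), and for (v) simply observes that $\dot x(t)=u_{h(x(t))}(x(t))=P(x(t))-x(t)$, which is precisely the pointing condition in Definition~\ref{def2}.

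For item (iii), your Chetaev-type idea is the right one and matches the paper's approach in spirit, but the paper executes it more simply and your version has a few loose ends. The paper takes $V(x)=\tfrac12 d^2(\mathcal{L}_i,x)=\tfrac12(x-c_i)^\top\pi^\bot(\bar v_i)(x-c_i)$ (an \emph{unsigned} transverse distance squared, which is necessary since for $n\geq3$ there is no ``signed transverse distance'' to a line), restricts attention to a compact tube $U\subset\mathcal{AR}_i^h$ where obstacle $i$ is the \emph{last} projection (so that $u(x)=u_{h(x)}(x)$ lies on the enclosing cone of $\mathcal{O}_i$), and then shows $\dot V>0$ on $U$ by a direct angular computation: since $u(x)$ is tangent to $\mathcal{O}_i$ one has $\tfrac{\pi}{2}<\angle(\bar v_i,u(x))<\pi$, which forces $(x-c_i)^\top\pi^\bot(\bar v_i)u(x)>0$. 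No linearization is needed, and only $\dot V>0$ on a compact $U$ (not $\dot V\geq\alpha V$) is required to conclude escape from the tube. Your linearization route could in principle be made to work, but you would still have to restrict to $\mathcal{AR}_i^h$ (merely $x^*\in\mathring{\mathcal{AR}}_i$ does not guarantee that $i$ remains the terminal index of $\mathcal{I}(x)$ nearby), and you would need to track how the whole chain $u_0,\dots,u_{h(x)}$ varies, not just $u_p$.

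There is, however, a genuine error in your final sentence: you assert that instability of each $\mathcal{L}_i$ together with local exponential stability of $x_d$ yields almost global asymptotic stability. This is not part of Theorem~\ref{the1}, and it is in fact \emph{false} in general: the paper later exhibits (for $n=2$) invariant ``cells'' and ``nests'' bounded by pieces of $\cup_i\mathcal{L}_i$ that trap trajectories (Lemma~\ref{lem6}, Lemma~\ref{lem7}, Fig.~\ref{fig:Quasi_nn_navig}), so the region of attraction of the undesired equilibria can have positive measure. Almost global asymptotic stability is only obtained in Theorem~\ref{the2} under the additional Assumption~\ref{as:3}. You should simply delete that last sentence; the theorem asks only for instability of $\cup_{i\in\mathcal{Z}}\mathcal{L}_i$ and local exponential stability of $x_d$, nothing more.
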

\begin{proof}
See Appendix \ref{appendix:the1}.
\end{proof}
Theorem \ref{the1} shows that the desired equilibrium point $x_d$ is locally exponentially stable and that all trajectories converging to it are safe and {\it quasi-optimal}, in the sense of Definition \ref{def2}. The region of attraction of the desired equilibrium is characterized in the next section for two-dimensional workspaces. Unfortunately, a complete characterization of the region of attraction has not been proved for higher dimensions $n\geq3$. Nevertheless, our insights and extensive simulations in three-dimensional environments led us to conjecture that the equilibrium point $x_d$ is almost globally asymptotically stable, at least for $n=3$.

\subsection*{Invariant sets in two-dimensional spaces ($n=2$)}
Let $\mathcal{R}_i:=\left\{k\in\mathcal{Z}|\mathcal{L}^e_k\cap\mathcal{O}_i\neq\varnothing,\,\mathcal{L}_k\cap\mathcal{AR}_i\neq\varnothing\right\}$ be the set of indices of central half-lines crossing obstacle $i$ and their set of undesired equilibria intersecting with its active region $\mathcal{AR}_i$, and note that $\mathcal{R}_i\neq\varnothing$ for all $i\in\mathbb{I}$.
Obstacles crossed by more than one central half-line are represented by the set of indices $\mathbb{L}:=\left\{k\in\mathbb{I}|\textbf{card}(\mathcal{R}_k)\geq2\right\}$. For every $i\in\mathbb{L}$, we select the out-most line segments $\mathcal{L}_k$, $k\in\mathcal{R}_i$, and we determine their intersection with the boundary of obstacle $i$, the left and right intersections being denoted by $y^l_{i,0}$ and $y^r_{i,0}$ respectively. We go through the two out-most line segments separately until they intersect with one of the line segments having an index in the set $\mathcal{R}_i$, or with the boundary of the workspace. We denote the left and right intersections by $y^l_1$ and $y^r_1$, respectively. If the workspace has yet to be reached and ($y^l_1\neq y^r_1$), we continue in the same way with the new line segments up to the intersection with the workspace boundary or up to the intersection between the left and right line segments ({\it i.e.}, $y^l_k=y^r_j$, $k,p>0$). We group the intersection points obtained on the left and right into two lists, $Y_i^l=\{y^l_{i,0},y^l_{i,1},\dots\}$ and $Y_i^r=\{y^r_{i,0},y^r_{i,1},\dots\}$, respectively (see Fig. \ref{fig:Inv_nests}). For every two successive points $\{y^l_{i,p},y^l_{i,p+1}\}$ of $Y_i^l$ (resp. $\{y^r_{i,p},y^r_{i,p+1}\}$ of $Y_i^r$), we generate the right (resp. left) half-plane bounded by the line passing through these two points. The intersection of the union of the half-planes of each list forms an area that, when restricted to the active region, gives a characteristic region defined as $\chi_i:=\left(\cup_{p=0}^{\textbf{card}(Y^r_i)-2}\mathcal{P}_{\scriptscriptstyle \geq}\left(y^r_{i,p},R(y^r_{i,p}-y^r_{i,p+1})\right)\right)\cap\left(\cup_{p=0}^{\textbf{card}(Y^l_i)-2}\mathcal{P}_{\scriptscriptstyle \leq}\left(y^l_{i,p},R(y^l_{i,p}-y^l_{p+1})\right)\right)\cap\mathcal{AR}_i$ where $R=\big(\begin{smallmatrix}
  0 & 1\\
  -1 & 0
\end{smallmatrix}\big)$. Finally, we create a cell, deleting the characteristic regions of other obstacles inside the characteristic region of obstacle $i$, and define it as follows $\textbf{Cell}_i:=\chi_i\setminus\cup_{k\in\mathbb{L}_i}\mathring{\chi}_k$ where $\mathbb{L}_i:=\left\{k\in\mathbb{L}|\chi_i\cap\chi_k\neq\varnothing;\forall x\in\chi_i\cap\chi_k,\,\iota_x^{-1}(k)>\iota_x^{-1}(i)\right\}$. Note that the construction of these cells requires their boundaries to be formed by undesired equilibria and the boundary of the free space, which endows them with the invariance property stated in the following lemma.
\begin{lemma}\label{lem6}
    Let $i\in\mathbb{L}$. The cell $\textbf{Cell}_i$ is forward invariant for the closed-loop system \eqref{12}-\eqref{36}.
\end{lemma}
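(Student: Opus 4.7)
The plan is to invoke Nagumo's theorem directly: since $\textbf{Cell}_i$ is closed, forward invariance follows once I establish that $u(x)\in\mathcal{T}_{\textbf{Cell}_i}(x)$ at every $x\in\partial\textbf{Cell}_i$. The first step is to decompose $\partial\textbf{Cell}_i$ using the construction. Because $\textbf{Cell}_i=\chi_i\setminus\bigcup_{k\in\mathbb{L}_i}\mathring{\chi}_k$, one has $\partial\textbf{Cell}_i\subset\partial\chi_i\cup\bigcup_{k\in\mathbb{L}_i}\partial\chi_k$, and each $\chi_j$ is an intersection of half-planes whose bounding lines pass through consecutive points of $Y_j^l$ and $Y_j^r$ (which are themselves extensions of out-most undesired-equilibrium segments $\mathcal{L}_k$, $k\in\mathcal{R}_j$) restricted to the active region $\mathcal{AR}_j$. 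I would therefore argue that every boundary point of $\textbf{Cell}_i$ lies either on some undesired-equilibrium segment $\mathcal{L}_k$ or on a portion of $\partial\mathcal{F}$ (the arc of $\partial\mathcal{O}_i$ where the seeds $y^{l,r}_{i,0}$ live).

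Second, I would check Nagumo's condition case by case. For $x$ on an undesired-equilibrium segment $\mathcal{L}_k$, Lemma \ref{lem4} together with the recursive formula \eqref{eq:recursive-control} gives $u_p(x)=0$ at the projection step $p$ where obstacle $k$ is processed; because $\xi(0,x,\cdot)=0$ is immediate from the form of \eqref{alg}, every subsequent projection vanishes too, so $u(x)=u_{h(x)}(x)=0$. The zero vector sits in every tangent cone, making Nagumo's condition trivial here. For $x\in\partial\mathcal{F}\cap\partial\textbf{Cell}_i$, Lemma \ref{lem3} already ensures $u(x)\in\mathcal{T}_\mathcal{F}(x)$, and since the seeds $y^{l,r}_{i,0}$ are intersections of out-most $\mathcal{L}_k$ with $\partial\mathcal{O}_i$, the tangent cone of $\textbf{Cell}_i$ at such $x$ coincides on the relevant side with that of $\mathcal{F}$, so the same vector is admissible.

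The main obstacle will be the geometric bookkeeping that justifies the dichotomy above. Specifically, one must confirm (i) that each line-segment extension joining consecutive $y^{l,r}_{i,p}$ remains inside an undesired-equilibrium set of some obstacle (possibly different from the one that produced the seed), so that the vanishing-control argument applies along the entire bounding segment, and (ii) that subtracting $\mathring{\chi}_k$ for $k\in\mathbb{L}_i$ does not create new boundary pieces along which $u(x)$ could point outward. The ordering condition $\iota_x^{-1}(k)>\iota_x^{-1}(i)$ built into the definition of $\mathbb{L}_i$ is precisely what is needed for part (ii): on the shared interface $\chi_i\cap\partial\chi_k$ the control is governed by a projection whose earlier step is an undesired equilibrium for $k$ or one of its ancestors, forcing $u(x)=0$ there as well. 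Once this enumeration of $\partial\textbf{Cell}_i$ is complete, Nagumo's theorem immediately delivers the forward invariance of $\textbf{Cell}_i$.
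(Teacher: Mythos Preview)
Your approach is essentially the same as the paper's: both argue via Nagumo that on $\partial\textbf{Cell}_i\cap(\cup_{k\in\mathcal{Z}}\mathcal{L}_k)$ one has $u(x)=0$, while on $\partial\textbf{Cell}_i\cap\partial\mathcal{F}$ Lemma~\ref{lem3} gives the required inclusion. The paper's proof is in fact terser than yours---it simply asserts the boundary dichotomy you labor to justify in points (i) and (ii)---so your extra bookkeeping is not a different route but rather a more careful version of the same one; note only that $\partial\textbf{Cell}_i\cap\partial\mathcal{F}$ can also include pieces of $\partial\mathcal{W}$ (when the construction terminates at the workspace boundary), not just arcs of $\partial\mathcal{O}_i$.
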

\begin{proof}
See Appendix \ref{appendix:lemma 6}.
\end{proof}
\begin{lemma} \label{lem7}
    The set $\cup_k\textbf{Nest}_k$ is the region of attraction of the undesired equilibria $\cup_{i\in\mathcal{Z}}\mathcal{L}_i$.
\end{lemma}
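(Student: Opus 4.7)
The plan is to prove Lemma~\ref{lem7} as a two-sided statement: (i) every trajectory starting in $\cup_k \textbf{Nest}_k$ converges to $\cup_{i\in\mathcal{Z}}\mathcal{L}_i$, and (ii) every trajectory starting outside $\cup_k \textbf{Nest}_k$ converges instead to $x_d$, hence does not approach $\cup_{i\in\mathcal{Z}}\mathcal{L}_i$. Combining (i) and (ii) with the fact that, by Theorem~\ref{the1}, every trajectory converges to $\zeta=\{x_d\}\cup(\cup_{i\in\mathcal{Z}}\mathcal{L}_i)$, gives exactly the characterization of the region of attraction claimed.

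For inclusion (i), I would first argue that $\cup_k \textbf{Nest}_k$ is forward invariant by extending Lemma~\ref{lem6}. Since each nest is built from the invariant cells $\textbf{Cell}_i$, and the boundary of such a cell consists only of (a) segments of undesired equilibria $\mathcal{L}_k$ on which $u=0$, and (b) portions of $\partial\mathcal{F}$ rendered forward invariant by Lemma~\ref{lem3}, no trajectory starting inside can cross out. Then, by Lemma~\ref{lem4}, any such trajectory converges to an equilibrium in $\zeta$. Since $x_d\in\mathcal{VI}$ lies outside every active region $\mathcal{AR}_i$, and $\textbf{Nest}_k\subseteq\mathcal{AR}_k$ by construction, the target $x_d$ is not an accumulation point for trajectories trapped in $\cup_k \textbf{Nest}_k$; the $\omega$-limit must therefore lie in $\cup_{i\in\mathcal{Z}}\mathcal{L}_i$.

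For inclusion (ii), the key step is to show that the relative complement $(\cup_k \textbf{Nest}_k)^c_{\mathcal{F}}$ is also forward invariant. The boundaries separating a nest from its exterior are precisely segments of undesired equilibria (arranged by the left/right construction from $Y_i^l,Y_i^r$) together with portions of $\partial\mathcal{F}$; on the equilibrium segments we have $u(x)=0$, so no integral curve can cross transversally from outside into the nest without halting on the boundary itself. Forward invariance of the complement then follows. Applying Lemma~\ref{lem4} once more, any trajectory starting in the complement converges to $\zeta\cap(\cup_k \textbf{Nest}_k)^c_{\mathcal{F}}$. Since every undesired equilibrium line $\mathcal{L}_k$ is contained in its surrounding nest, this intersection reduces to $\{x_d\}$, proving that the trajectory converges to $x_d$ and does not approach $\cup_{i\in\mathcal{Z}}\mathcal{L}_i$.

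The main obstacle I anticipate is rigorously justifying the no-cross-in property on the boundaries of the nests in step (ii): one must rule out the possibility that the vector field $u(x)$, defined by the recursive projection in \eqref{eq:recursive-control} and \eqref{36}, could point strictly into a nest from a nearby exterior point. Because these boundary segments are themselves comprised of equilibria where $\beta_i(u_{p-1}(x),x)=0$, a local analysis using the explicit form of $\xi(\cdot,\cdot,\cdot)$ in Lemma~\ref{lem1} together with the geometric characterization of $\chi_i$ through the half-planes $\mathcal{P}_\geq$ and $\mathcal{P}_\leq$ should show that $u(x)$ is tangent to the boundary in a neighborhood, ruling out transversal entry. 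This part of the argument is essentially a planar case analysis: distinguishing whether the boundary segment originates from a single $\mathcal{L}_k$ or from the intersection of two out-most segments, and in each case verifying alignment of $u$ with the boundary using the minimality characterization of $\bar\xi_i$.
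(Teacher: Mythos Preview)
Your plan has a genuine gap in step (ii). Forward invariance of the complement $(\cup_k \textbf{Nest}_k)^c_{\mathcal{F}}$, even if established, does \emph{not} imply that a trajectory starting there converges to $\zeta\cap(\cup_k \textbf{Nest}_k)^c_{\mathcal{F}}$. Lemma~\ref{lem4} only places the $\omega$-limit in $\zeta$, and forward invariance only places it in the \emph{closure} of the complement; but the undesired equilibria $\cup_{i\in\mathcal{Z}}\mathcal{L}_i$ lie precisely on the boundary of the nests, hence in that closure. So nothing in your argument rules out a trajectory from outside a nest asymptotically approaching a segment of $\mathcal{L}_i$ on the nest boundary. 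The observation that $u(x)=0$ on those boundary segments guarantees only that they cannot be crossed in finite time; it says nothing about asymptotic approach. The difficulty you flag in your last paragraph (ruling out transversal entry) is therefore not the real obstacle.

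What the paper actually does, and what your proposal is missing, is a \emph{repellency} argument: for each relevant segment of undesired equilibria it builds a scalar function $V$ vanishing on the segment and positive on a surrounding tube, and computes $\dot V>0$ on the tube minus the segment using the explicit form of $\bar\xi_i$ and the geometry of the tangent directions. This is done in three separate cases depending on whether the segment lies in $\mathcal{AR}_i^h$, in $\mathcal{AR}_k^h$ with $k\in\mathcal{M}_i^{\bar N_i}$, or in $\mathcal{AR}_k^h$ with $\mathcal{L}_i^e\cap\mathcal{O}_k=\varnothing$, and then repeated for the portions of $\mathcal{L}_i$ on the boundary of a regular nest. This strict increase of $V$ forces trajectories near the segment (on the exterior side) to leave through the lateral surface of the tube rather than accumulate on the segment, which is exactly what is needed to conclude convergence to $x_d$. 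Your inclusion (i) is essentially fine, but to repair (ii) you must replace the invariance-of-complement step by this local repellency analysis.
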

\begin{proof}
See Appendix \ref{appendix:lemma 7}.
\end{proof}
\begin{figure}[!h]
\centering
\includegraphics[scale=0.27]{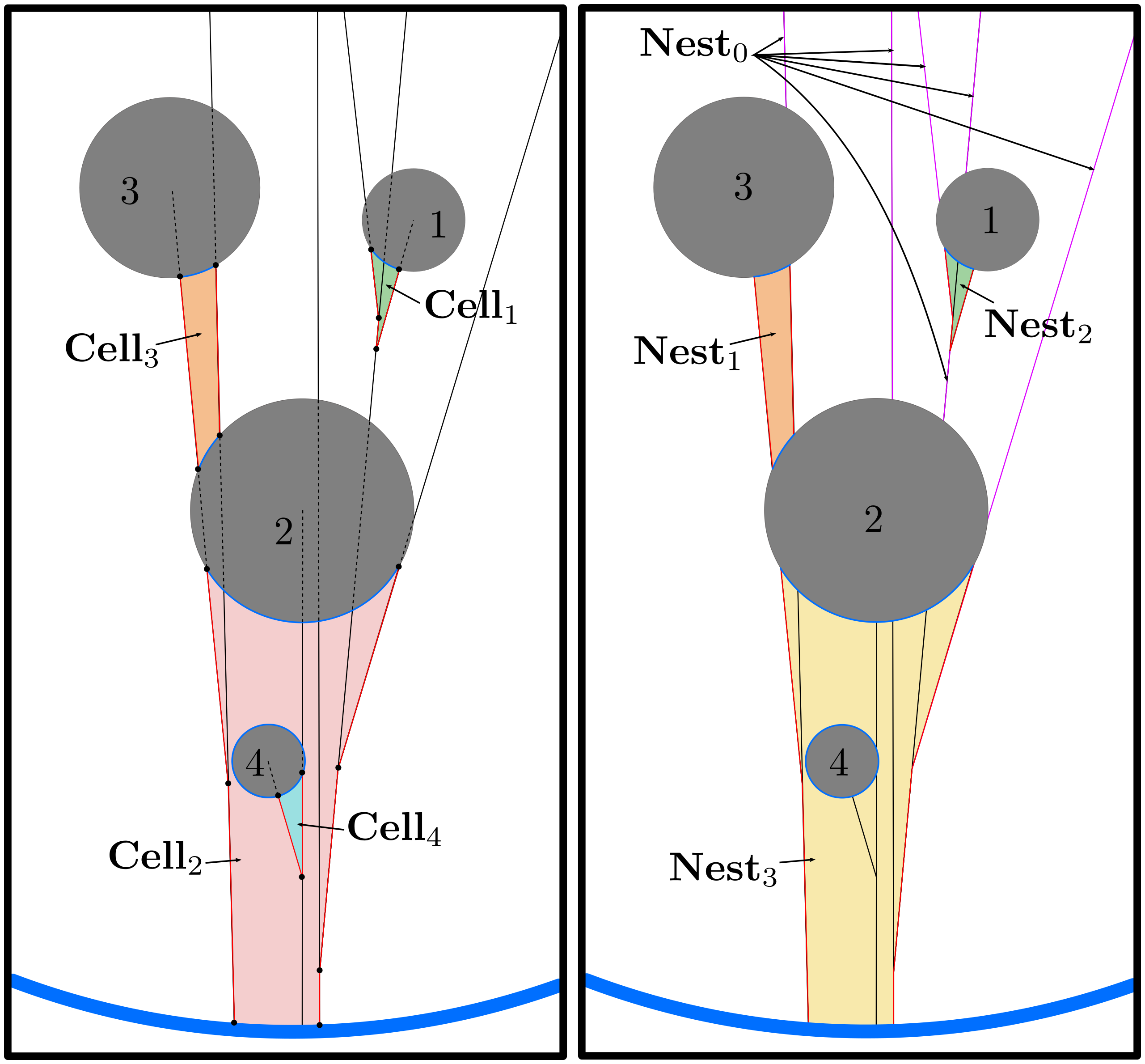}
\caption{Invariant cells and nests}
\label{fig:Inv_nests}
\end{figure}
Two cells are adjacent if they share undesired equilibria on their boundary, which is true only if $\partial\textbf{Cell}_i\cap\partial\textbf{Cell}_k\cap(\cup_{i\in\mathcal{Z}}\mathcal{L}_i)\neq\varnothing$. We construct nests by the union of adjacent cells, where each cell has at least one adjacent cell among the cells in that nest. Cells without adjacent cells form a nest with a single element. We also construct a special nest whose cells are segments of undesirable equilibria that belong to no other regular cell.
Since nests are the union of invariant cells or of undesired equilibria (the special nest), nests are invariant and are denoted by $\textbf{Nest}_k$ where 
$\textbf{Nest}_0:=\cup_{i\in\mathcal{Z}}\mathcal{L}_i\setminus\cup_{k\in\mathbb{L}}\textbf{Cell}_k$ is the special nest (see Fig. \ref{fig:Inv_nests}). Unfortunately, a nest can form a barrier around the workspace, reducing the navigable area of the free space. Such a nest can be generated by creating a circular band of adjacent cells, as shown in Fig. \ref{fig:Quasi_nn_navig}. In the following lemma, nests are shown to be the attraction region of the undesired equilibria.

\begin{figure}[!h]
\centering
\includegraphics[scale=0.4]{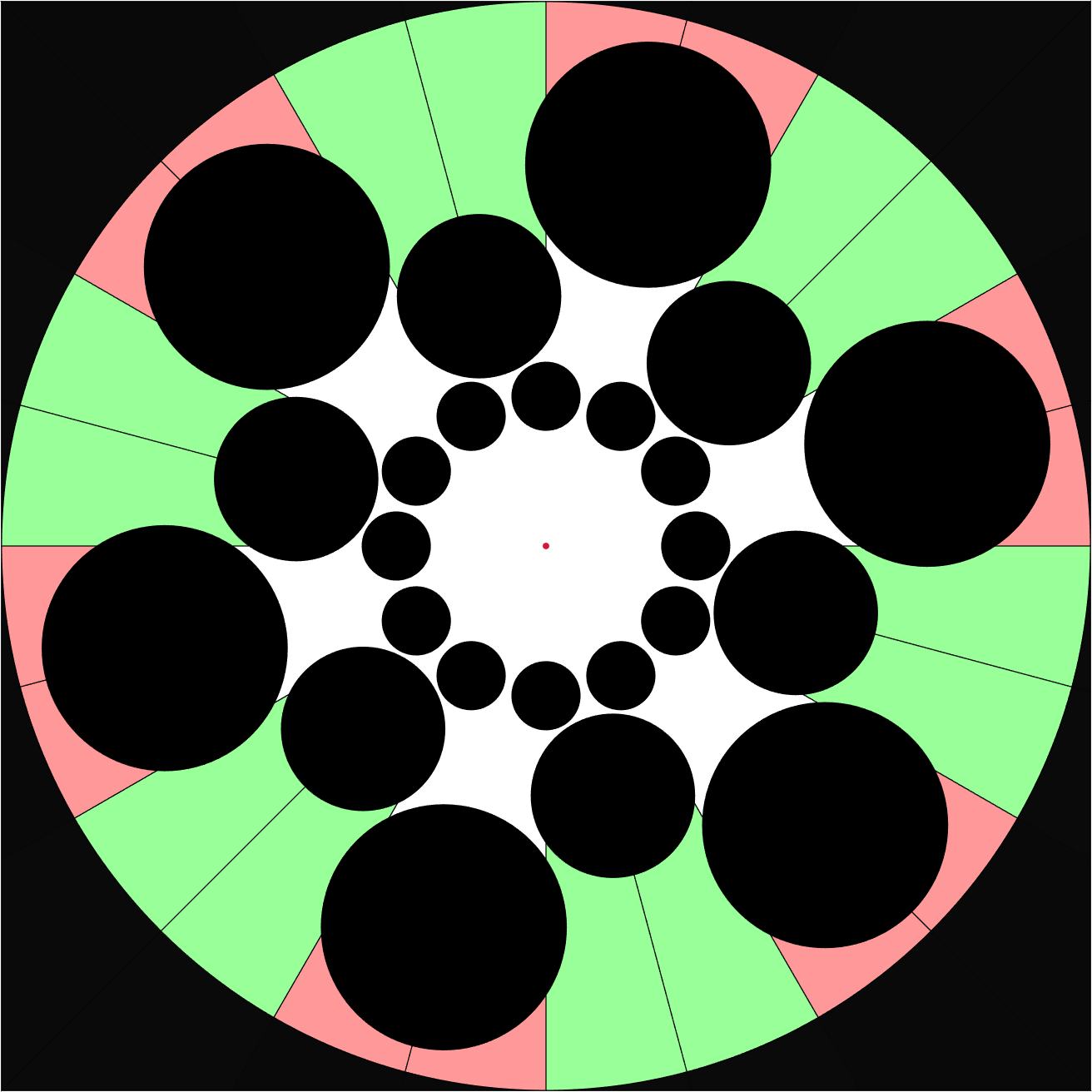}
\caption{Quasi-non-navigable two-dimensional space.}
\label{fig:Quasi_nn_navig}
\end{figure}
Now, to ensure almost global asymptotic stability of the equilibrium point $x_d$ in two-dimensional spaces, we reduce the nests to the set of undesired equilibria by imposing the following assumption:
\begin{assumption}\label{as:3}
 For any $i\in\mathbb{I}$ and $k\in\mathcal{Z}$ where $i\neq k$, $\mathcal{L}^e_k\cap\mathcal{O}_i=\varnothing$, or ($\mathcal{L}^e_k\cap\mathcal{O}_i\neq\varnothing$ and $i\in\mathcal{M}_k^{\Bar{N}_i}$.)
\end{assumption}
Assumption \ref{as:3} rules out the possibility of creating the invariant cells by imposing obstacle configurations such that $\mathbb{L}=\varnothing$ making the undesired equilibria repellers. In addition to the results of Theorem \ref{the1}, the next theorem characterizes the attraction region of the undesired equilibria and shows almost global asymptotic stability of the destination under Assumption \ref{as:3}. 
\begin{theorem}\label{the2}
Consider the free space $\mathcal{F}\subset\mathbb{R}^n$ described in \eqref{8}, for $n=2$, and the closed-loop system \eqref{12}-\eqref{36}. Let Assumptions \ref{as:1} and \ref{as:2} hold. Then, items i), ii), iii), iv) of Theorem \ref{the1}, and the following statements hold:
\begin{itemize}
\item [i)] The equilibrium point $x_d$ is attractive from all $x(0)\in\mathcal{F}\setminus\cup_{k}\textbf{Nest}_k$.
\item [ii)] From any initial position $x(0)\in\mathcal{F}\setminus\cup_{k}\textbf{Nest}_k$, the trajectory $x(t)$ is {\it quasi-optimal}.
\item [iii)] Under Assumption \ref{as:3}, $\cup_k\textbf{Nest}_k=\cup_{i\in\mathcal{Z}}\mathcal{L}_i$ and the destination $x_d$ is almost globally asymptotically stable.
\end{itemize}
\end{theorem}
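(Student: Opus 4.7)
The plan is to obtain each of the three items by combining Theorem \ref{the1} with the set-theoretic constructions (cells and nests) developed in Lemmas \ref{lem6} and \ref{lem7}, with the final step leveraging Assumption \ref{as:3} to shrink the nests down to a measure-zero set.

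First I would dispose of items i) and ii) together. Theorem \ref{the1}(ii) already gives that every trajectory of \eqref{12}-\eqref{36} converges to $\zeta=\{x_d\}\cup(\cup_{i\in\mathcal{Z}}\mathcal{L}_i)$. Lemma \ref{lem7} identifies $\cup_k\textbf{Nest}_k$ as the region of attraction of $\cup_{i\in\mathcal{Z}}\mathcal{L}_i$, and by construction $\cup_{i\in\mathcal{Z}}\mathcal{L}_i\subset\cup_k\textbf{Nest}_k$ (the special nest $\textbf{Nest}_0$ contains every undesired-equilibrium segment not lying in a regular cell, and the regular cells account for the rest via adjacency). Consequently, any initial condition in $\mathcal{F}\setminus\cup_k\textbf{Nest}_k$ cannot converge to $\cup_{i\in\mathcal{Z}}\mathcal{L}_i$, so by elimination it must converge to $x_d$; this proves i). Quasi-optimality on this same set, item ii), is inherited directly from Theorem \ref{the1}(v), since quasi-optimality is a property of the control law \eqref{36} itself and holds at every $x\in\mathcal{F}$ along any forward-complete trajectory.

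The more delicate step is item iii). I would show $\cup_k\textbf{Nest}_k=\cup_{i\in\mathcal{Z}}\mathcal{L}_i$ under Assumption \ref{as:3} in two sub-steps. First, Assumption \ref{as:3} forces $\mathbb{L}=\varnothing$: indeed, $k\in\mathbb{L}$ requires $\textbf{card}(\mathcal{R}_k)\geq 2$, which in turn needs at least two distinct central half-lines $\mathcal{L}^e_j$ to pierce $\mathcal{O}_k$ in a way that leaves undesired-equilibrium segments inside $\mathcal{AR}_k$; but Assumption \ref{as:3} says that whenever $\mathcal{L}^e_j\cap\mathcal{O}_k\neq\varnothing$ one has $k\in\mathcal{M}_j^{\Bar{N}_j}$, and Lemma \ref{lem5} then guarantees that $\mathcal{O}_k$ itself generates no undesired equilibria, preventing the build-up of adjacent cells around $\mathcal{O}_k$ in the construction preceding Lemma \ref{lem6}. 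Hence no cell $\textbf{Cell}_i$ exists, every regular nest is empty, and only the special nest $\textbf{Nest}_0=\cup_{i\in\mathcal{Z}}\mathcal{L}_i\setminus\cup_{k\in\mathbb{L}}\textbf{Cell}_k=\cup_{i\in\mathcal{Z}}\mathcal{L}_i$ remains. Second, since each $\mathcal{L}_i$ is a finite union of line segments in $\mathbb{R}^2$, the set $\cup_{i\in\mathcal{Z}}\mathcal{L}_i$ has zero Lebesgue measure. Combining this with i), local exponential stability from Theorem \ref{the1}(iv), and instability of $\cup_{i\in\mathcal{Z}}\mathcal{L}_i$ from Theorem \ref{the1}(iii), yields almost global asymptotic stability of $x_d$.

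The main obstacle I anticipate is the first sub-step of iii): translating the purely geometric condition in Assumption \ref{as:3} into the combinatorial statement $\mathbb{L}=\varnothing$. One needs to be careful that every obstacle of index $k\in\mathcal{M}_j^{\Bar{N}_j}$ really is spared from contributing to $\mathcal{R}_k$ in the sense used to define $\mathbb{L}$, i.e., that the absence of equilibrium-generation through Lemma \ref{lem5} indeed prevents adjacency of two cells. This requires a case analysis tying back the definitions of $\mathcal{R}_i$, $\mathcal{L}_k$, and $\mathcal{M}_i^{\Bar{N}_i}$, but once established, the rest of the argument follows by elementary measure-theoretic and stability considerations.
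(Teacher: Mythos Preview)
Your proposal is correct and follows essentially the same route as the paper: item i) is read off from Lemma \ref{lem7}, item ii) is obtained by combining item i) with Theorem \ref{the1}(v), and item iii) is reduced to the implication ``Assumption \ref{as:3} $\Rightarrow$ $\mathbb{L}=\varnothing$'', after which only the special nest $\textbf{Nest}_0=\cup_{i\in\mathcal{Z}}\mathcal{L}_i$ survives and has Lebesgue measure zero. The paper's own proof is actually terser than yours on the key step---it simply asserts that Assumption \ref{as:3} forces $\mathbb{L}=\varnothing$ (this is also stated in the text immediately preceding the theorem)---so your flagging of this implication as the delicate part, and your honest remark that it requires a careful case analysis linking $\mathcal{R}_i$, $\mathcal{L}_k$, and $\mathcal{M}_i^{\bar N_i}$, is appropriate and does not fall short of what the paper itself supplies.
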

\begin{proof}
See Appendix \ref{appendix:the2}.
\end{proof}
Theorem \ref{the2} shows the attraction of the target location from any position in the free space, except for the nests (region of attraction of the undesired equilibria), which reduces, under Assumption \ref{as:3}, to the undesired equilibria $\cup_{i\in\mathcal{Z}}\mathcal{L}_i$ having measure zero.
Fortunately, the nests will naturally disappear in the sensor-based case as we will see in the next section.
\section{Sensor-based navigation using a 2D LiDAR range scanner}
 We now present a more practical version of our approach using a LiDAR range scanner in an unknown two-dimensional sphere world. Assume that the robot is equipped with a sensor of $360^{\circ}$ angular sensing range, a resolution $d\theta>0$, and a radial sensing range $R>0$. The measurements of the sensor, at a position $x$, are modeled by the polar curve $\rho(x,\theta):\mathcal{F}\times\hat{\mathcal{A}}\rightarrow[0, R]$, where $\hat{\mathcal{A}}:=\left\{0,d\theta,2d\theta,\dots,360-d\theta\right\}$ is the set of scanned angles, defined as follows:
\begin{align}
    \rho(x,\theta):=\min\left(\begin{array}{l}
        R,
        \min\limits_{\substack{y\in\partial\mathcal{F}\\\overline{\mathrm{atan2}}(y-x)=\theta}}\|x-y\|,
    \end{array}\right),
\end{align}
where $\overline{\mathrm{atan2}}(v)=\mathrm{atan2}(v(2),v(1))$ for $v\in\mathbb{R}^2$.\\
The Cartesian coordinates of the scanned points are modeled by the mapping  
 $\delta(x,\theta):\mathcal{F}\times\hat{\mathcal{A}}\rightarrow\mathcal{F}$ defined as follows:
 \begin{align}
    \delta(x,\theta):=x+\rho(x,\theta)[\cos(\theta)\;\sin(\theta)]^{\top}.
\end{align}
Let $G_x(\delta)$ be the graph of the mapping $\delta$ at a position $x$ (red curve in Fig. \ref{Sensor_process}). The set $\mathbb{I}_x\subset\mathbb{I}$ of the detected obstacles is defined as $\mathbb{I}_x:=\left\{i\in\mathbb{I}|d(x,\mathcal{O}_i)\leq R\right\}$. Assume that at each position $x$, the sensor returns a list of arcs $\mathcal{LA}(x):=\left\{L_1,L_2,\dots,L_{\tau(x)}\right\}$ from the detected obstacles corresponding to the intersection of the graph $G_x(\delta)$ and obstacles of the set $\mathbb{I}_x$, where $\tau(x)=\textbf{card}(\mathbb{I}_x)$ as shown in Fig. \ref{Sensor_process}-\subref{sensor2} by the magenta arcs. Since the available information about the environment is limited by the graph $G_x(\delta)$, successive projections are impossible to apply. Therefore, we apply the single obstacle control strategy given by
\begin{align*}
    u(x)= \begin{cases}
      \displaystyle u_d(x), & x\in\mathcal{VI},\\
      \displaystyle u_d(x)-\|u_d(x)\|\frac{\sin(\theta_i-\beta_i)}{\sin(\theta_i)}V_{ci}, & x\in\mathcal{D}(x_d,c_i),
    \end{cases} 
\end{align*}
where $\theta_i$ and $V_{ci}=(c_i-x)/\|c_i-x\|$ are, respectively, the aperture and the axis of the enclosing cone, $\beta_i$ is the angle between $u_d$ and $(c_i-x)$, and $\mathcal{D}(x_d,c_i)$ is the shadow region.
\begin{figure}[h!]
     \centering
     \subfloat[]{\includegraphics[width=0.4\linewidth,height=0.4\linewidth,keepaspectratio]{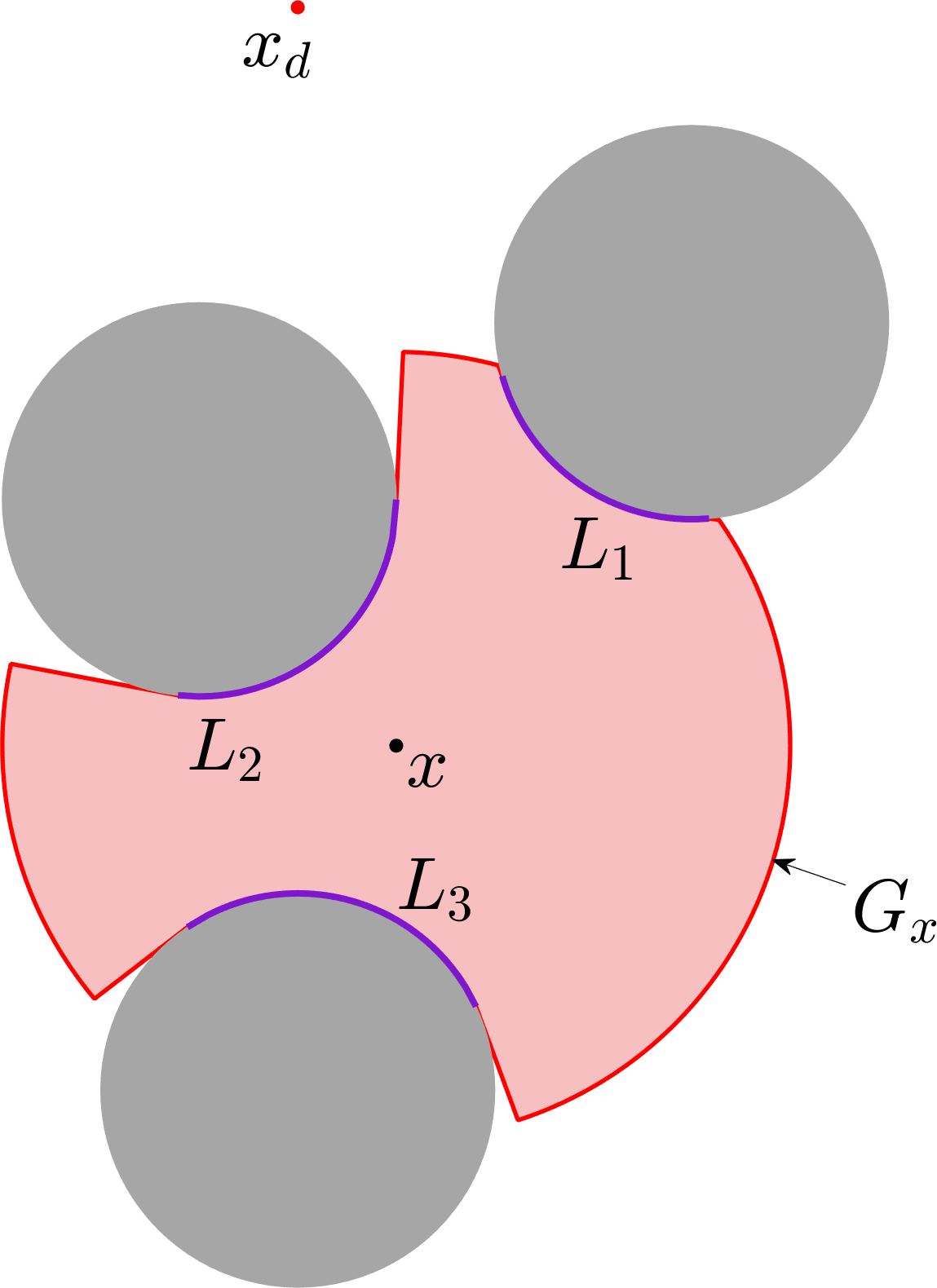}\label{sensor2}}
      \hspace{0.01cm}
     \subfloat[]{\includegraphics[width=0.4\linewidth,height=0.4\linewidth,keepaspectratio]{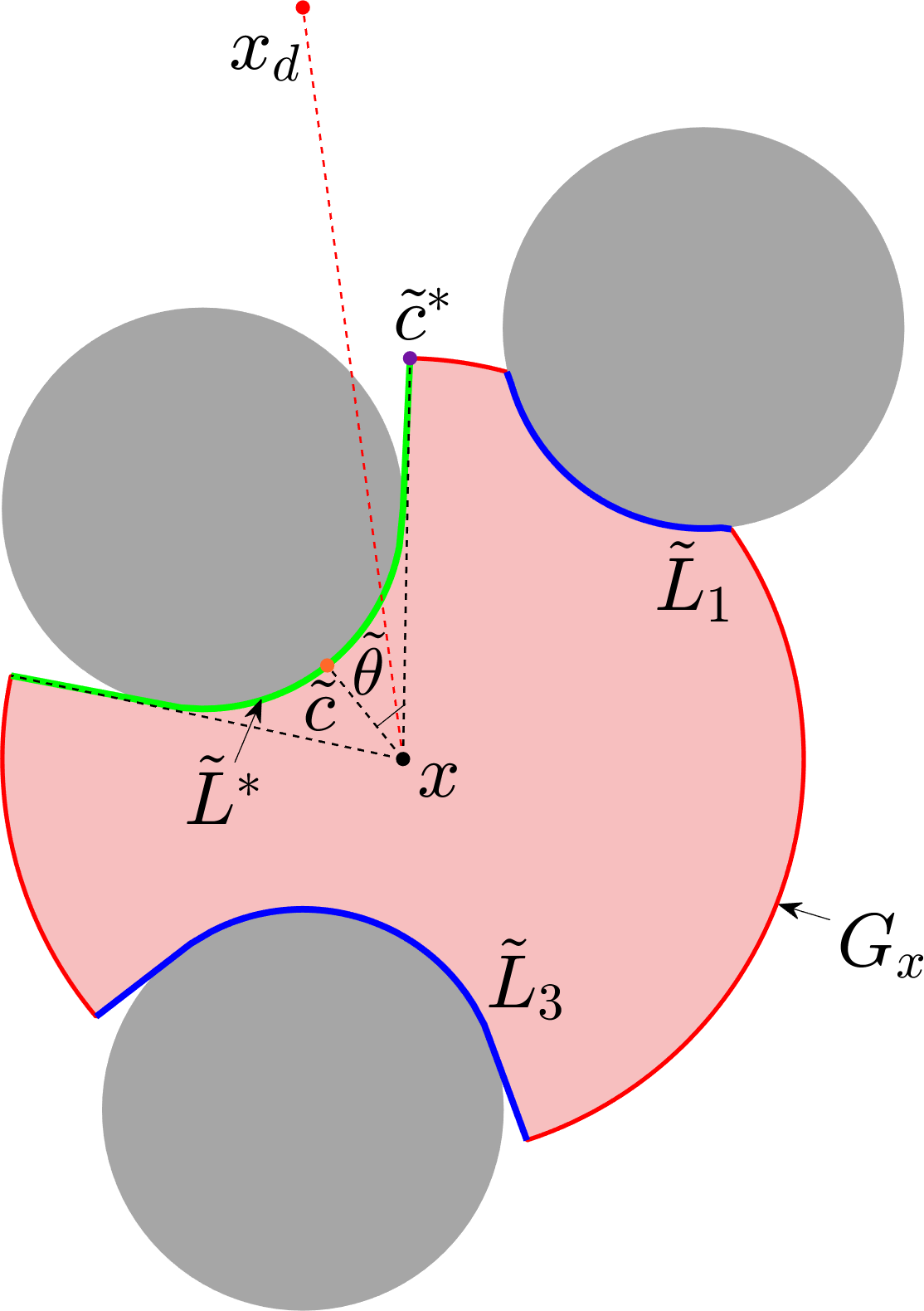}\label{sensor5}}
     \caption{Sensor-based control procedure for our approach.}
     \label{Sensor_process}
\end{figure}
To adapt the above control strategy to the sensor-based case, 
one proceeds as follows. At each position $x$, the detected arcs $\mathcal{LA}(x)$ are considered as obstacles. The arc crossed by the segment $\mathcal{L}(x,x_d)$ will help to create a virtual enclosing cone onto which the projection is performed. However, due to the practical model of the sensor, which may have low resolutions, safety is not always guaranteed when the robot is in the neighborhood of the obstacles where the velocity vector (projection of $u_d$ onto the virtual enclosing cone) may point inside the obstacle (see Fig.\ref{Pb_sensor}-\subref{fig:Pb_s1_lq}). To overcome this problem, a list of extended arcs $\mathcal{LA}_e(x):=\left\{\tilde{L}_1,\tilde{L}_2,\dots,\tilde{L}_{\tau(x)}\right\}$ is defined, where the endpoints of an arc $L_i$ are moved through the graph $G_x(\delta)$ until they have a radial polar coordinate equal to $R$ or they meet the endpoints of the neighboring arcs, as shown in Fig. \ref{Sensor_process}-\subref{sensor5}. Among the extended arcs of the list $\mathcal{LA}_e(x)$, the active extended arc crossed by the segment $\mathcal{L}(x,x_d)$ is selected and denoted by $\tilde{L}^*$. The active extended arc serves as an obstacle enclosed by a virtual cone (see Fig. \ref{Sensor_process}-\subref{sensor5}) from which we extract the following practical parameters:
\begin{itemize}
    \item The virtual center
    \begin{align}\label{vrt_center}
    \tilde{c}:=\mathrm{arg}\min\limits_{y\in\tilde{L}^*}\|x-y\|,
    \end{align}
    which gives the direction $(\tilde{c}-x)$.
    \item The virtual aperture
    \begin{align}\label{vrt_aperture}
    \tilde{\theta}:=\angle(\tilde{c}-x,\tilde{c}^*-x),
    \end{align}
    where $\tilde{c}^*$ is the endpoint of $\tilde{L}^*$ such that $u_d$ is between the directions $(\tilde{c}-x)$ and $(\tilde{c}^*-x)$.
    \item The angle
    \begin{align}\label{vrt_theta}
    \tilde{\beta}:=\angle(\tilde{c}-x,u_d).
    \end{align}
\end{itemize}
\begin{figure}[!h]
     \centering
     \subfloat[]{\includegraphics[width=0.35\linewidth,height=0.35\linewidth,keepaspectratio]{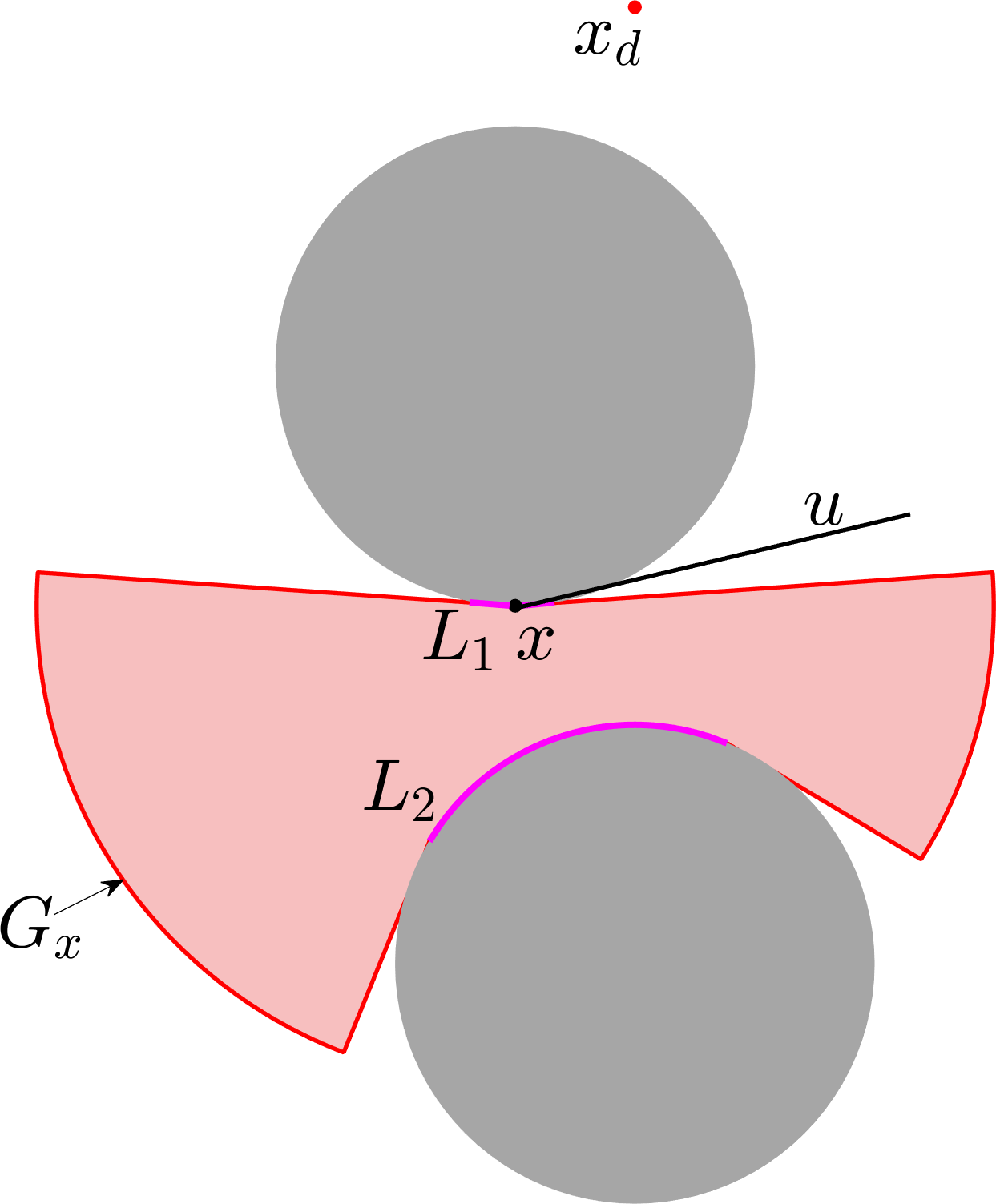}\label{fig:Pb_s1_lq}}
      \hspace{0.01cm}
     \subfloat[]{\includegraphics[width=0.35\linewidth,height=0.35\linewidth,keepaspectratio]{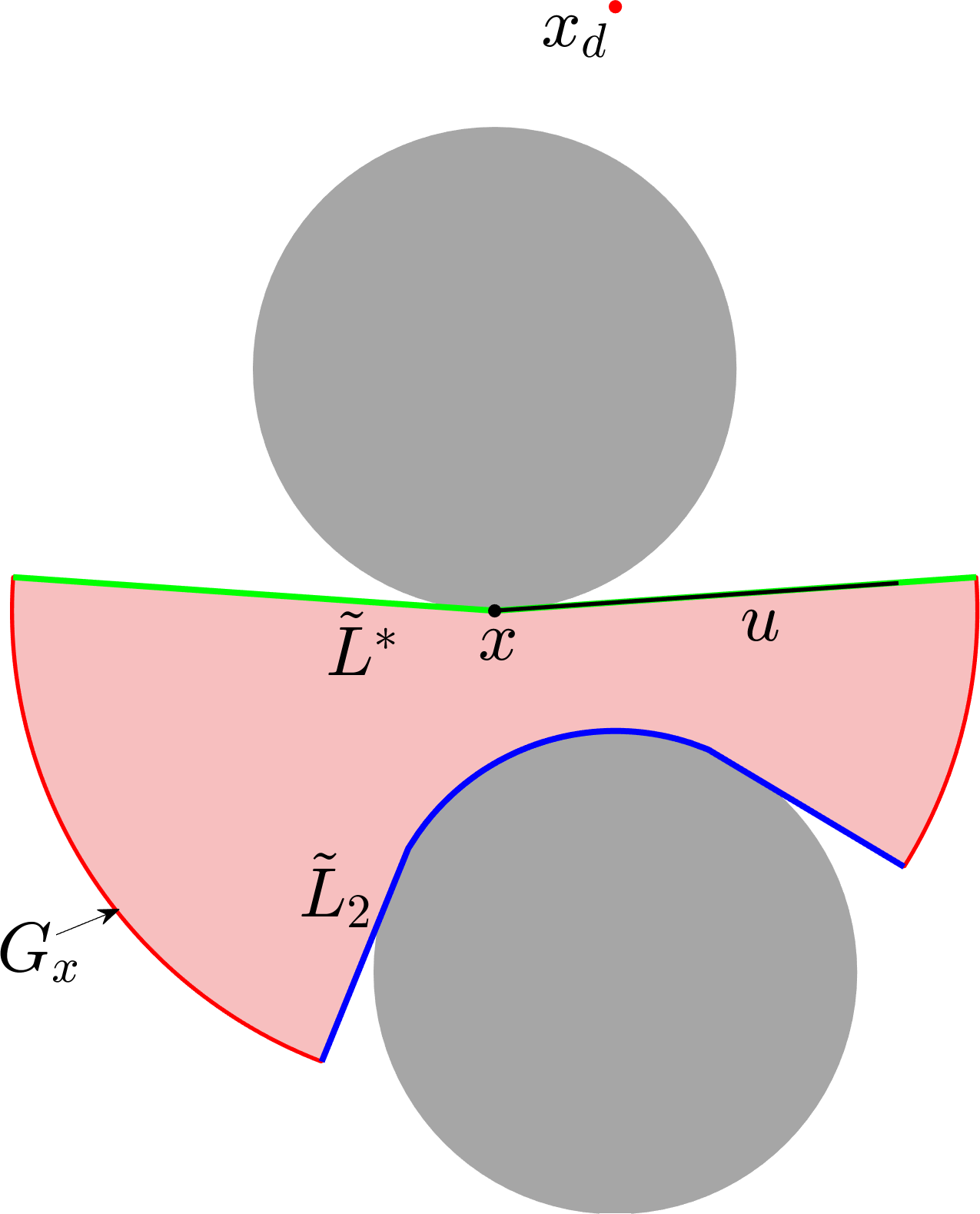}\label{fig:Pb_s2_lq}}
     \caption{Safety consideration in a sensor-based case. In Fig. (a), the projection $u$ of $(x_d-x)$ onto the cone enclosing the arc $L_1$ of the list $\mathcal{LA}(x)$ fails to satisfy the safety condition where $u$ crosses the obstacle. In Fig. (b), the projection lies on the active arc $\tilde L^*$ of the list $\mathcal{LA}_e(x)$ and meets the safety condition.} 
     \label{Pb_sensor}
\end{figure}
Before defining the new blind and visible sets, let us define the truncated shadow region by
\begin{align}\label{tr_sh}
    \mathcal{D}^t(x_d,c_i):=\mathcal{D}(x_d,c_i)\setminus\bigcup\limits_{j\in\mathcal{PR}_i}\mathcal{D}(x_d,c_j),
\end{align}
where $\mathcal{PR}_i:=\bigl\{j\in\mathbb{I}|\mathcal{D}(x_d,c_i)\cap\mathcal{D}(x_d,c_j)\neq\varnothing,\,d(x_d,\mathcal{O}_i)<d(x_d,\mathcal{O}_j)\bigr\}$ is the progeny of obstacle $i$ (see Fig. \ref{Practical workspace}-\subref{truncatedShadow}). Since the visibility of the robot is limited to the scanning range of the sensor, let us define the practical shadow region of an obstacle $i$ as follows:
\begin{align}
    \tilde{\mathcal{D}}(x_d,c_i,R):=\mathcal{D}^t(x_d,c_i)\cap\mathcal{B}(c_i,r_i+R).
\end{align}
Therefore, the practical blind set is defined as follows:
\begin{align}\label{blind_sens}
\widetilde{\mathcal{BL}}:=\bigcup\limits_{i\in\mathbb{I}}\tilde{\mathcal{D}}(x_d,c_i,R),
\end{align}
The practical visible set is then defined as $\widetilde{\mathcal{VL}}:=\widetilde{\mathcal{BL}}_{\mathcal{F}}^c$ (see Fig. \ref{Practical workspace}-\subref{practicalShadow}). Finally, the control is given by
\begin{align}\label{sensor_based_ctrl}
    u(x)= \begin{cases}
      \displaystyle u_d(x), & x\in\widetilde{\mathcal{VI}},\\
      \displaystyle u_d(x)-\|u_d(x)\|\frac{\sin(\tilde{\theta}-\tilde{\beta})}{\sin(\tilde{\theta})}\frac{\tilde{c}-x}{\|\tilde{c}-x\|}, & x\in\widetilde{\mathcal{BL}}.
    \end{cases} 
\end{align}
\begin{figure}[h!]
     \centering
     \subfloat[]{\includegraphics[width=0.38\linewidth,height=0.38\linewidth,keepaspectratio]{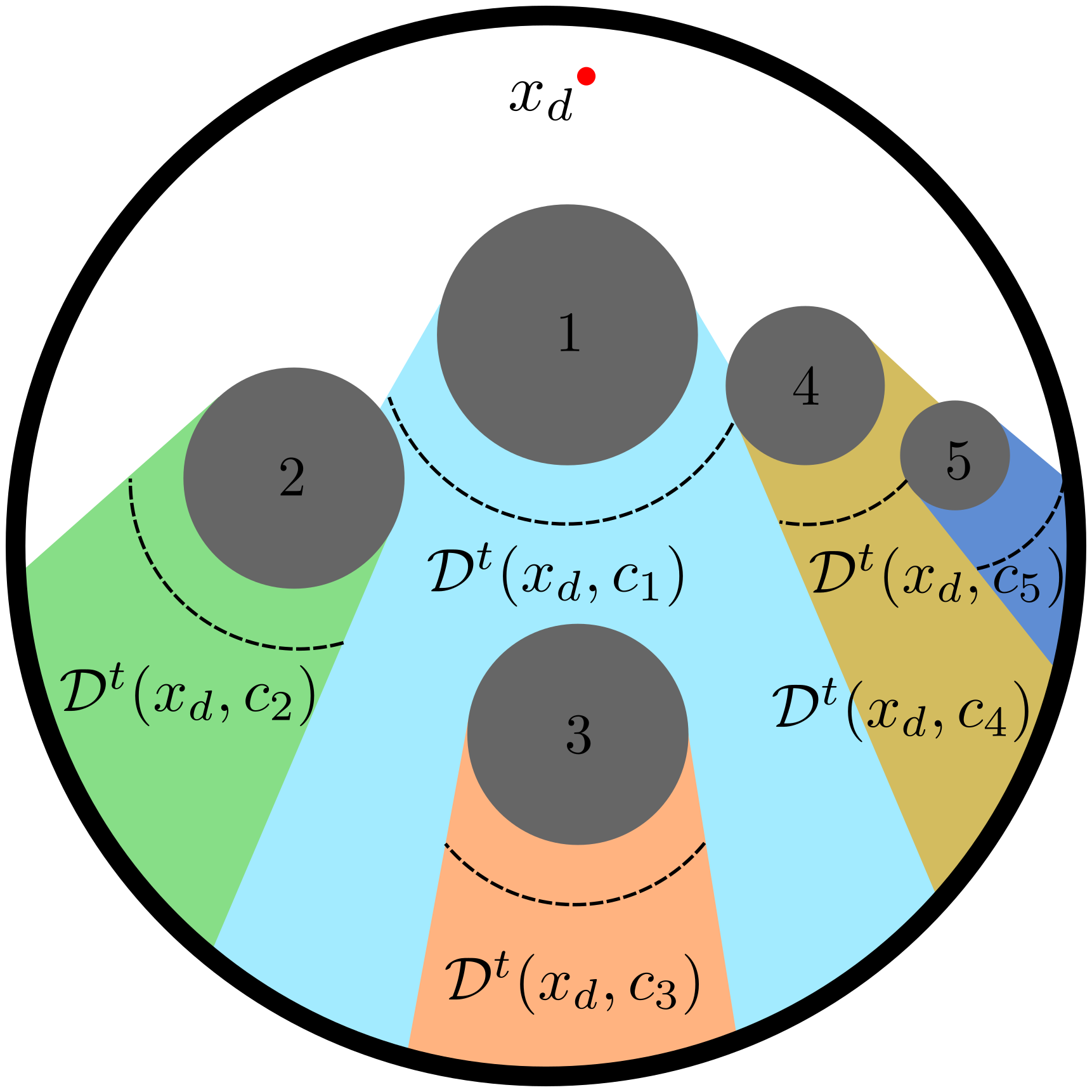}\label{truncatedShadow}}
    \hspace{0.01cm}
     \subfloat[]{\includegraphics[width=0.38\linewidth,height=0.38\linewidth,keepaspectratio]{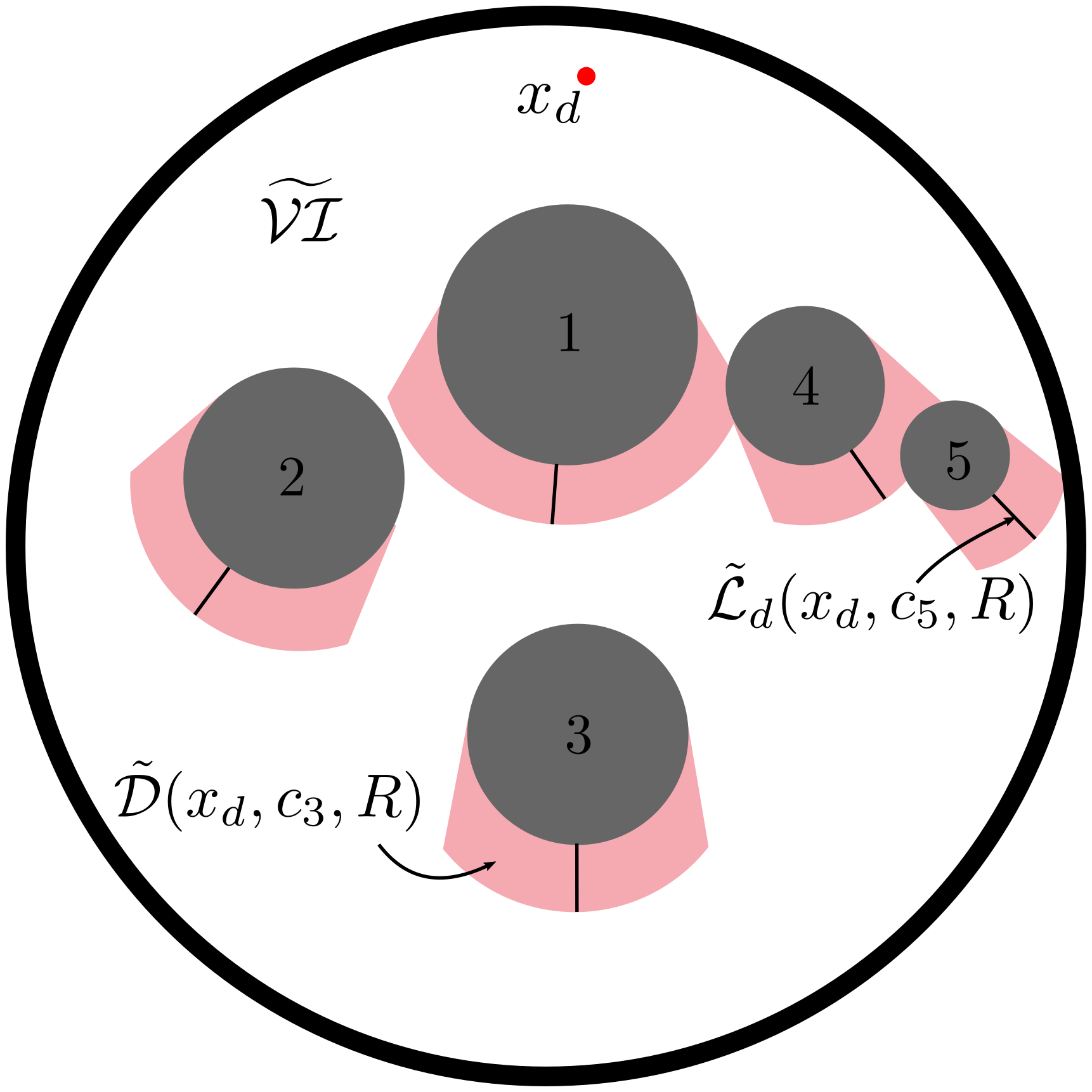}\label{practicalShadow}}\\
     \caption{Illustration of the workspace for the sensor-based case. Fig. (a) shows the truncated shadow regions of each obstacle where obstacles $\{2,3,4\}$ are the progeny of obstacle $1$, and obstacle $\{5\}$ is the progeny of obstacle $4$. Fig. (b) highlights the practical shadow regions of each obstacle in pink where the union of theses regions represents the practical blind set while the white region represents the practical visible set.}
     \label{Practical workspace}
\end{figure}
The implementation of the sensor-based control strategy is summarized in Algorithm \ref{alg_sensor} (see also Fig. \ref{Sensor_process}).\\
\begin{algorithm}[tbh]
 \caption{Implementation of the control law \eqref{sensor_based_ctrl} in the closed-loop \eqref{eq:closed-loop-system}}\label{alg_sensor}
 \begin{algorithmic}[1]
 \renewcommand{\algorithmicrequire}{\textbf{Initialization:}}
 \REQUIRE $x_d$, $e_c$;
 \WHILE{true}
 \STATE Measure $x$, $G_x(\delta)$, and $\mathcal{LA}(x)$.
 \IF {$\|x-x_d\|\leq e_s$}
 \STATE Break;
 \ELSE
  \IF{ $\mathcal{LA}(x)\neq\varnothing$} 
  \STATE Construct $\mathcal{LA}_e(x)$.
  \IF{$\mathcal{L}(x_d,x)$ crosses one of the extended arcs of $\mathcal{LA}_e(x)$}
  \STATE Identify $\tilde{L}^*$.
  \STATE Determine $\tilde{c}$, $\tilde{\theta}$ and $\tilde{\beta}$ using equations \eqref{vrt_center}, \eqref{vrt_aperture} and \eqref{vrt_theta}, respectively.
  \STATE $u\leftarrow u_d(x)-\|u_d(x)\|\frac{\sin(\tilde{\theta}-\tilde{\beta})}{\sin(\tilde{\theta})}\frac{\tilde{c}-x}{\|\tilde{c}-x\|}$;
  \ELSE
  \STATE $u\leftarrow u_d$;
  \ENDIF
  \ELSE
  \STATE $u\leftarrow u_d$;
  \ENDIF
  \STATE Execute $u$ in (9); 
  \ENDIF
  \ENDWHILE
 \end{algorithmic} 
 \end{algorithm}
The sensor-based control strategy \eqref{sensor_based_ctrl} can be seen as a special case of the control strategy in the map-based scenario ({\it a priori } known environments) if each obstacle is considered as the unique obstacle in the workspace. In this way, the active regions become the disjoint practical shadow regions that will limit the undesired equilibria generated by each obstacle to its own practical shadow region excluding the possibility of creating invariant cells. The following lemma determines the equilibria of the closed-loop system \eqref{12}-\eqref{sensor_based_ctrl}.
\begin{lemma}\label{lem8}
    All trajectories of the closed-loop system \eqref{12}-\eqref{sensor_based_ctrl} converge to the set $\zeta=\{x_d\}\cup\left(\cup_{i\in\mathbb{I}}\tilde{\mathcal{L}}_d(x_d,c_i,R)\right)$ where $\tilde{\mathcal{L}}_d(x_d,c_i,R):=\mathcal{L}_d(x_d,c_i)\cap\tilde{\mathcal{D}}(x_d,c_i,R)$. 
\end{lemma}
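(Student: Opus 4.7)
The strategy is to invoke LaSalle's invariance principle with the Lyapunov candidate $V(x)=\tfrac{1}{2}\|x-x_d\|^2$. Unlike the map-based case of Lemma \ref{lem4}, no successive projections occur here: by the truncation in \eqref{tr_sh}, the practical shadow regions $\tilde{\mathcal{D}}(x_d,c_i,R)$ are pairwise disjoint, so the analysis reduces on each of them to a single-obstacle computation and the equilibrium set is naturally a union indexed by $\mathbb{I}$.

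First I would compute $\dot V=u(x)^{\top}(x-x_d)$ in both regions. On $\widetilde{\mathcal{VL}}$, $u=u_d=-\gamma(x-x_d)$ gives $\dot V=-\gamma\|x-x_d\|^2$, vanishing only at $x_d$. On each $\tilde{\mathcal{D}}(x_d,c_i,R)$, substituting \eqref{sensor_based_ctrl} and using the identity $\sin(\tilde\theta)-\sin(\tilde\theta-\tilde\beta)\cos(\tilde\beta)=\sin(\tilde\beta)\cos(\tilde\theta-\tilde\beta)$ one obtains
\begin{equation*}
\dot V = -\gamma\|x-x_d\|^2\,\frac{\sin(\tilde\beta)\cos(\tilde\theta-\tilde\beta)}{\sin(\tilde\theta)},
\end{equation*}
which is non-positive since $\tilde\beta\in[0,\tilde\theta]$ and $\tilde\theta\in(0,\pi/2]$, vanishing iff $\tilde\beta(x)=0$. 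Continuity of $u$ across $\partial\widetilde{\mathcal{BL}}$ holds because $\tilde\beta=\tilde\theta$ there, which makes the correction term vanish and $u=u_d$.

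The next step is to characterize $\{x\in\tilde{\mathcal{D}}(x_d,c_i,R):\tilde\beta(x)=0\}$. Since $\tilde c(x)$ is the closest point on the active extended arc of $\mathcal{O}_i$ to $x$ and the near-cap of $\mathcal{O}_i$ visible from $x$ is contained in that arc, $\tilde c(x)$ coincides with the Euclidean projection of $x$ onto $\partial\mathcal{O}_i$ and thus lies on the line through $x$ and $c_i$. The condition $\tilde\beta=\angle(\tilde c-x,u_d)=0$ therefore reads: $(x_d-x)$ and $(c_i-x)$ are collinear and co-oriented, forcing $x,c_i,x_d$ to be collinear with $x$ on the far side of $c_i$ from $x_d$. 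This is exactly $x\in\mathcal{L}_h(c_i,c_i-x_d)\cap\mathcal{D}(x_d,c_i)=\mathcal{L}_d(x_d,c_i)$, and intersecting with $\tilde{\mathcal{D}}(x_d,c_i,R)$ recovers $\tilde{\mathcal{L}}_d(x_d,c_i,R)$. Combining the visible and blind sets produces $\zeta=\{x_d\}\cup\bigl(\cup_{i\in\mathbb{I}}\tilde{\mathcal{L}}_d(x_d,c_i,R)\bigr)$.

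Finally, forward invariance of $\mathcal{F}$ (established analogously to Lemma \ref{lem3} by checking Nagumo's condition on $\partial\mathcal{F}$, where the projection is tangent to the detected arc) keeps sublevel sets of $V$ inside $\mathcal{F}$, so LaSalle's invariance principle yields convergence of every trajectory to the largest invariant subset of $\{\dot V=0\}$, namely $\zeta$. The main delicate point is the third step: one must verify that the endpoint extension used to build $\mathcal{LA}_e(x)$ from $\mathcal{LA}(x)$ does not shift $\tilde c(x)$ off $\partial\mathcal{O}_i$ in the collinear configuration, so that the closest point on $\tilde L^{*}$ really sits on the obstacle surface and lies on the line through $x$ and $c_i$. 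This holds because on $\mathcal{L}_h(c_i,c_i-x_d)$ and within $\mathcal{B}(c_i,r_i+R)$, the near-cap of $\mathcal{O}_i$ is entirely detected and the corresponding Euclidean minimizer falls in the interior of the original (unextended) arc.
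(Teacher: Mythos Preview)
Your argument is correct and, in fact, more complete than the paper's own proof of Lemma~\ref{lem8}. The paper's proof merely solves $u(x)=0$ to identify the equilibrium set $\zeta$ (observing that on $\widetilde{\mathcal{BL}}$ this forces $\tilde\beta=0$, and that $\tilde c$ lies on the ray from $x$ through $c_i$ so that $\tilde\beta=0$ places $x$ on $\mathcal{L}_d(x_d,c_i)$); it does not actually establish \emph{convergence} there. The Lyapunov computation you carry out---$\dot V=-\gamma\|x-x_d\|^2\sin(\tilde\beta)\cos(\tilde\theta-\tilde\beta)/\sin(\tilde\theta)$ on $\widetilde{\mathcal{BL}}$---is exactly what the paper postpones to the proof of Theorem~\ref{the3}, item~iv). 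So the mathematical content is identical; you have simply placed the LaSalle step where the statement of the lemma demands it, whereas the paper splits identification of equilibria (Lemma~\ref{lem8}) from the decrease argument (Theorem~\ref{the3}). Your remark about the extended arcs not displacing $\tilde c$ off $\partial\mathcal{O}_i$ in the collinear configuration is a valid subtlety that the paper glosses over.
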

\begin{proof}
See Appendix \ref{appendix:Lemma 8}.
\end{proof}
Lemma \ref{lem8} shows that the set of undesirable equilibria of the closed-loop system \eqref{12}-\eqref{sensor_based_ctrl}, associated with an obstacle $\mathcal{O}_i$, is the line segment starting from the antipodal point of the destination on obstacle $\mathcal{O}_i$ to the boundary of the practical shadow region. The next theorem states formally the properties of the sensor-based control strategy in two-dimensional sphere worlds. 
\begin{theorem}\label{the3}
Consider the free space $\mathcal{F}\subset\mathbb{R}^n$ described in \eqref{8}, for $n=2$, and the closed-loop system \eqref{12}-\eqref{sensor_based_ctrl}. Under Assumptions \ref{as:1} and \ref{as:2}, the following statements hold:
\begin{itemize}
\item [i)] The set $\mathcal{F}$ is forward invariant.
\item [ii)] All trajectories converge to the set $ \zeta=\{x_d\}\cup\left(\cup_{i\in\mathbb{I}}\tilde{\mathcal{L}}_d(x_d,c_i,R)\right)$.
\item [iii)] The set of undesired equilibria $\cup_{i\in\mathbb{I}}\tilde{\mathcal{L}}_d(x_d,c_i,R)$ is unstable.
\item [iv)] The equilibrium point $x_d$ is almost globally asymptotically stable on $\mathcal{F}$.
\end{itemize}
\end{theorem}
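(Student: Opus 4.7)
The plan is to reduce the sensor-based closed-loop system \eqref{12}-\eqref{sensor_based_ctrl} to a union of \emph{decoupled} single-obstacle problems, one per practical shadow region, and then invoke the tools already developed for Theorems \ref{the1} and \ref{the2}. The key observation, highlighted right before the statement of the theorem, is that the truncated shadow regions $\mathcal{D}^{t}(x_d,c_i)$ are pairwise disjoint by construction \eqref{tr_sh}, hence so are the practical shadow regions $\tilde{\mathcal{D}}(x_d,c_i,R)$. Consequently, on each such region the control \eqref{sensor_based_ctrl} coincides with the single-obstacle controller \eqref{25} applied to the virtual cone $(\tilde c,\tilde\theta)$ extracted from the currently visible extended arc $\tilde L^*$; outside every $\tilde{\mathcal{D}}(x_d,c_i,R)$ the control equals $u_d$.

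First, I would prove item (i) by a direct Nagumo argument, mimicking Lemma \ref{lem3}. On $\partial\mathcal{W}$ the nominal controller $u_d(x)=-\gamma(x-x_d)$ points inward since $x_d\in\mathring{\mathcal{F}}$. On $\partial\mathcal{O}_i$ the active extended arc satisfies $\tilde c\in\partial\mathcal{O}_i$ with $(\tilde c-x)$ aligned with the inward normal of the tangent cone $\mathcal{T}_{\mathcal{F}}(x)$; the construction \eqref{vrt_aperture}-\eqref{vrt_theta} forces $u(x)$ to lie on the virtual enclosing cone, so $u(x)^\top(c_i-x)\le 0$, which is precisely Nagumo's condition at $x\in\partial\mathcal{O}_i$. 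Lipschitzness of $u$ across $\partial\tilde{\mathcal{D}}(x_d,c_i,R)$ follows because the projection degenerates to $u_d$ at the boundary ($\tilde\beta=\tilde\theta$), exactly as in \eqref{25}; existence and uniqueness of solutions then hold.

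For items (ii) and (iii) I would rely on the single-obstacle analysis. Since the practical shadow regions are disjoint, a trajectory entering $\tilde{\mathcal{D}}(x_d,c_i,R)$ interacts only with obstacle $\mathcal{O}_i$ until it leaves, so Lemma \ref{lem2} and the equilibrium analysis preceding Lemma \ref{lem4} apply verbatim inside each region, giving the equilibrium set $\tilde{\mathcal{L}}_d(x_d,c_i,R)$ stated in Lemma \ref{lem8}. Convergence to $\zeta$ follows by a LaSalle-type argument with $V(x)=\tfrac12\|x-x_d\|^2$: on $\widetilde{\mathcal{VI}}$ one has $\dot V=-\gamma\|x-x_d\|^2\le 0$, and on each $\tilde{\mathcal{D}}(x_d,c_i,R)$ one checks, using the explicit form of \eqref{sensor_based_ctrl}, that $\dot V\le 0$ with equality only on $\tilde{\mathcal{L}}_d(x_d,c_i,R)$. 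For instability, the argument is the transversal one used in Theorem \ref{the1}(iii): an equilibrium $x^\star\in\tilde{\mathcal{L}}_d(x_d,c_i,R)$ corresponds to $\tilde\beta=0$, i.e.\ $x_d$, $x^\star$ and $c_i$ collinear with $c_i$ between $x_d$ and $x^\star$; for $x$ slightly off this line, $\tilde\beta>0$ and the projection pushes $x$ \emph{around} obstacle $i$, so the linearization in the transverse direction has a positive eigenvalue. The crucial point, and the one that differs from the map-based setting, is that no invariant cells $\textbf{Cell}_i$ of Lemma \ref{lem6} can exist here, because disjointness of the $\tilde{\mathcal{D}}(x_d,c_i,R)$ forces $\mathbb{L}=\varnothing$; hence the nests collapse onto the undesired equilibria themselves, as in Theorem \ref{the2}(iii) under Assumption \ref{as:3}.

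Finally, item (iv) combines the previous three: local exponential stability of $x_d$ follows from $\dot V=-\gamma\|x-x_d\|^2$ on a neighborhood of $x_d$ contained in $\widetilde{\mathcal{VI}}$ (which exists since $x_d\in\mathring{\mathcal{F}}$ and $\mathcal{F}$ is a sphere world); global attractivity off $\cup_{i}\tilde{\mathcal{L}}_d$ follows from (ii) together with (iii) via a standard stable-manifold / zero-measure argument, noting that each $\tilde{\mathcal{L}}_d(x_d,c_i,R)$ is a line segment in $\mathbb{R}^2$ and hence has Lebesgue measure zero, and its stable manifold, being a union of trajectories tangent to this one-dimensional unstable equilibrium set, is also of measure zero. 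I expect the main obstacle to be a clean verification of the instability linearization in item (iii) at points of $\tilde{\mathcal{L}}_d$ where the active extended arc $\tilde L^*$ may switch (endpoints meeting neighboring arcs, or the sensor range saturating at $R$); the remedy is to observe that the virtual parameters $(\tilde c,\tilde\theta,\tilde\beta)$ vary continuously in a neighborhood of $x^\star$ inside $\tilde{\mathcal{D}}(x_d,c_i,R)$ and that the switching surfaces lie outside this neighborhood provided $x^\star$ is interior to a single active-arc region, which can be arranged by Assumptions \ref{as:1}-\ref{as:2}.
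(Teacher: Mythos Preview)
Your overall reduction to decoupled single-obstacle problems via the pairwise disjointness of the practical shadow regions is exactly the mechanism the paper exploits, and your use of $V(x)=\tfrac12\|x-x_d\|^2$ for the LaSalle/convergence step is precisely what the paper computes (it places that computation under item (iv), obtaining $\dot V=-\gamma\|x-x_d\|^2\sin(\tilde\beta)\cos(\tilde\theta-\tilde\beta)/\sin(\tilde\theta)\le 0$ on $\widetilde{\mathcal{BL}}$). The no-cells conclusion ($\mathbb{L}=\varnothing$) is also the paper's punchline for item (iv).

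There are two points where your sketch departs from the paper and one of them is a genuine gap. First, for item (iii) you invoke ``the linearization in the transverse direction has a positive eigenvalue''. Neither Theorem \ref{the1}(iii) nor the present proof uses linearization; both use Chetaev-type functions. Here the paper takes the angle-based function
\[
V(x)=1-\frac{(\tilde x_i-c_i)^\top}{\|\tilde x_i-c_i\|}\frac{(x-\tilde c)}{\|x-\tilde c\|}
\]
on a thin tube around $\tilde{\mathcal{L}}_d(x_d,c_i,R)$ and, after a short computation using $\tilde c$ as the metric projection onto $\mathcal{O}_i$, obtains $\dot V=\frac{\delta K}{\|c_i-x_d\|}(x_d-x)^\top\pi^{\bot}(V_{ci})(x_d-x)>0$ off the segment. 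A Jacobian argument is delicate because the undesired equilibria form a one-dimensional continuum (so the linearization automatically has a zero eigenvalue along the segment) and would require a normal-hyperbolicity statement you have not supplied; the Chetaev route sidesteps this entirely.

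Second, your claim that $u$ is Lipschitz across $\partial\tilde{\mathcal{D}}(x_d,c_i,R)$ is not correct in general: the paper explicitly notes right after the theorem (and illustrates in Fig.~\ref{Discontinuity}) that the sensor-based control can be \emph{discontinuous} when the active extended arc $\tilde L^*$ jumps from one obstacle to another. This does not spoil forward invariance (Nagumo's condition at $\partial\mathcal{F}$ still holds and the paper simply inherits item (i) from Theorem \ref{the1}), but it invalidates your route to uniqueness of solutions via Lipschitzness; you should not lean on that.
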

\begin{proof}
See Appendix \ref{appendix:the3}.
\end{proof}
Theorem \ref{the3} provides the stability results obtained with the reactive sensor-based feedback control strategy, relying only on local information provided by the sensor, which is more practical and realistic than the global approach which requires \textit{a priori} knowledge of the entire workspace. Almost global asymptotic stability is guaranteed without imposing restrictions on the obstacle configurations as in Assumption \ref{as:3}. However, the control continuity and quasi-optimality of the generated trajectories are no longer guaranteed. Fig. \ref{Discontinuity} shows an example of a discontinuity in our control at time $t=t'$ when the active arc $\tilde{L}^*$ passed from one obstacle to another, resulting in a sudden change in the control's direction to follow the tangent of the new obstacle.
\begin{figure}[!h]
\centering
\includegraphics[scale=0.35]{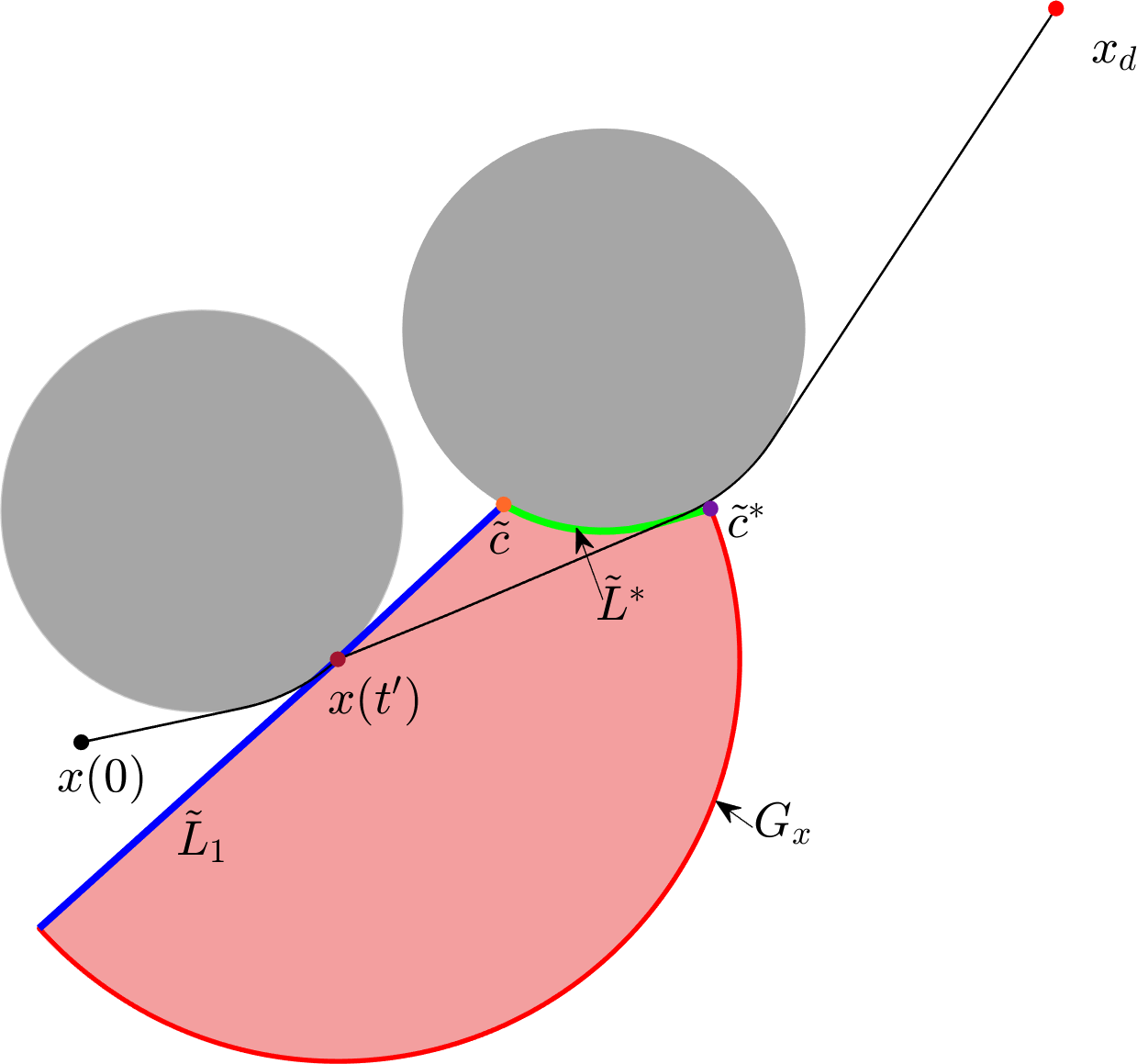}
\caption{A scenario of a discontinuity occurrence in a 2D workspace populated by two obstacles at time $t=t'$.}
\label{Discontinuity}
\end{figure}
\subsection{Convex obstacles}
We consider convex sets with smooth boundaries $\tilde{\mathcal{O}}_i$ as obstacles. The free space must satisfy the separation conditions of Assumptions \ref{as:1} and \ref{as:2}. We also assume that the following curvature condition (see, \textit{e.g.,} \citep{Arslan2019}) is satisfied.
\begin{assumption}\label{as:4}
    Obstacles are sufficiently curved at their farthest point from the target location $x_d\in\mathcal{F}$, {\it i.e.,}
 \begin{align}
    \tilde{\mathcal{O}}_i\subset\mathcal{B}(x_d,\|x_d-\mathrm{x}_i\|),\;\forall i\in\mathbb{I},
\end{align}
where $\mathrm{x}_i\in\left\{q\in\partial\tilde{\mathcal{O}}_i|ds_i(q)^{\top}(x_d-q)/\|x_d-q\|=1\right\}$, and $ds_i(q)\in\mathbb{S}^{n-1}$ is the inward-directed gradient of the surface of obstacle $\tilde{\mathcal{O}}_i$ at position $q\in\partial\tilde{\mathcal{O}}_i$.
\end{assumption}
Assumption \ref{as:4} somewhat limits the flatness of an obstacle at its farthest point from the target, as illustrated in Figure \ref{fig:conv_par}. \\
The shadow region for a convex obstacle is redefined as $\mathcal{D}(x_d,i):=\left\{q\in\mathcal{F}|\mathcal{L}(x_d,q)\cap\tilde{\mathcal{O}}_i\neq\varnothing\right\}$, where the center is replaced by the index of the obstacle as a parameter (see Fig. \ref{fig:conv_par}). The practical shadow region is then defined as $\tilde{\mathcal{D}}(x_d,i,R):=\left\{q\in\mathcal{D}^t(x_d,i)|d(q,\tilde{\mathcal{O}}_i)\leq R \right\}$, where $\mathcal{D}^t(x_d,i)$ is the truncated shadow region defined in \eqref{tr_sh}, substituting the center with the obstacle's index. The practical parameters and the control are the same as in \eqref{sensor_based_ctrl}. The next lemma provides the set of equilibria of the closed-loop system \eqref{12}-\eqref{sensor_based_ctrl} in the case of convex obstacles.
\begin{lemma}\label{lem9}
    All trajectories of the closed-loop system \eqref{12}-\eqref{sensor_based_ctrl} converge to the set $\tilde{\zeta}=\{x_d\}\cup\left(\cup_{i\in\mathbb{I}}\tilde{\mathcal{L}}_d(x_d,\mathrm{x}_i,R)\right)$, where $\tilde{\mathcal{L}}_d(x_d,\mathrm{x}_i,R):=\mathcal{L}_d(x_d,\mathrm{x}_i)\cap\tilde{\mathcal{D}}(x_d,i,R)$.
\end{lemma}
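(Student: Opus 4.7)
The plan is to mirror the argument sketched for Lemma \ref{lem8} in the spherical case, with the principal change being that the obstacle center $c_i$ is replaced by the farthest point $\mathrm{x}_i\in\partial\tilde{\mathcal{O}}_i$ from $x_d$, and Assumption \ref{as:4} is invoked to guarantee that $\mathrm{x}_i$ plays the role of the ``antipodal'' point in the characterization of the undesired equilibria. First I would set $u(x)=0$ in \eqref{sensor_based_ctrl}. On $\widetilde{\mathcal{VL}}$, one has $u(x)=u_d(x)=-\gamma(x-x_d)$, which vanishes only at $x=x_d$. On $\widetilde{\mathcal{BL}}$, setting $u(x)=0$ forces $u_d(x)$ to be aligned with $(\tilde c - x)/\|\tilde c-x\|$ with the scalar factor equal to $1$; since the scalar factor $\sin(\tilde\theta-\tilde\beta)/\sin\tilde\theta$ equals one exactly when $\tilde\beta=0$, this is equivalent to $\tilde\beta(x)=0$.

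Next I would translate the condition $\tilde\beta=0$ into a geometric statement on $x$. The identity $\tilde\beta=\angle(u_d,\tilde c-x)=0$ means that the ray from $x$ toward $x_d$ (the direction of $u_d$) points exactly through the virtual center $\tilde c\in\partial\tilde{\mathcal{O}}_i$, i.e.\ the segment $\mathcal{L}_s(x,x_d)$ meets $\tilde{\mathcal{O}}_i$ first at the point $\tilde c$ that is also the nearest boundary point to $x$ along that line. Under Assumption \ref{as:4}, the inward gradient condition $ds_i(\mathrm{x}_i)^\top(x_d-\mathrm{x}_i)/\|x_d-\mathrm{x}_i\|=1$ singles out a unique antipodal point $\mathrm{x}_i$, and the curvature condition precludes any alternative collinear configuration on a flat region of $\partial\tilde{\mathcal{O}}_i$ lying inside the shadow region. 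Hence $\tilde\beta=0$ inside $\tilde{\mathcal{D}}(x_d,i,R)$ occurs precisely on the half-line $\mathcal{L}_h(\mathrm{x}_i,\mathrm{x}_i-x_d)$ intersected with the practical shadow region, which is exactly $\tilde{\mathcal{L}}_d(x_d,\mathrm{x}_i,R)$.

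Third, I would establish convergence to $\tilde\zeta$ via a Lyapunov/LaSalle argument using $V(x)=\tfrac{1}{2}\|x-x_d\|^2$. Since $u_d=-\gamma(x-x_d)$, one obtains $\dot V=(x-x_d)^\top u(x)=-\gamma^{-1}u_d^\top u$. Using $u_d^\top(\tilde c-x)/(\|u_d\|\|\tilde c-x\|)=\cos\tilde\beta$ together with the trigonometric simplification
\[
1-\cos\tilde\beta\,\frac{\sin(\tilde\theta-\tilde\beta)}{\sin\tilde\theta}=\sin\tilde\beta\,\frac{\cos(\tilde\theta-\tilde\beta)}{\sin\tilde\theta},
\]
one gets $u_d^\top u=\|u_d\|^2\sin\tilde\beta\,\cos(\tilde\theta-\tilde\beta)/\sin\tilde\theta\ge 0$, since $\tilde\beta\in[0,\tilde\theta]$ and $\tilde\theta\in(0,\pi/2]$. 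Therefore $\dot V\le 0$ on $\mathcal{F}$, with equality iff $u_d(x)=0$ or $\tilde\beta(x)=0$. Invoking LaSalle's invariance principle on the compact set $\mathcal{F}$ (with forward invariance of $\mathcal{F}$ inherited from the analogue of Lemma \ref{lem3} applied to the practical control \eqref{sensor_based_ctrl}) yields convergence to the largest invariant subset of $\{x\in\mathcal{F}:\dot V=0\}=\{x_d\}\cup\bigcup_{i\in\mathbb{I}}\tilde{\mathcal{L}}_d(x_d,\mathrm{x}_i,R)=\tilde\zeta$; since every point of $\tilde\zeta$ is itself an equilibrium, this invariant subset is all of $\tilde\zeta$.

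The step I expect to be delicate is the geometric characterization in the second paragraph, because for convex obstacles $\tilde c$ depends nonlinearly on $x$ (it is the closest-point projection of $x$ onto the sensed arc), unlike the spherical case where $\tilde c$ is automatically on the line from $x$ through the obstacle center. Ruling out spurious $\tilde\beta=0$ configurations on the lateral parts of $\partial\tilde{\mathcal{O}}_i$ and confirming that the only such configurations in $\tilde{\mathcal{D}}(x_d,i,R)$ are those on the half-line through $\mathrm{x}_i$ is precisely where the sufficient curvature condition of Assumption \ref{as:4} is used; everything else is a direct adaptation of the proof of Lemma \ref{lem8}.
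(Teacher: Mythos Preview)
Your proposal is correct and follows essentially the same route as the paper. The paper's own proof of Lemma~\ref{lem9} is extremely terse: it simply says ``following the same procedure as in Lemma~\ref{lem8}'' and records that $u(x)=0$ iff $x=x_d$ or $x\in\tilde{\mathcal{L}}_d(x_d,\mathrm{x}_i,R)$, noting that on this set $\tilde c=\mathrm{x}_i$ and $(\tilde c-x)/\|\tilde c-x\|$ is aligned with $(x_d-x)/\|x_d-x\|$. Your first two paragraphs reproduce exactly this equilibrium identification, and your geometric argument that the only boundary point where the outward normal is aligned with $x-x_d$ is $\mathrm{x}_i$ (via Assumption~\ref{as:4}) is the right justification.

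Your third paragraph---the Lyapunov computation $\dot V=-\gamma\|x-x_d\|^2\sin\tilde\beta\cos(\tilde\theta-\tilde\beta)/\sin\tilde\theta$ followed by LaSalle---is not in the paper's proof of Lemma~\ref{lem9} itself but is precisely the argument the paper gives later in the proof of Theorem~\ref{the3}, item iv) (to which Theorem~\ref{the4} defers). So you have merely folded into Lemma~\ref{lem9} a computation that the paper postpones to the theorem; the mathematics is identical. Your version is in fact more self-contained, since the lemma as stated does claim convergence rather than just equilibrium characterization.
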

\begin{proof}
See Appendix \ref{appendix:Lemma 9}.
\end{proof}
In addition to the destination, Lemma \ref{lem9} shows that the equilibrium points are the positions aligned with their projection $\tilde c$ and the destination.
\begin{figure}[!h]
\centering
\includegraphics[scale=0.5]{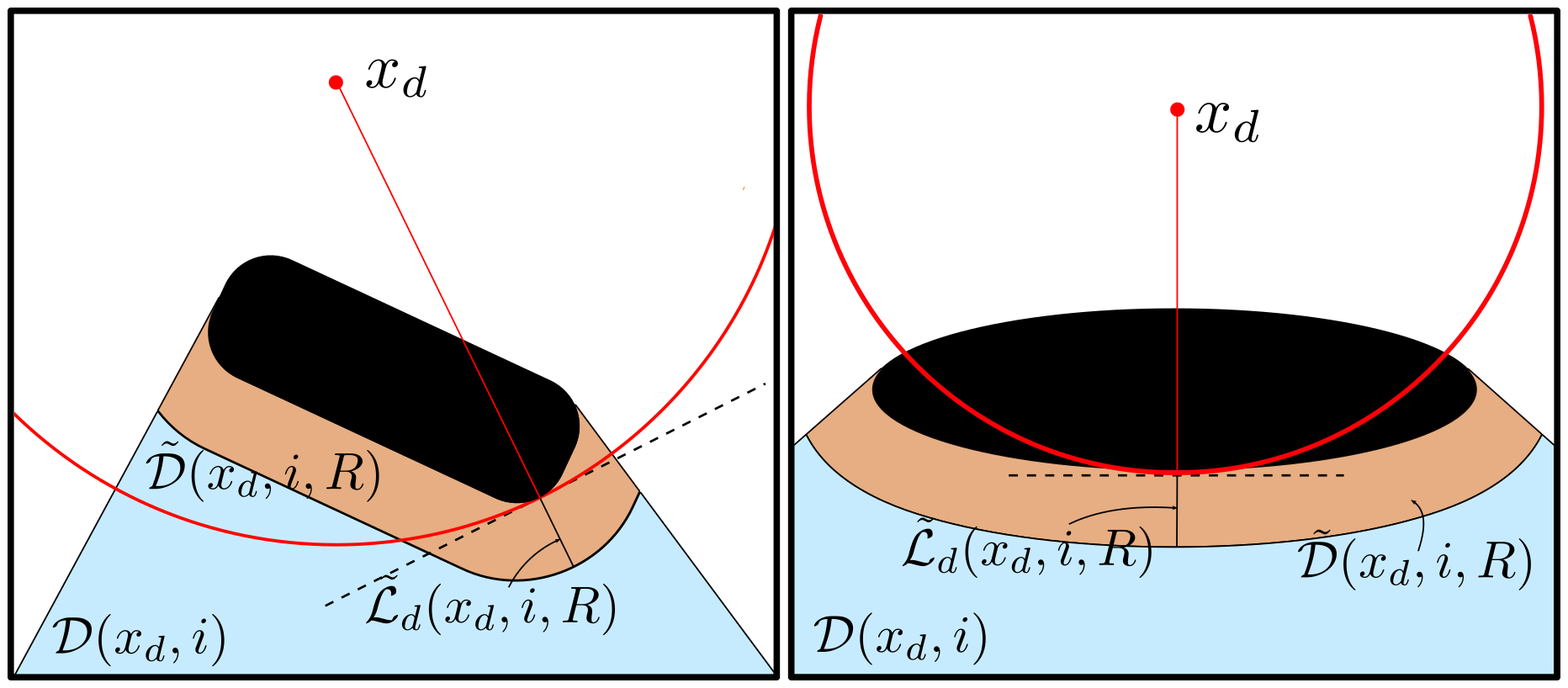}
\caption{Representation of the shadow region and the practical shadow region of a convex obstacle. In the figure on the left, the obstacle satisfies the curvature condition, while in the figure on the right, the obstacle does not satisfy this condition.}
\label{fig:conv_par}
\end{figure}
The same sensor-based procedure applied in sphere worlds is used, except that the elements of the list of arcs are not arcs but convex portions of the detected obstacles. In the case of convex obstacles with non-smooth boundaries, the procedure is modified where the endpoints of each detected portion are dilated with a ball of radius $r>0$, and the cone enclosing the segment crossed by $\mathcal{L}(x,x_d)$ is enlarged, as shown in Fig. \ref{fig:conv_buff}. The objective of dilating the endpoints is to smooth the corners of the obstacles. When an endpoint coincides with the vertex of an obstacle, the dilated endpoint will help to generate a smooth trajectory. If the robot rotates around an obstacle and applies the endpoint dilation on its boundary, a dilated version of this obstacle, given by  $\tilde{\mathcal{O}}_i^{r}=\tilde{\mathcal{O}}_i\oplus\mathcal{B}(0,r)$, will be generated. Therefore, the new free space will be $\mathcal{F}_{r}:=\mathcal{W}\setminus\bigcup\limits_{i=1}^{m} \mathring{\tilde{\mathcal{O}}}_i^{r}$ and the minimum separation distance will be greater than $2r$.    
\begin{figure}[!h]
\centering
\includegraphics[scale=0.35]{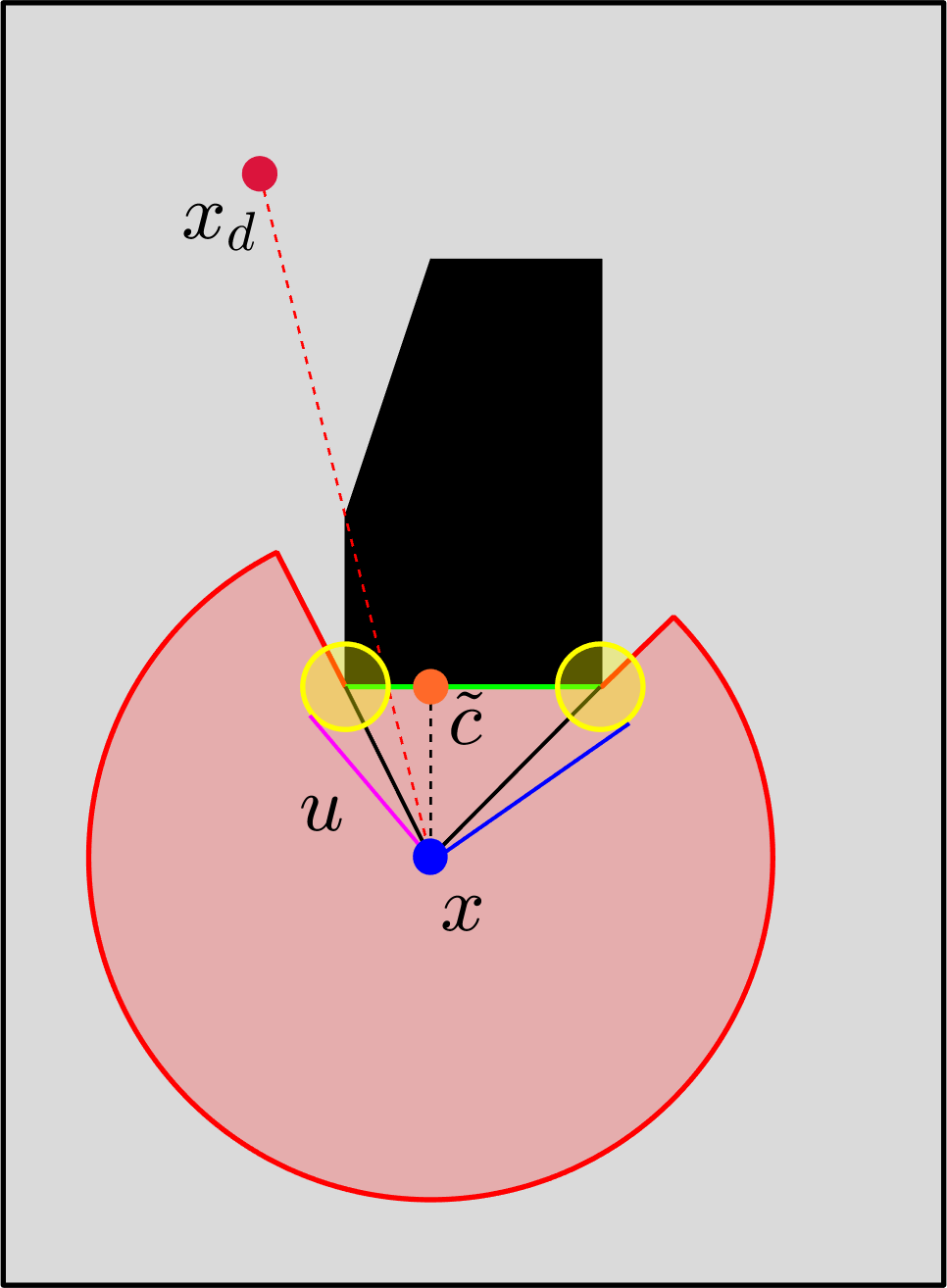}
\caption{Smoothing of polygonal obstacle corners.}
\label{fig:conv_buff}
\end{figure}
\begin{theorem}\label{the4}
    Consider the free space $\mathcal{F}\subset\mathbb{R}^n$ described in \eqref{8}, in the case of convex obstacles and $n=2$ , and the closed-loop system \eqref{12}-\eqref{sensor_based_ctrl}. Under Assumptions \ref{as:1}, \ref{as:2}, and \ref{as:4},  the following statements hold:
\begin{itemize}
\item [i)] The set $\mathcal{F}$ is forward invariant.
\item [ii)] All trajectories converge to the set $\tilde{\zeta}=\{x_d\}\cup\left(\cup_{i\in\mathbb{I}}\tilde{\mathcal{L}}_d(x_d,\mathrm{x}_i,R)\right)$.
\item [iii)] The set of undesired equilibria $\cup_{i\in\mathbb{I}}\tilde{\mathcal{L}}_d(x_d,\tilde{x}_i,R)$ is unstable.
\item [iv)] The equilibrium point $x_d$ is almost globally asymptotically stable on $\mathcal{F}$.
\end{itemize}
\end{theorem}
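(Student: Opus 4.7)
The plan is to mirror the proof of Theorem \ref{the3}, adapting each step to handle smooth convex obstacles via the curvature condition of Assumption \ref{as:4}. The four items will be established in the order i), ii), iii), iv).

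For item i), I would invoke Nagumo's theorem applied to the closed set $\mathcal{F}$. At any boundary point $x\in\partial\tilde{\mathcal{O}}_i$, the Bouligand tangent cone $\mathcal{T}_{\mathcal{F}}(x)$ is the closed half-space bounded by the tangent hyperplane to $\tilde{\mathcal{O}}_i$ at $x$, because obstacles have smooth boundaries. By construction of the list $\mathcal{LA}_e(x)$ and the virtual parameters $\tilde c$, $\tilde\theta$, $\tilde\beta$ in \eqref{vrt_center}--\eqref{vrt_theta}, the control law \eqref{sensor_based_ctrl} produces a velocity that lies on the surface of the virtual enclosing cone at $x$ and thus inside $\mathcal{T}_{\mathcal{F}}(x)$; for non-smooth boundaries, the endpoint dilation with radius $r>0$ guarantees that the virtual cone stays strictly outside the dilated obstacle, so the same argument applies with $\mathcal{F}$ replaced by $\mathcal{F}_r$. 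This yields forward invariance of $\mathcal{F}$ (or $\mathcal{F}_r$). Item ii) is exactly Lemma \ref{lem9}.

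For item iii), the instability of $\cup_{i\in\mathbb{I}}\tilde{\mathcal{L}}_d(x_d,\mathrm{x}_i,R)$, I would take a point $x^*$ on one such segment and consider a small transverse perturbation $x=x^*+\epsilon w$ with $w$ orthogonal to $\mathrm{x}_i-x_d$. The key is to exploit Assumption \ref{as:4}: because $\tilde{\mathcal{O}}_i\subset\mathcal{B}(x_d,\|x_d-\mathrm{x}_i\|)$, the virtual center $\tilde c(x)$ varies so that, as $x$ moves laterally off $\tilde{\mathcal{L}}_d$, the angle $\tilde\beta$ becomes strictly positive while the projected control acquires a transverse component pointing \emph{away} from the segment. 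A Lyapunov-type argument using $V(x)=\|x-x^*\|^2$ restricted to the direction orthogonal to $\mathrm{x}_i-x_d$ then yields $\dot V>0$ in a punctured neighborhood of $\tilde{\mathcal{L}}_d$, proving instability. The stable manifold of each such segment is at most one-dimensional in a two-dimensional workspace and has therefore zero Lebesgue measure.

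Item iv) follows by combining ii) and iii): every trajectory converges to $\tilde{\zeta}=\{x_d\}\cup(\cup_{i\in\mathbb{I}}\tilde{\mathcal{L}}_d(x_d,\mathrm{x}_i,R))$, and the basin of attraction of the undesired equilibria is zero-measure by iii), so $x_d$ attracts almost every $x(0)\in\mathcal{F}$; local exponential stability of $x_d$ follows as in Theorem \ref{the3} since $u\equiv u_d=-\gamma(x-x_d)$ in a visible neighborhood of $x_d$. The hardest step will be the transverse-repulsion property in item iii): the curvature condition must be translated quantitatively into monotonicity of $\tilde c(x)$ and $\tilde\beta(x)$ in the transverse direction, and because $\tilde c$ is defined via a minimization, it may behave non-smoothly when the active extended arc changes (the discontinuity phenomenon highlighted in Fig.~\ref{Discontinuity}). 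This will require either a case-by-case analysis or a reduction to a neighborhood of $x^*$ on which the active arc $\tilde L^*$ is fixed, after which the sphere-world argument of Theorem \ref{the3} applies with $r_i$ replaced by the local radius of curvature guaranteed by Assumption \ref{as:4}.
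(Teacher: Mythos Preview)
Your overall structure---mirror Theorem~\ref{the3}, items i), ii), iii), iv)---is exactly what the paper does, and items i) and ii) are treated the same way. The substantive differences are in iii) and iv).

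For item iii), you propose the transverse squared distance $V(x)=(w^\top(x-x^*))^2$ with $w\perp(\mathrm{x}_i-x_d)$ and then hope to show $\dot V>0$ by tracking how $\tilde c(x)$ and $\tilde\beta(x)$ move under lateral perturbations. The paper instead uses the angular Chetaev function
\[
V(x)=1-\frac{(\bar x_i-x_d)^\top}{\|\bar x_i-x_d\|}\frac{(x-\tilde c)}{\|x-\tilde c\|},
\]
which builds the nearest point $\tilde c$ directly into $V$. This choice collapses the computation to $\dot V=K\sin(\tilde\varphi)\sin(\tilde\beta)$ with $\tilde\varphi=\varphi_d+\tilde\beta$, and Assumption~\ref{as:4} enters in exactly one line: it forces $\tilde\varphi\in(0,\pi)$, hence $\dot V>0$ off the segment. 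Your distance-based $V$ would ultimately require the same geometric fact, but you would have to extract it through differentiation of the metric projection $x\mapsto\tilde c(x)$ rather than having it fall out of an inner product; the paper's function is the cleaner route and bypasses the monotonicity analysis you flag as ``the hardest step''.

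For item iv), your appeal to a stable-manifold theorem is a genuine gap: the undesired equilibrium set is a \emph{continuum} (a line segment), not a hyperbolic fixed point, so the standard stable-manifold theorem does not apply directly, and ``instability'' alone does not bound the dimension of the basin. The paper does not use any stable-manifold machinery; it reuses the global Lyapunov function $V_1(x)=\tfrac12\|x-x_d\|^2$ from the proof of Theorem~\ref{the3}, computes $\dot V_1=-\gamma\|x-x_d\|^2\sin(\tilde\beta)\cos(\tilde\theta-\tilde\beta)/\sin(\tilde\theta)\leq0$ on $\widetilde{\mathcal{BL}}$, and combines this monotone decrease with the Chetaev escape of item iii) to conclude that the attraction region of each $\tilde{\mathcal{L}}_d(x_d,\mathrm{x}_i,R)$ is confined to a line segment, hence of zero Lebesgue measure.
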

\begin{proof}
See Appendix \ref{appendix:the4}.
\end{proof}
Theorem \ref{the4} shows that the sensor-based strategy designed for sphere worlds extends to convex worlds with obstacles satisfying the curvature condition of Assumption \ref{as:4}, and the results are preserved. For convex obstacles with non-smooth boundaries, we consider their dilated version $\tilde{\mathcal{O}}_i^{r}$ and the free space $\mathcal{F}_r$ which amounts to the case of obstacles with smooth boundaries.
\section{Numerical simulation}
 To explore the extent of what our {\it quasi-optimal} trajectories can offer in terms of the shortest path in the multiple obstacle case, we compare the trajectories generated by our approach with the shortest paths obtained with Dijkstra's algorithm (DA) on a tangent visibility graph (TVG). We used 10 different and highly congested two-dimensional environments and 100 randomly selected initial positions in each environment. The percentage of perfect matches of the paths is reported in Table \ref{table_1}, which shows a high rate of success. Fig. \ref{10spaces} shows a sample of 10 trajectories generated from 10 randomly selected initial positions in two of the ten environments used in our simulations. A simulation video can be found at \url{https://youtu.be/SE8w8vabxVE}. 
 The effect of successive projections on the optimality of the path generated by our approach is illustrated in Fig. \ref{Long_Short}, where one can see that the path generated by our approach coincides with the shortest path in a single-obstacle workspace, while it does not in a two-obstacle workspace.
\begin{figure}[h!]
\renewcommand*\thesubfigure{\arabic{subfigure}}
     \centering
     \subfloat[]{\includegraphics[width=0.26\linewidth,height=0.26\linewidth,keepaspectratio]{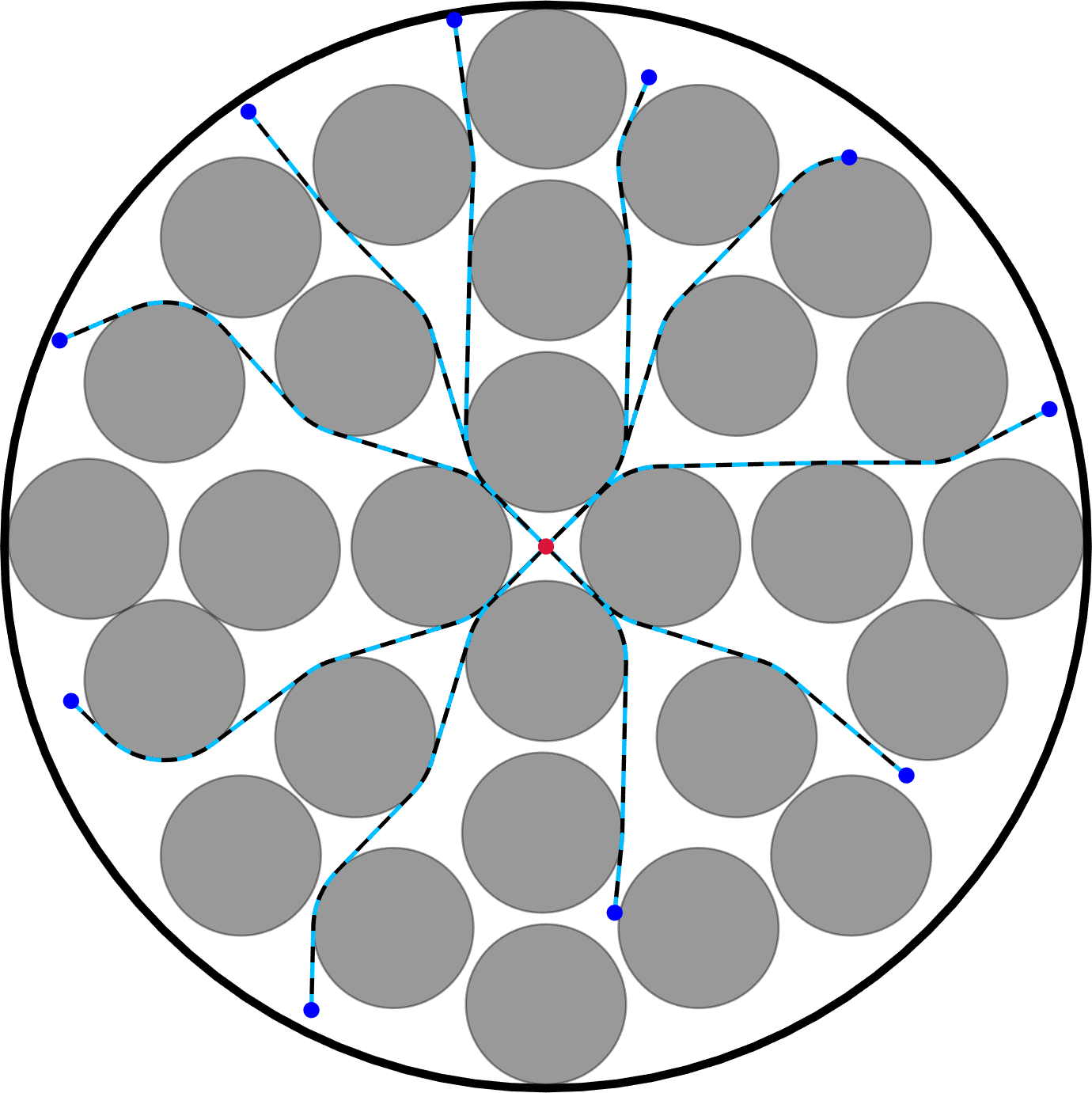}\label{s1}}
     \subfloat[]{\includegraphics[width=0.26\linewidth,height=0.26\linewidth,keepaspectratio]{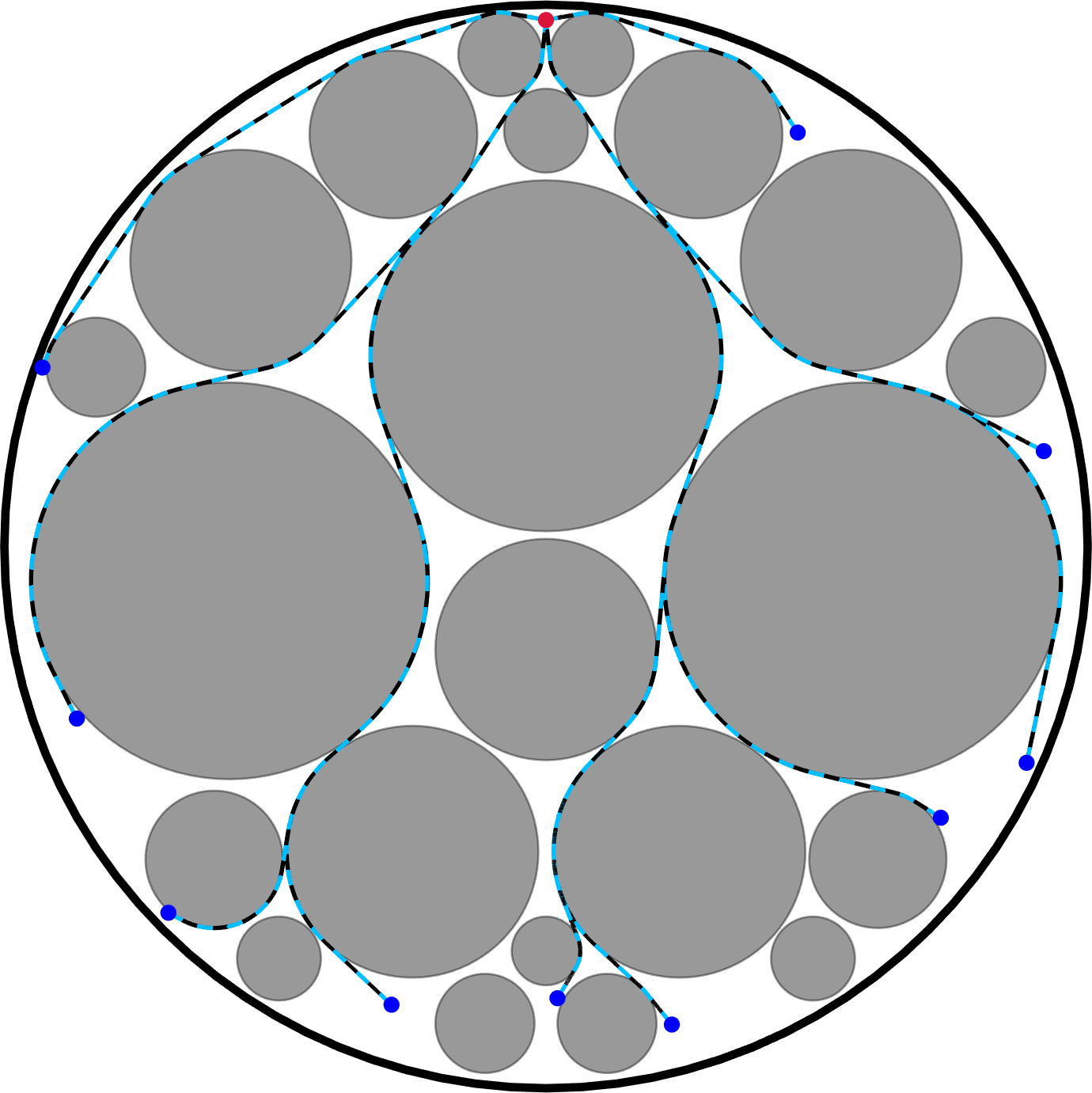}\label{s3}}
     \subfloat[]{\includegraphics[width=0.26\linewidth,height=0.26\linewidth,keepaspectratio]{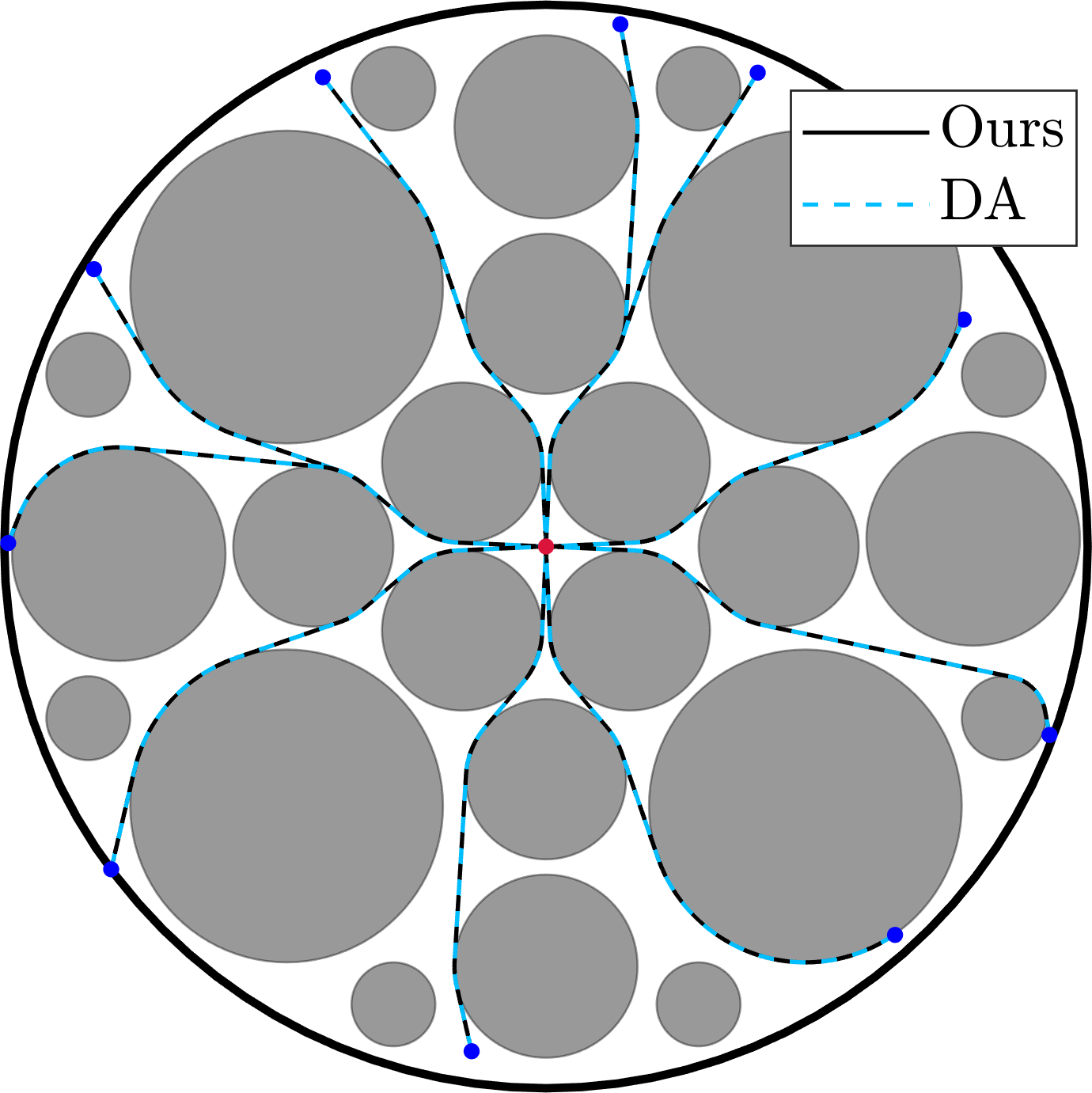}\label{s2}}\\
     \subfloat[]{\includegraphics[width=0.26\linewidth,height=0.26\linewidth,keepaspectratio]{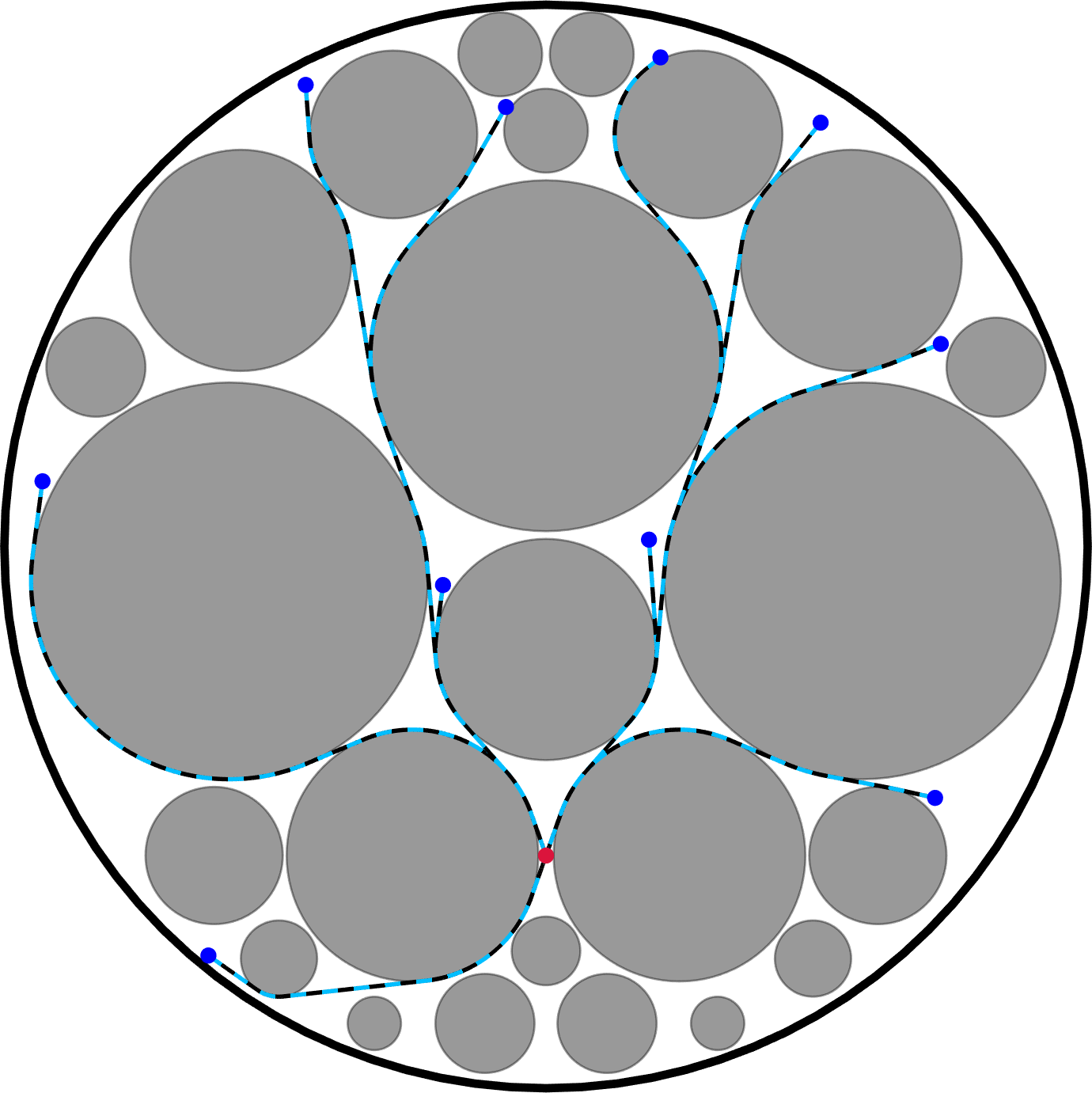}\label{s4}}
     \subfloat[]{\includegraphics[width=0.26\linewidth,height=0.26\linewidth,keepaspectratio]{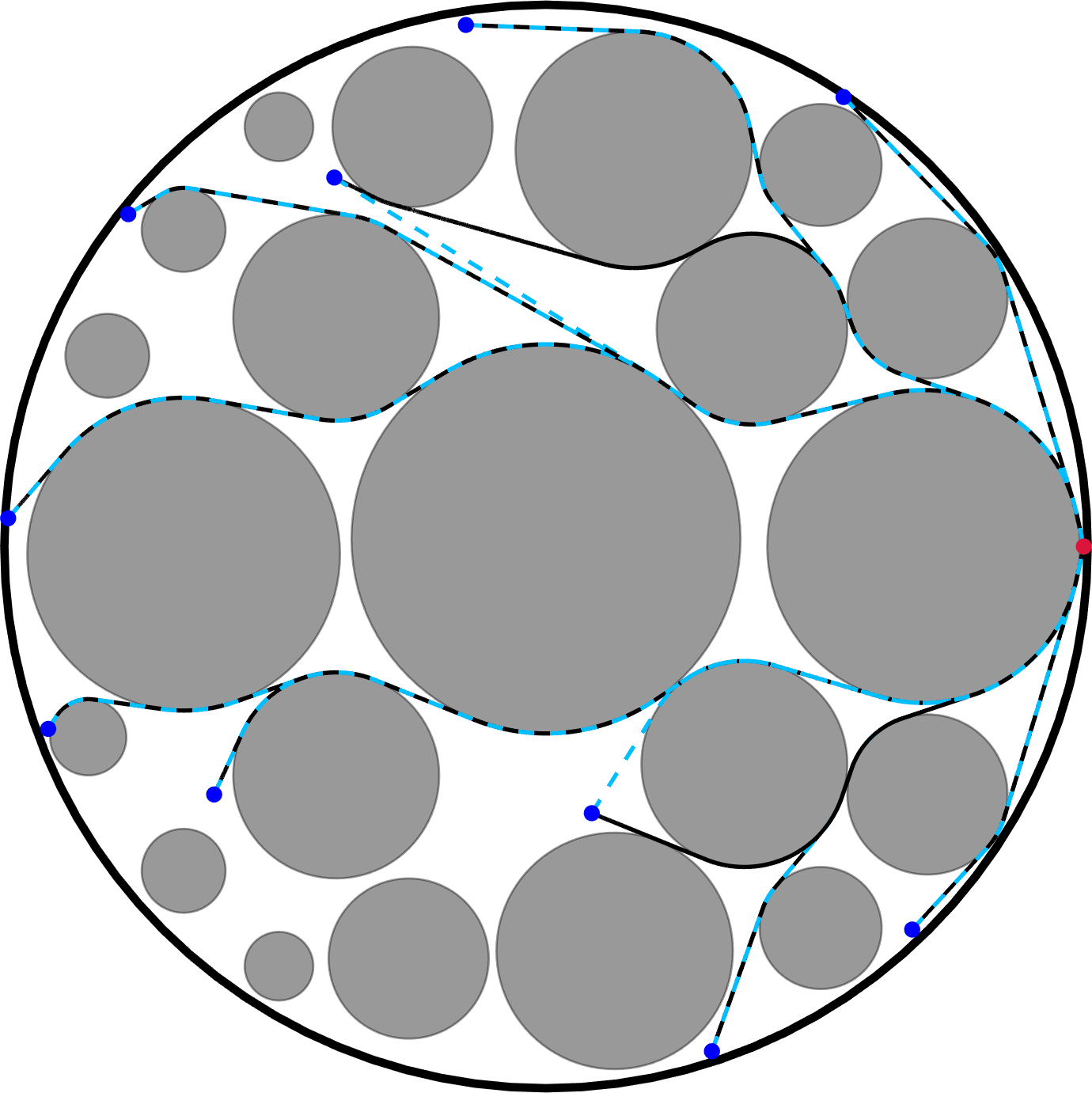}\label{s5}}
     \subfloat[]{\includegraphics[width=0.26\linewidth,height=0.26\linewidth,keepaspectratio]{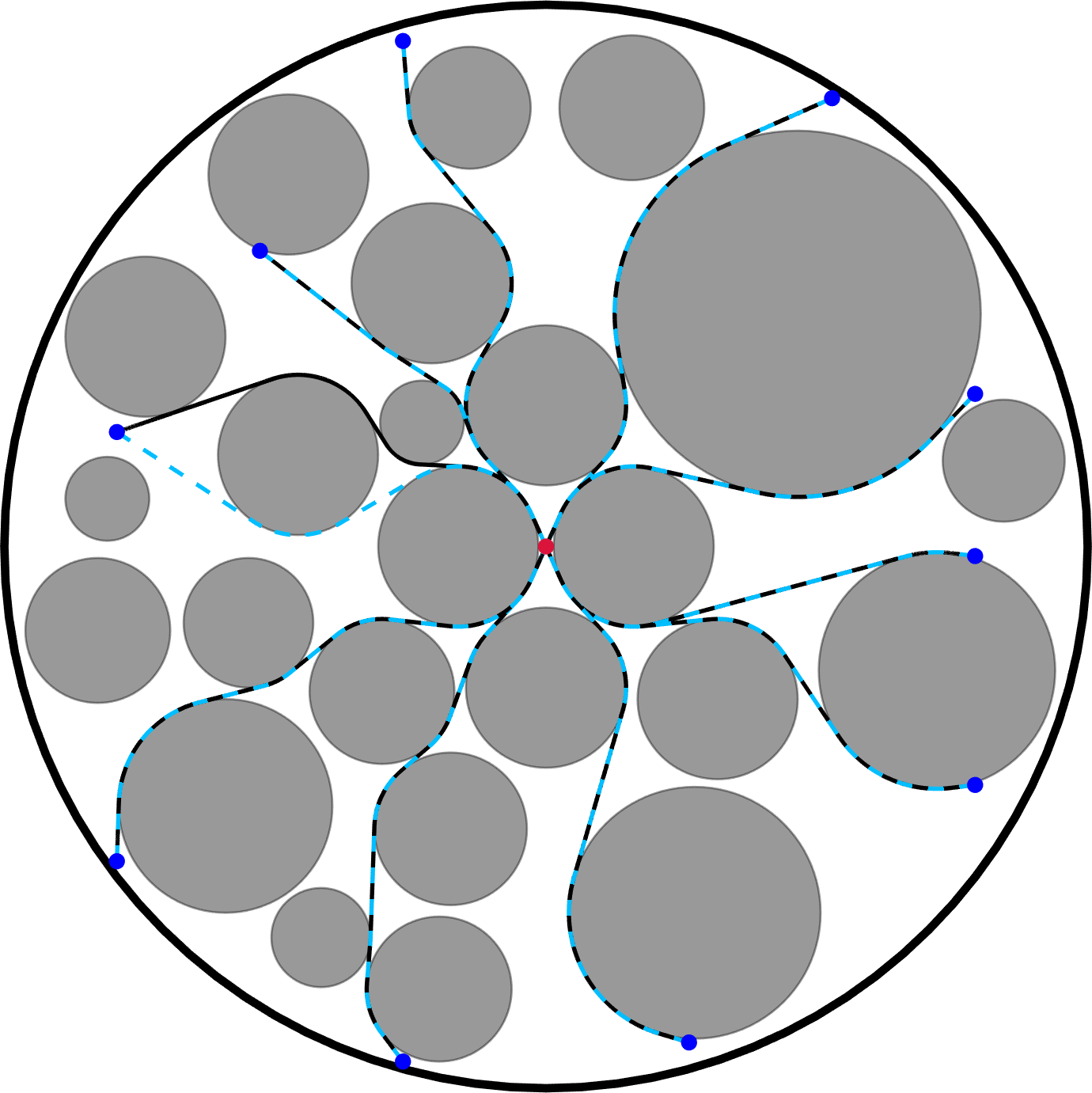}\label{s6}}\\
     \subfloat[]{\includegraphics[width=0.26\linewidth,height=0.26\linewidth,keepaspectratio]{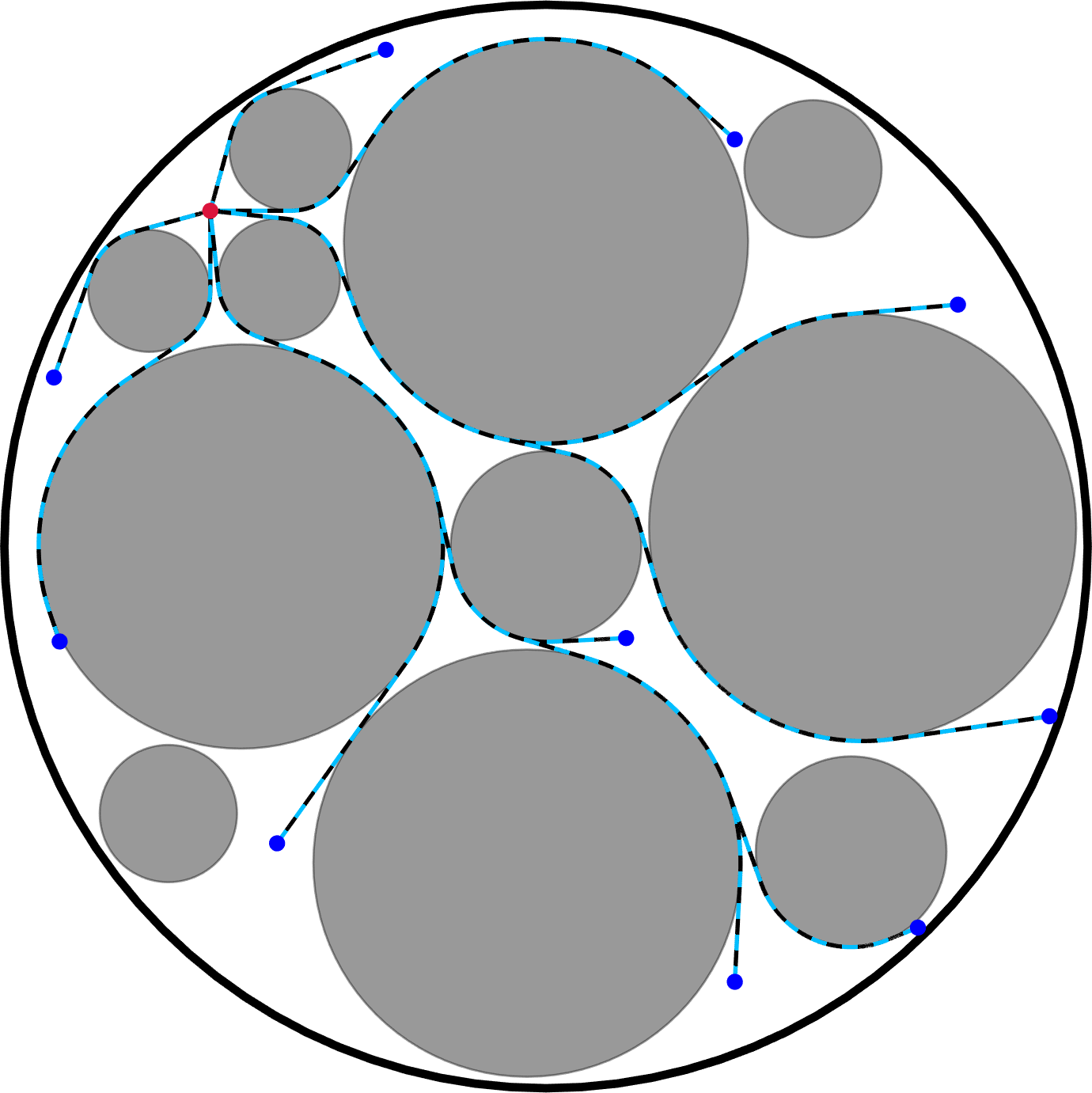}\label{s7}}
     \subfloat[]{\includegraphics[width=0.26\linewidth,height=0.26\linewidth,keepaspectratio]{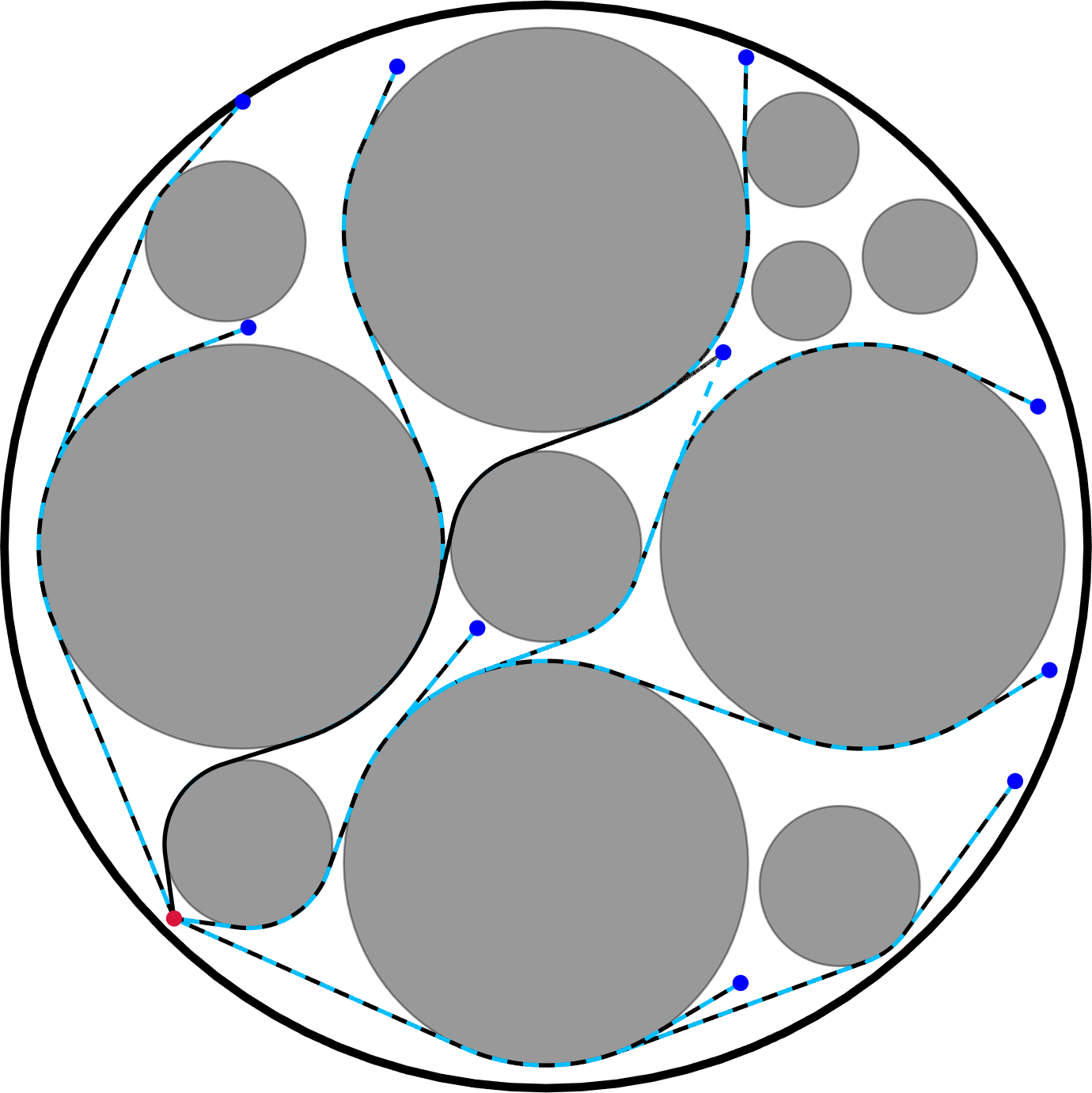}\label{s8}}
     \subfloat[]{\includegraphics[width=0.26\linewidth,height=0.26\linewidth,keepaspectratio]{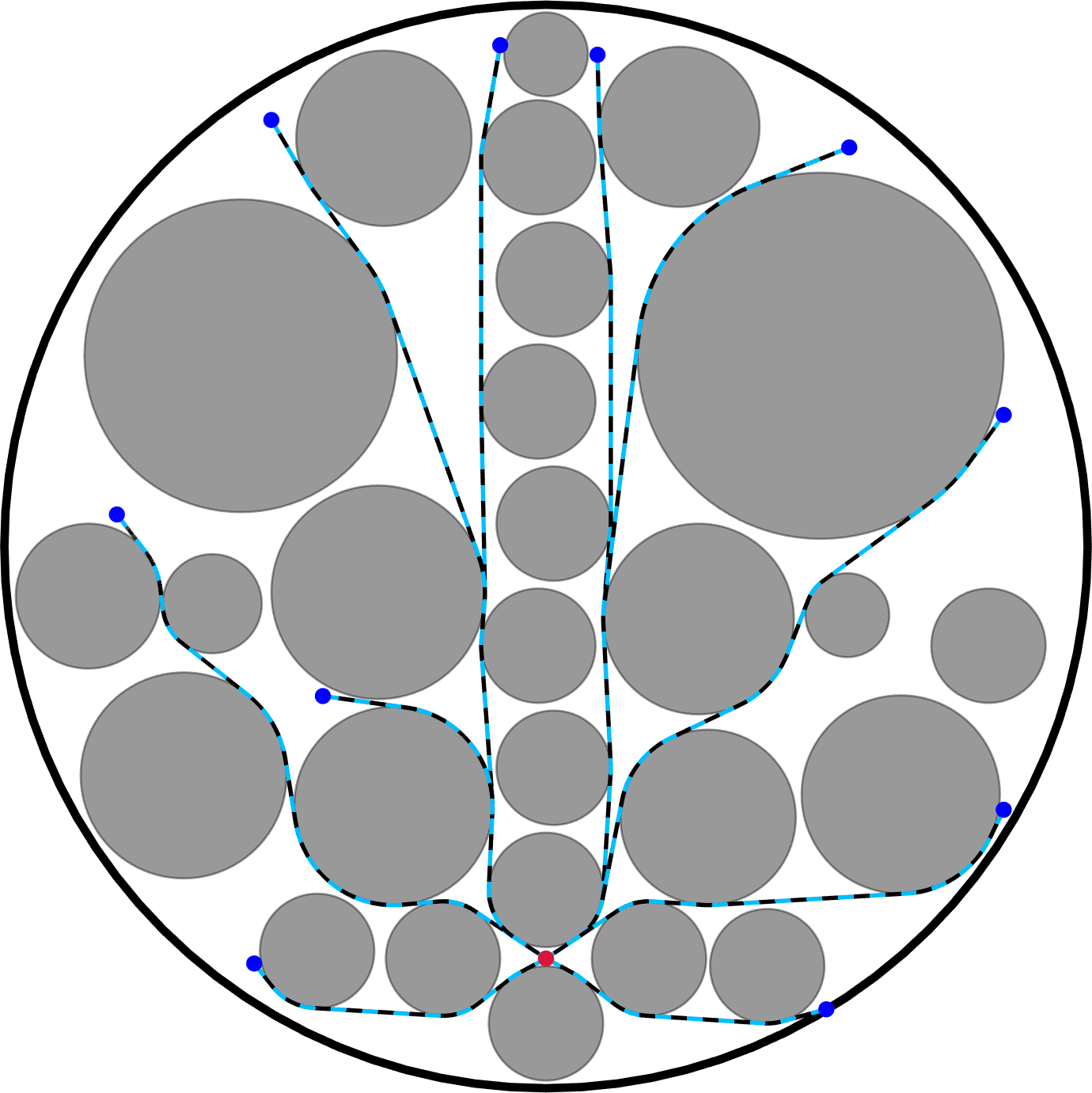}\label{s9}}\\
     \subfloat[]{\includegraphics[width=0.26\linewidth,height=0.26\linewidth,keepaspectratio]{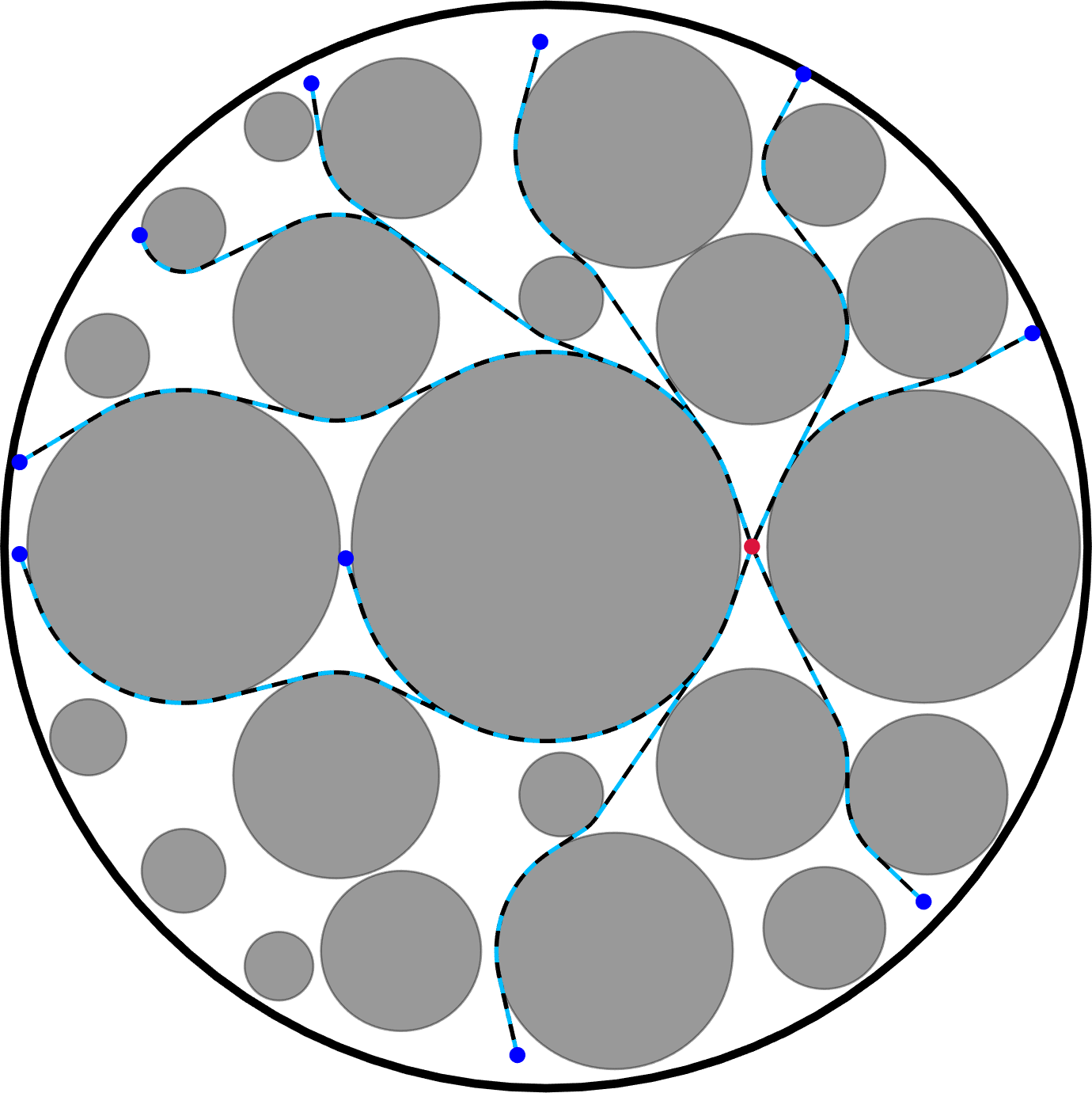}\label{s10}}
     \caption{Trajectories generated by our approach (Algorithm \ref{alg1}) and the optimal path found by DA in tangent visibility graphs, for 10 different initial positions in 10 different environments. The target location is indicated with a red dot.}
     \label{10spaces}
\end{figure}
\begin{table}[h!]
\caption{Number of perfect matches between the paths generated by our approach (Algorithm \ref{alg1}) and those found by DA in tangent visibility graphs, for 100 runs with 100 randomly selected initial positions. }
\label{table_1}
\begin{center}
\begin{tabular}{|c||c||c||c||c|}
\hline
\textbf{Space 1} & \textbf{Space 2} & \textbf{Space 3} & \textbf{Space 4} & \textbf{Space 5}\\
\hline
100$\%$ & 98$\%$ &100$\%$ &100$\%$ &81$\%$\\
\hline
\textbf{Space 6} & \textbf{Space 7} & \textbf{Space 8} & \textbf{Space 9} & \textbf{Space 10}\\
\hline
96$\%$ & 99$\%$ &94$\%$ &94$\%$ &99$\%$\\
\hline
\end{tabular}
\end{center}
\end{table}
\begin{figure}[h!]
     \centering
     \subfloat[]{\includegraphics[width=0.45\linewidth,height=0.45\linewidth,keepaspectratio]{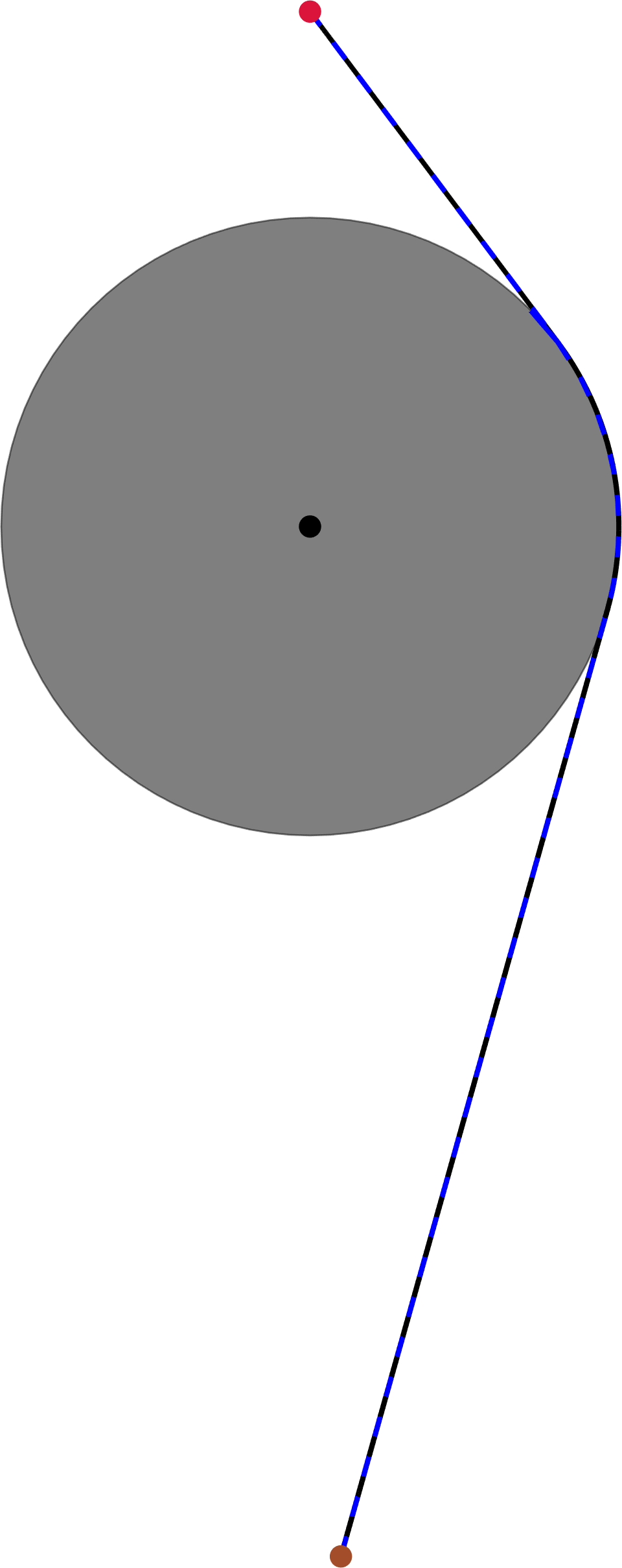}\label{L1}}
    \hspace{0.01cm}
     \subfloat[]{\includegraphics[width=0.45\linewidth,height=0.45\linewidth,keepaspectratio]{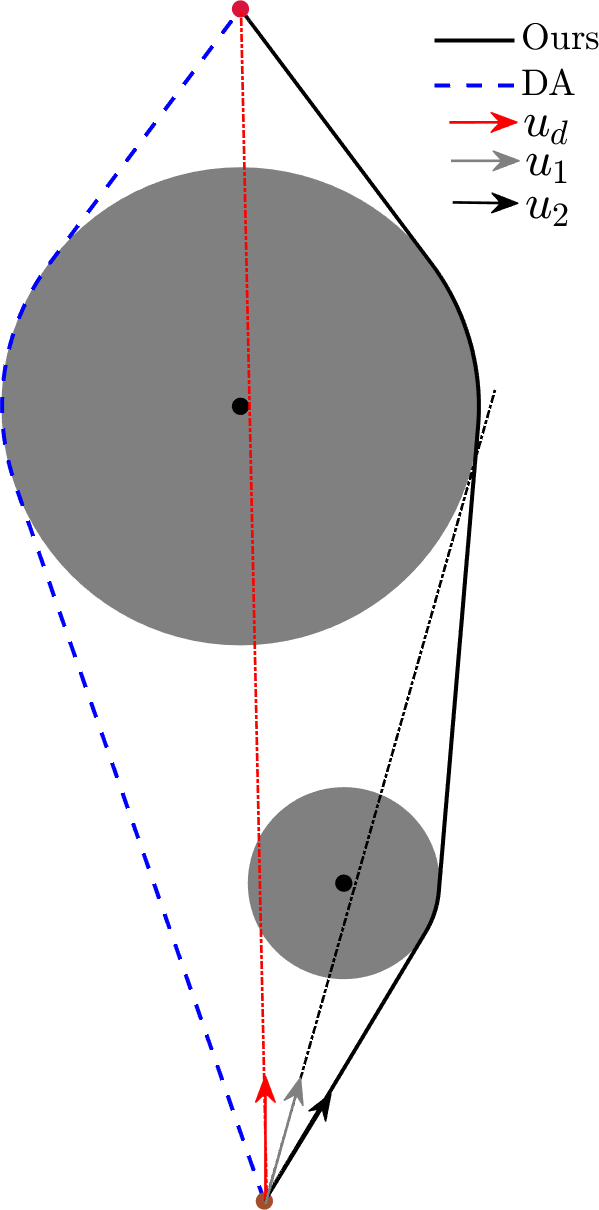}\label{L2}}\\
     \caption{Effect of the nested projections on the optimality of the generated trajectory. In Fig. (a), a single obstacle is considered, and the trajectory generated by our approach corresponds to the DA trajectory. In Fig. (b), a second obstacle is considered. The trajectory generated by our approach differs from the DA trajectory.}
     \label{Long_Short}
\end{figure}
\begin{rem}
    The combination TVG-DA has been used as a benchmark to test the optimality of the paths generated by our approach. The advantages of our approach w.r.t. the TVG-DA are as follows:
    \begin{itemize}
        \item We solve the problem from a control perspective, as our solution is feedback-based, allowing us to solve the navigation problem in one go, whereas TVG-DA only gives the shortest path to be tracked by another feedback controller.
        \item We propose a closed-form solution which is more suitable for real-time implementations (computationally efficient) than searching tangent visibility graphs.
        \item We solve the navigation problem in $n$-dimensional sphere worlds while the TVG-DA is limited to paths in two-dimensional sphere worlds as the TVG is infinite in higher dimensions. 
    \end{itemize}
\end{rem}
To visualize the properties of our approach, we consider two different scenarios. In the first scenario, we assume that the robot evolves in $\mathbb{R}^2$ where the workspace contains twenty-six obstacles, and the destination is $x_d=[0\;0]^\top$. We run the simulation from fifteen different initial positions. In the second scenario, the considered space is $\mathbb{R}^3$, where the workspace contains eighteen obstacles, and the goal is $x_d=[0\;0\;0]^\top$. We run the simulation from eighteen different initial positions. A comparison of our approach with the navigation function approach (NF) \citep{k_R_90} and the separating hyperplane approach with the Voronoi-adjacent obstacle sensing model (SH) \citep{Arslan2019} is established in the two-dimensional space. The simulation results in Fig. \ref{fig:2dcomp} and \ref{fig:3dsim}  show that all the trajectories generated by our control strategy are safe and converge to the red target. In addition, Fig. \ref{fig:2dcomp} shows the superiority of our approach over the two other methods in terms of the length of the generated collision-free paths where it generates the same paths as DA. Moreover, Table \ref{table_2} reports the relative length difference of the paths generated by the NF and SH approaches with respect to our approach. For each initial position $p_i$, $i\in\{1,\dots,15\}$, in Fig. \ref{fig:2dcomp}, we computed the relative length difference $RLD_i^a=100(l_i^a-l_i^{0})/l_i^{0}$, $a\in\{NF,SH\}$, where $l_i^{NF}$ (resp. $l_i^{SH}$) is the length of the $i$th path generated by the NF approach (resp. SH approach), and $l_i^{0}$ is the length of the path generated by our approach. The positive numbers in Table \ref{table_2} indicate that, for all 15 initial conditions, our approach generated shorter paths than the NF and SH approaches. This superiority is mainly due to the uncontrolled repulsion exerted by the obstacles on the robot in the NF and SH approaches. It becomes clear in the single obstacle case where the robot is repelled even if it has a clear line-of-sight to the destination, which is shown in the simulation result in Fig. \ref{fig:1obs_2d_comp}, where the pink initial positions are in the visible set while the green initial positions are in the shadow region. The trajectories generated by our approach are the shortest in terms of distance, as shown in Lemma \ref{lem1}. The simulation video of Fig. \ref{fig:3dsim} can be found at \url{https://youtube.com/shorts/yJCdRLdQHnc}.\\
\begin{figure}[h!]
\centering
\includegraphics[scale=0.4]{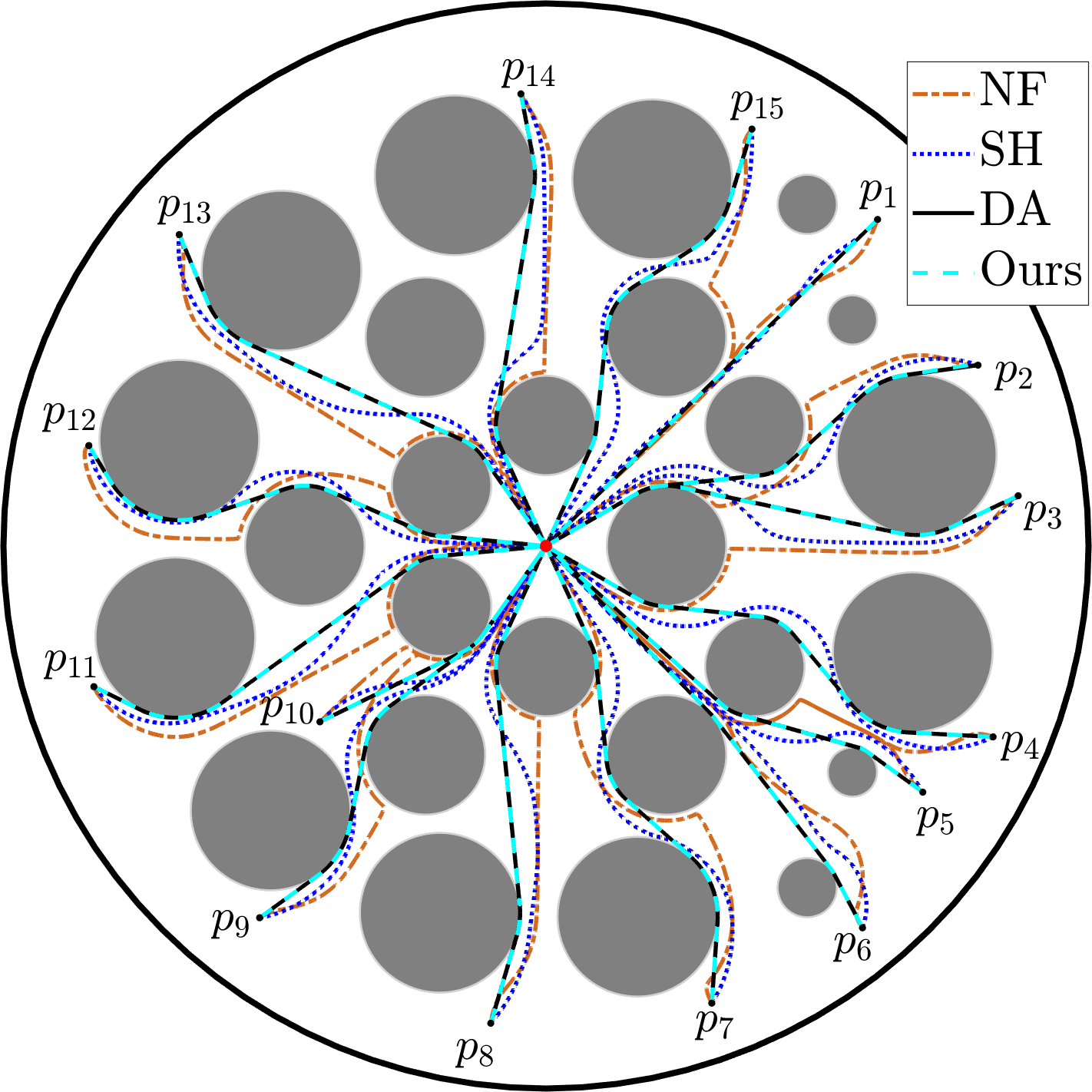}
\caption{Trajectories generated by our approach, SH, NF and DA in a two-dimensional sphere world.}
\label{fig:2dcomp}
\end{figure}
\begin{table}[h!]
\caption{The relative length difference of the paths, shown in Fig. \ref{fig:2dcomp}, generated by the NF and SH approaches with respect to our approach. 
}
\label{table_2}
\begin{center}
\begin{tabular}{|c||c||c|}
\hline
Paths & $RLD^{NF}$ (\%)& $RLD^{SH}$ (\%)\\
\hline
$p_1$ & 1.18 &0.27\\
\hline
$p_2$ & 11.36 &7.59 \\
\hline
$p_3$ & 8.6 &5.2\\
\hline
$p_4$ & 5.93 &7.23 \\
\hline
$p_5$ & 6.57 &3.64\\
\hline
$p_6$ & 4.26 &2.43 \\
\hline
$p_7$ & 13.35 &7.15\\
\hline
$p_8$ & 6.6 &3.79 \\
\hline
$p_9$ & 11.34 &5.47\\
\hline
$p_{10}$ & 6.63 &2.98 \\
\hline
$p_{11}$ & 9.79 &3.91\\
\hline
$p_{12}$ & 14.08 &5.05 \\
\hline
$p_{13}$ & 9.24 &4.24 \\
\hline
$p_{14}$ & 7.23 &4.31\\
\hline
$p_{15}$ & 4.65 &6.96 \\
\hline
\end{tabular}
\end{center}
\end{table}
\begin{figure}[h!]
\centering
\begin{minipage}{.4\textwidth}
  \centering
  \includegraphics[width=1\linewidth]{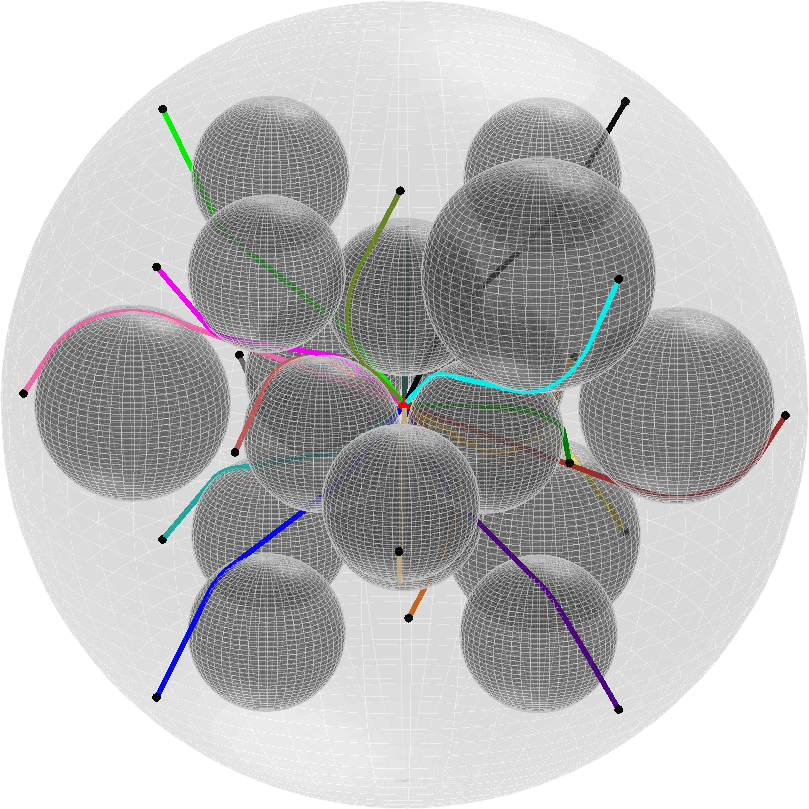}
  \captionof{figure}{Robot safe navigation from eighteen different initial positions in a three-dimensional sphere world.}
  \label{fig:3dsim}
\end{minipage}
\begin{minipage}{.4\textwidth}
  \centering
  \includegraphics[width=1\linewidth]{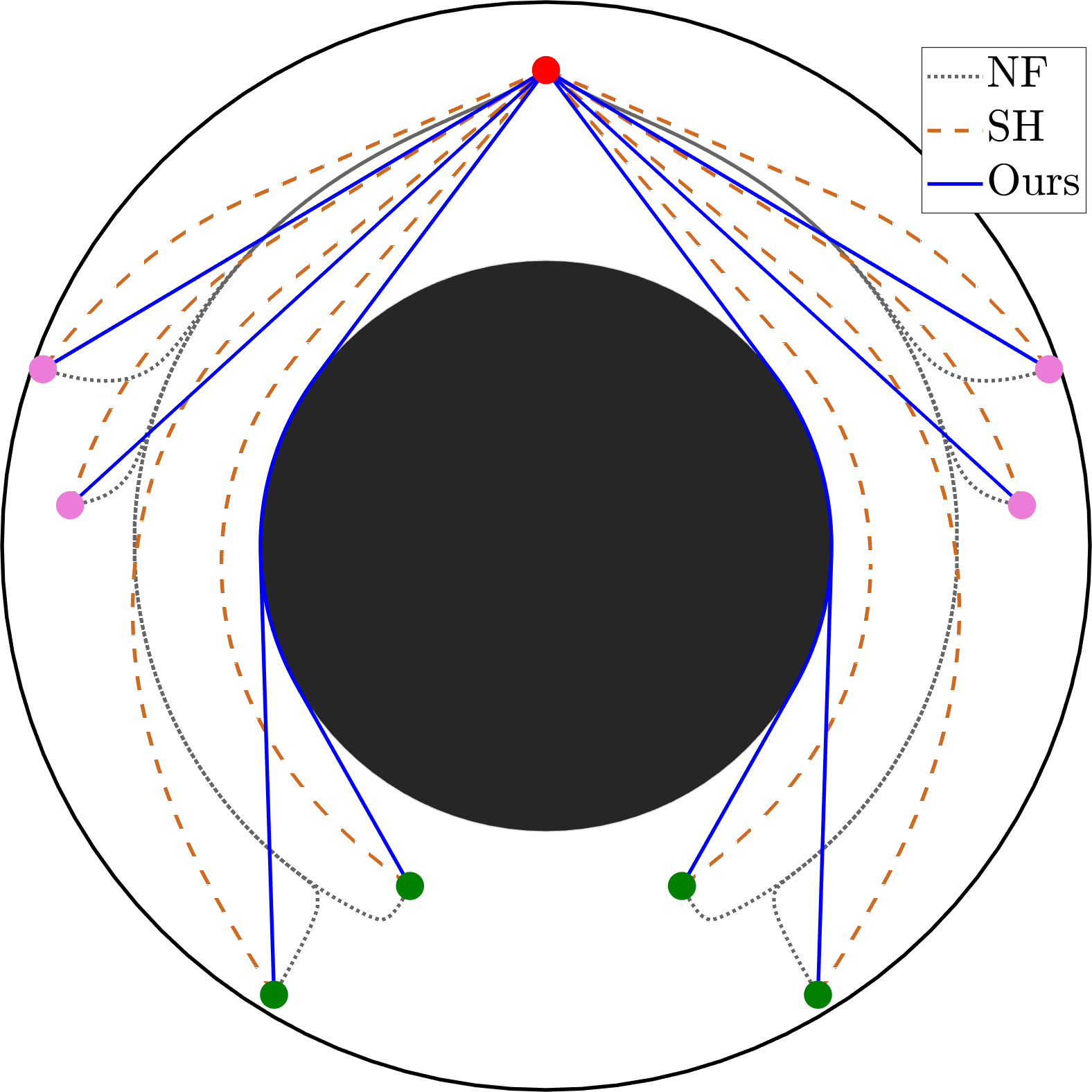}
  \captionof{figure}{Comparison of paths generated by our approach, SH, and NF in a single two-dimensional sphere world.}
  \label{fig:1obs_2d_comp}
\end{minipage}
\end{figure}
Let us test our control in a two-dimensional space that does not satisfy Assumption \ref{as:3}. We consider six different initial positions. Three are inside the nests, and the remaining three are outside but in the vicinity of their boundaries (undesired equilibria). The results of the simulation are shown in Fig. \ref{fig:2d_3obs_sim}. The trajectories starting from the nests stay inside, while the three remaining trajectories reach their destination safely. We can see that nests are indeed the attraction region of the undesired equilibria.    
\begin{figure}[h!]
\centering
\includegraphics[scale=0.4]{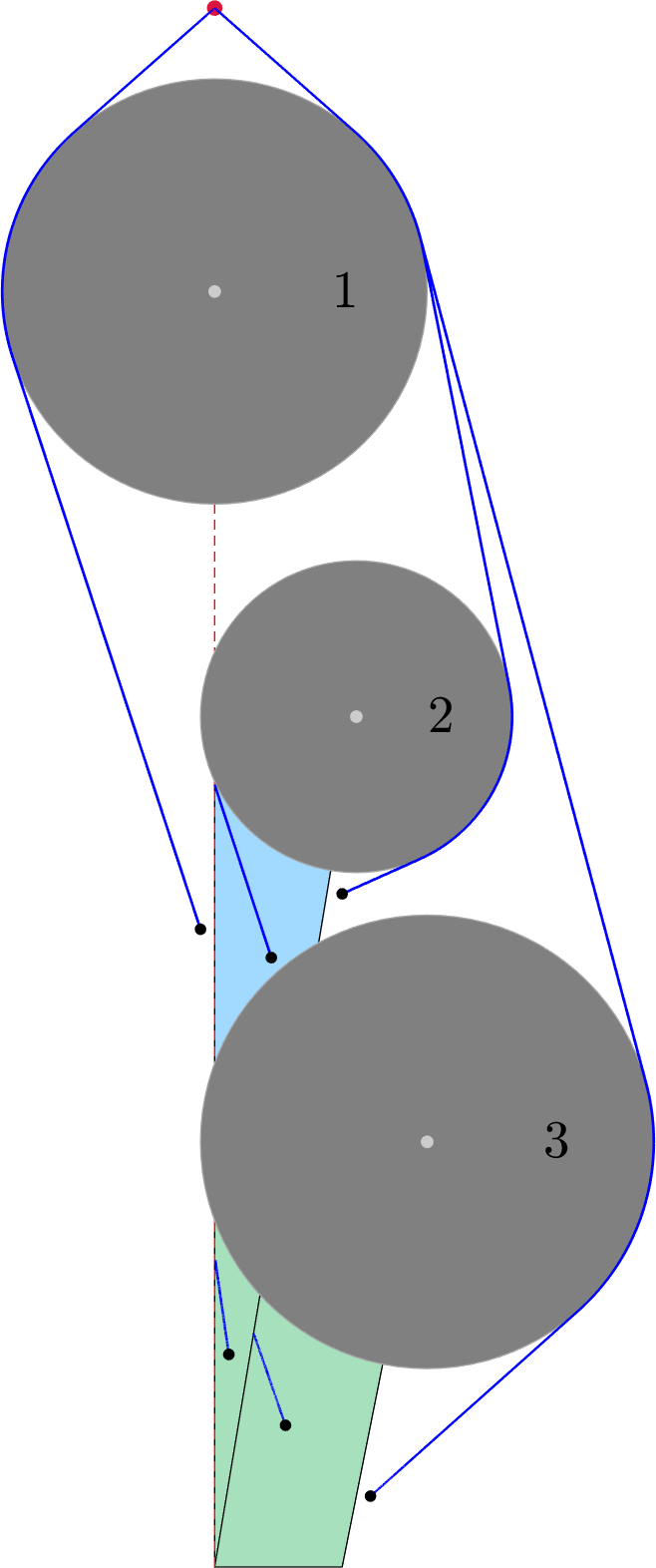}
\caption{Invariance of the nests.}
\label{fig:2d_3obs_sim}
\end{figure}
\subsection{Sensor-based implementation in \textit{a priori} unknown environments}
\subsubsection{MATLAB Simulation}
We tested our sensor-based strategy \eqref{sensor_based_ctrl} in a sphere world, as shown in Fig. \ref{fig:Sensor_sim}, where we used a $360^\circ$ LiDAR model with $1^\circ$ resolution and two radial ranges $R=2m$ and $R=4m$. We plotted the trajectories generated by Algorithm 2 and Algorithm 3. The results in Fig. \ref{fig:Sensor_sim}, clearly show a decrease in performance, in terms of path length, when navigating in an {\it a priori} unknown environment relying on a sensor. This is expected since, in the sensor-based approach, the information available about the workspace is limited to the sensor's detection zone. It is, therefore, impossible to predict \textit{a priori} the obstacles to be avoided before the sensor detects them. In fact, under control \eqref{36}, where global information on the environment is available, after avoiding an obstacle, the robot already knows the next obstacle to avoid, resulting in a {\it quasi-optimal} trajectory. Overall, the sensor-based approach provides a \textit{local} optimal solution in the sense that each local avoidance maneuver is optimal when the robot is close enough to the detected obstacle. The simulation video can be found at \url{https://youtu.be/cnWoxi-lGvw}.\\
\begin{figure}[h!]
\centering
\includegraphics[scale=0.38]{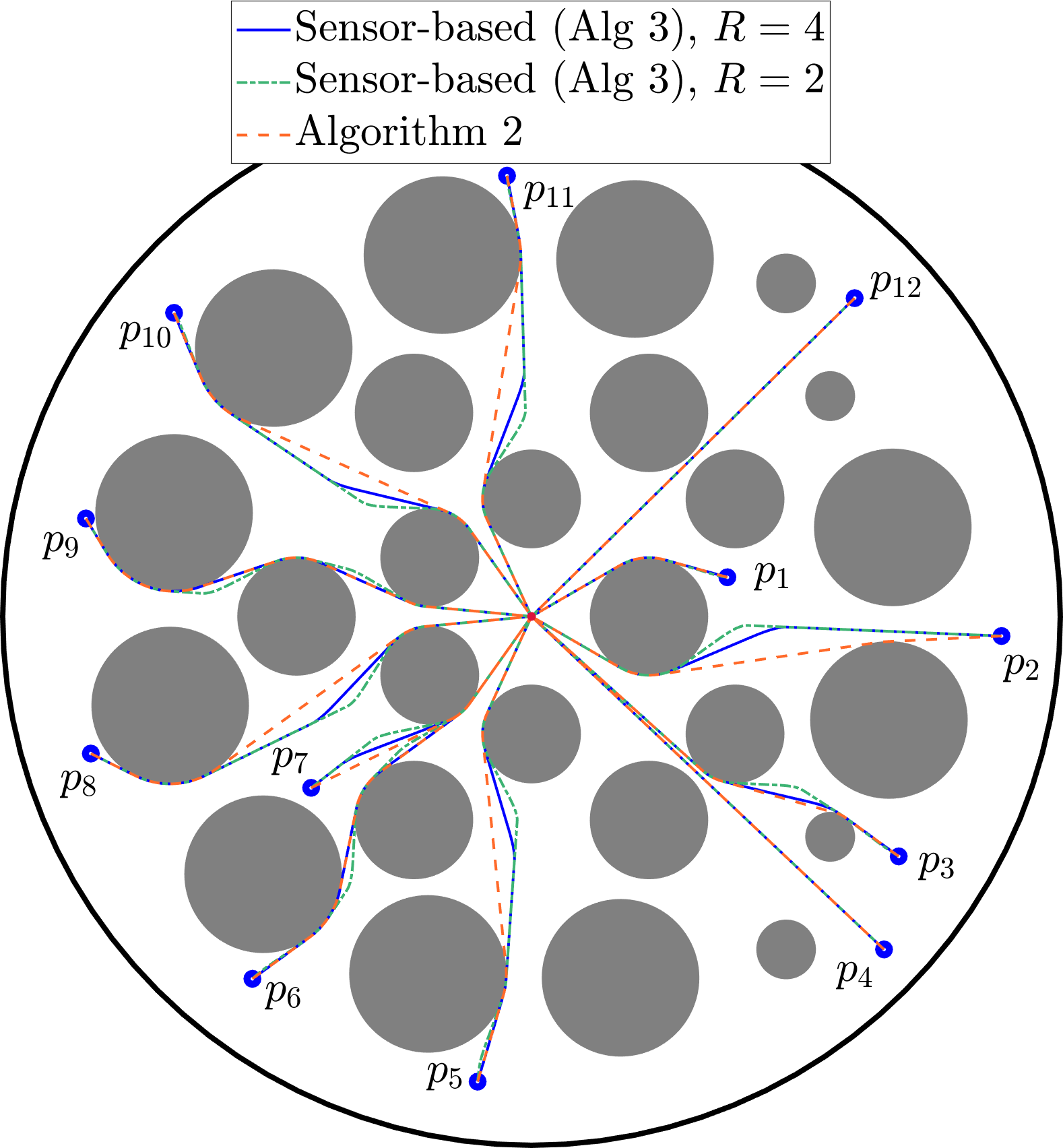}
\caption{Navigation in 2D sphere world.}
\label{fig:Sensor_sim}
\end{figure}
\begin{table}[h!]
\caption{The relative length difference of the paths, shown in Fig. \ref{fig:Sensor_sim}, generated by Algorithm \ref{alg_sensor}, with two sensor radial ranges ($R=2m$ and $R=4m$), with respect to Algorithm \ref{alg1}.}
\label{table_3}
\begin{center}
\begin{tabular}{|c||c||c|}
\hline
Paths & RLD$^1$ (\%)& RLD$^2$ (\%)\\
\hline
$p_1$ & 1.37 &2.37\\
\hline
$p_2$ & 0.12 &0.8 \\
\hline
$p_3$ & 0 &0\\
\hline
$p_4$ & 0 &0.02 \\
\hline
$p_5$ & 0.72 &1.5\\
\hline
$p_6$ & 0 &0.72 \\
\hline
$p_7$ & 0.43 &1.24\\
\hline
$p_8$ & 0.69 &1.37 \\
\hline
$p_9$ & 0 &0.68\\
\hline
$p_{10}$ & 0.7 &1.4 \\
\hline
$p_{11}$ & 0.94 &1.84\\
\hline
$p_{12}$ & 0 &0 \\
\hline
\end{tabular}
\end{center}
\end{table}
We also performed simulations in environments with 
convex obstacles satisfying the curvature condition in Assumption \ref{as:4}. The first simulation was performed in an environment with ellipsoidal obstacles and the second one in an environment with polygonal obstacles. The results in Fig. \ref{Long_Short} show the effectiveness of the proposed approach in convex worlds with smooth and non-smooth boundaries where all the trajectories converge safely to the target (red dot). Note that the robot's navigation was successful in the environment shown in Fig. \ref{Long_Short}-\subref{fig:poly_sim}, although it contains an L-shaped non-convex obstacle. In fact, in view of the position of the target and during all the avoidance maneuvers, only a convex  curve is detected from the boundary of this L-shaped obstacle. Simulation videos can be found at \url{https://youtu.be/Y5dho-ptkm8} and \url{https://youtu.be/FZ0qxx6Gsog}.
\begin{figure}[h!]
     \centering
     \subfloat[Convex world with elliptical obstacles.]{\includegraphics[width=0.35\linewidth,height=0.35\linewidth,keepaspectratio]{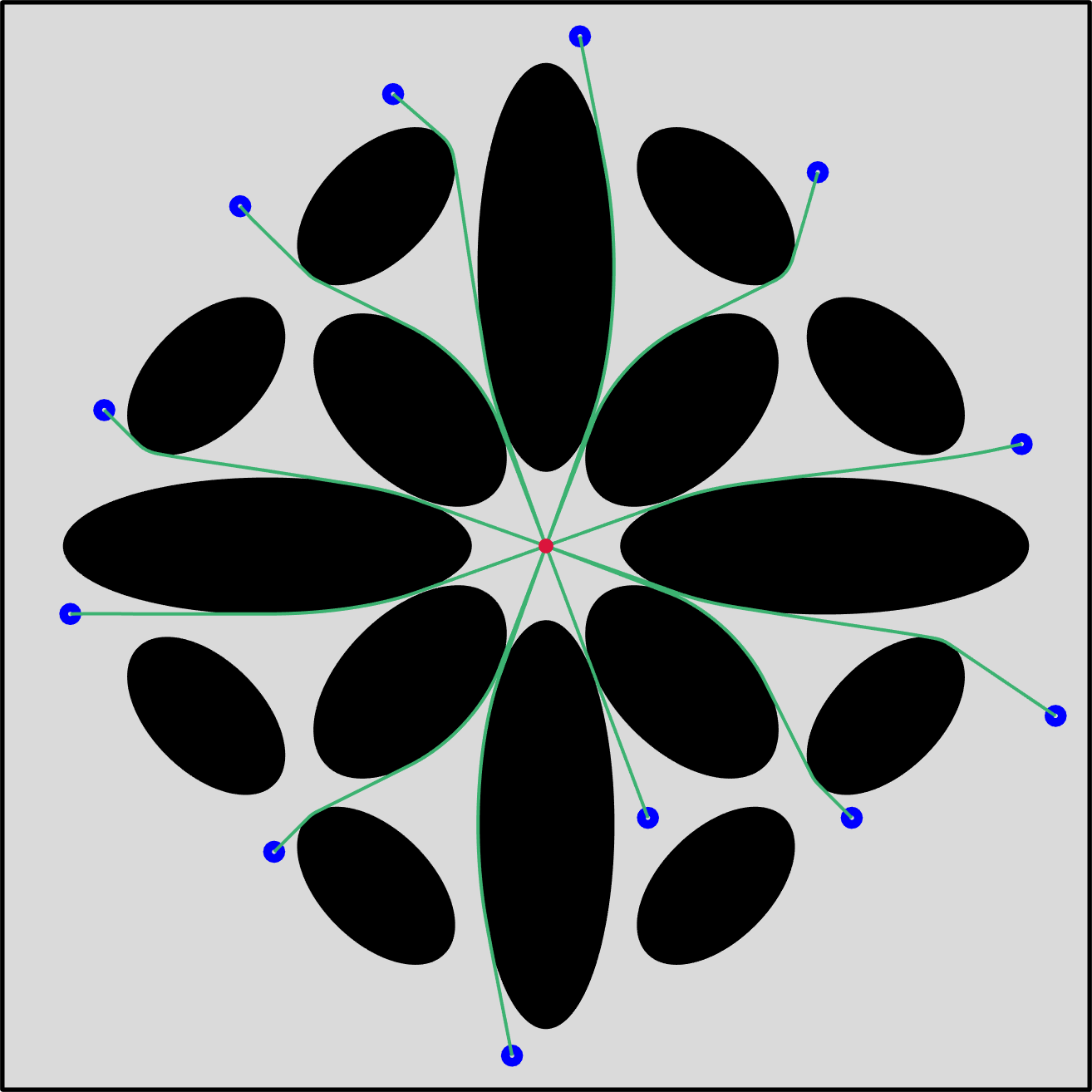}\label{fig:conv_sim}}
      \hspace{0.01cm}
     \subfloat[Convex world with polygonal obstacles.]{\includegraphics[width=0.3505\linewidth,height=0.3505\linewidth]{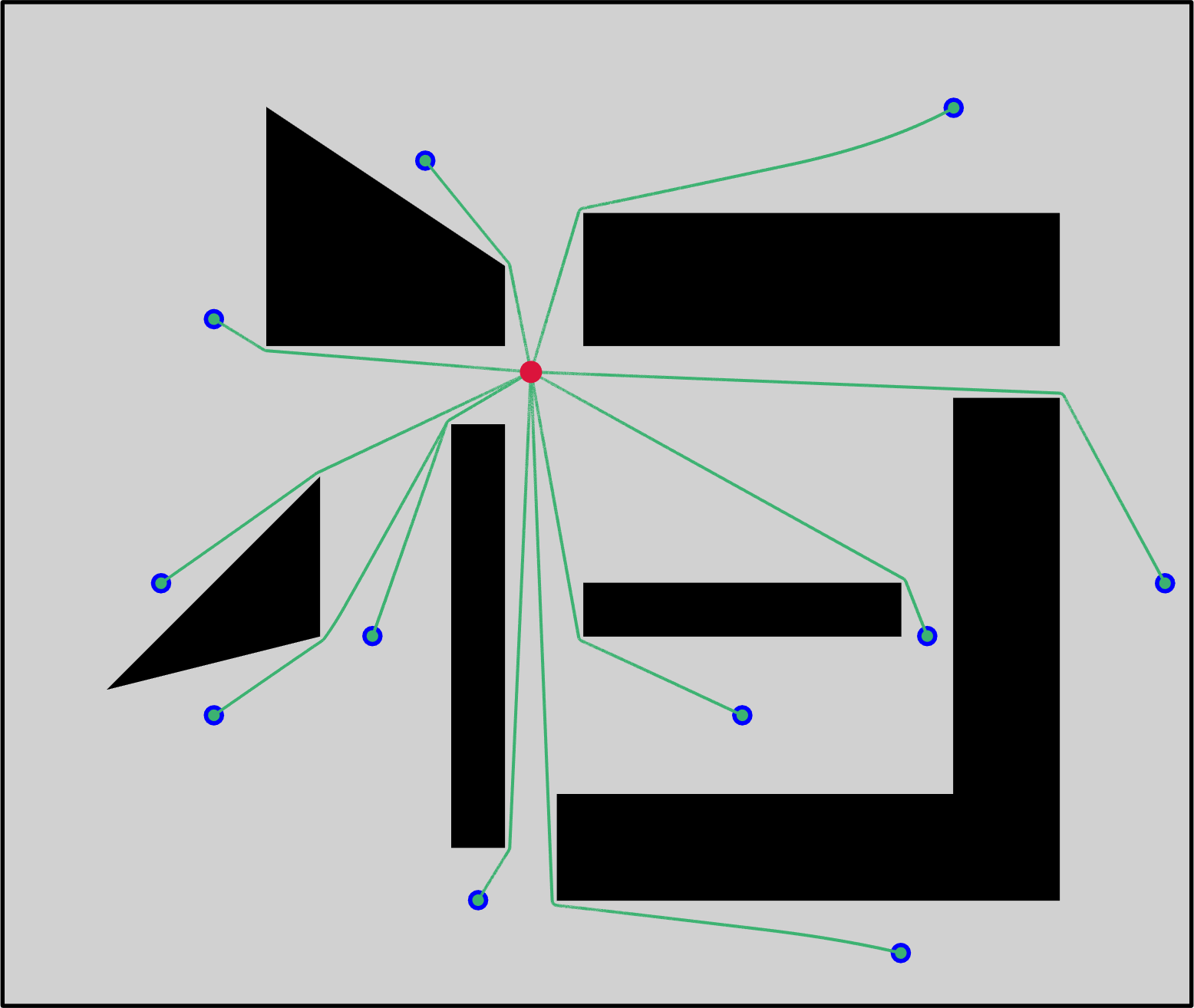}\label{fig:poly_sim}}
     \caption{Sensor-based navigation in unknown 2D convex worlds.}
     \label{Long_Short}
\end{figure}
\subsubsection{Gazebo simulation}
For experimental validation of our sensor-based approach, we used the meta-operating system ROS (Noetic) running on Ubuntu (20.04.6) to implement the control strategy \eqref{sensor_based_ctrl} on a Turtlebot3 model and simulate real-world scenarios with Gazebo (3D dynamic simulator). Our code is written in Python, and the data analysis is performed in MATLAB. The TurtleBot model includes a $360^\circ$ Lidar with a resolution of $1^\circ$, a maximum range $R=3.4\,m$, and a minimum range of $0.12 \,m$. The robot's position and orientation are obtained by subscribing to the odometry topic provided by ROS. Zero mean Gaussian noise is added to the sensors' data where the standard deviation for the Lidar is $0.02\,m$, the standard deviation for the position is $0.03\,m$, and for the orientation, the standard deviation is $0.035\,rd$. As the TurtleBot has a disk-shaped base of radius $r_b=0.14\,m$, we consider the eroded workspace $\mathcal{W}_r:=\mathcal{W}\setminus\left(\partial\mathcal{W}\oplus\mathcal{B}(0,r)\right)$, and the dilated obstacles $\tilde{\mathcal{O}_i^r}$. The eroded free space is then defined as $\mathcal{F}_r:=\mathcal{W}_r\setminus\cup_{i\in\mathbb{I}}\tilde{\mathcal{O}}_i^r$ and for all $x\in\mathcal{F}_r,\,\mathcal{B}(x,r_b)\subset\mathcal{F}$. Considering that $x$ (the center of the robot's base) evolves in the eroded free space $\mathcal{F}_r$, and choosing the dilation parameter $r=r_b+r_s$, where $r_s=0.11\,m$ is a security margin, the robot is guaranteed to evolve in the free space $\mathcal{F}$. TurtleBot 3 is a differential drive robot whose kinematic model is represented by
\begin{align}
    \begin{cases}
      \Dot{x}=v[\cos(\psi)\,\sin(\psi)]^\top, \\
      \Dot{\psi}=\omega,
    \end{cases}
\end{align}
where $\psi\in (-\pi,\pi]$ is the robot's orientation, and $v\in\mathbb{R}$ and $\omega\in\mathbb{R}$ are, respectively, the robot's linear and angular velocity inputs. As the control law \ref{sensor_based_ctrl} was designed for fully actuated robots, a transformation is required to generate adequate velocity inputs for our robot. The principal idea is to rotate the robot so that its orientation coincides with the direction of $u(x)$ obtained from \eqref{sensor_based_ctrl}, and then translate the robot with a linear velocity equal to the magnitude of $u(x)$. The direction of $u(x)$ is denoted by $\psi_d=\overline{\mathrm{atan2}}(u(x))$, and the difference between the robot's orientation and the direction of $u(x)$ is denoted by $\Delta \psi=\psi-\psi_d\in(-\pi,\pi]$. We transform the velocity input of a fully actuated robot to the velocity inputs of a nonholonomic mobile robot with smooth switching between the rotation and translation using the following transformation (inspired from \citep{sawant2024hybrid}):
\begin{align}\label{diff_drive_ctrl}
    \begin{cases}
      v=\min\left(v_{max},k_v\|u(x)\|\cos(\frac{\Delta\psi}{2})^{2p}\right), \\
      \omega=\omega_{max}\sin(\frac{\Delta\psi}{2}),
    \end{cases}
\end{align}
where $k_v>0$, $p\geq1$, $v_{max}=0.26\,m/s$ and $\omega_{max}=1.82\,rd/s$ are the maximum supported velocities by the robot's actuators. The higher the exponent $2p$, the safer our robot is, and its trajectory is closer to the one generated by the control \eqref{sensor_based_ctrl}. We created a Gazebo world cluttered with obstacles whose dilated versions adhere to Assumption \ref{as:4}, and we added non-convex walls facing the destination. In this Gazebo world, we implemented the transformed control law \eqref{diff_drive_ctrl} on the TurtleBot 3, where we set the gain to $k_v=0.8$ and the exponent $p=3$. The results are shown in Fig. \ref{TurtlBot}, and the simulation video can be found at \url{https://youtu.be/g1Ya9RFSgJc}.
\begin{figure}[h!]
\renewcommand*\thesubfigure{\arabic{subfigure}}
     \centering
     \subfloat[]{\includegraphics[width=0.37\linewidth,height=0.37\linewidth,keepaspectratio]{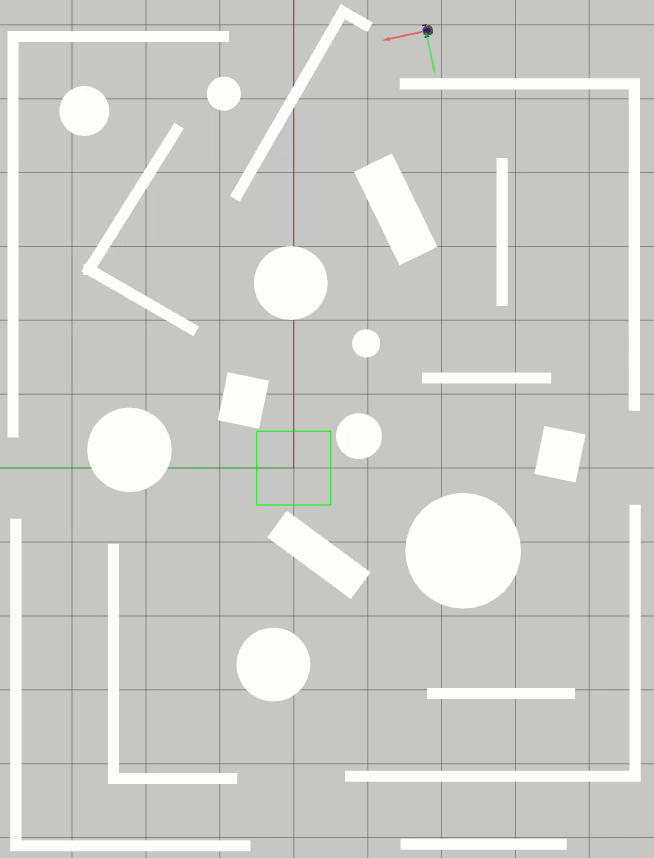}\label{Gt0}}
    \hspace{0.01cm}
     \subfloat[]{\includegraphics[width=0.37\linewidth,height=0.37\linewidth,keepaspectratio]{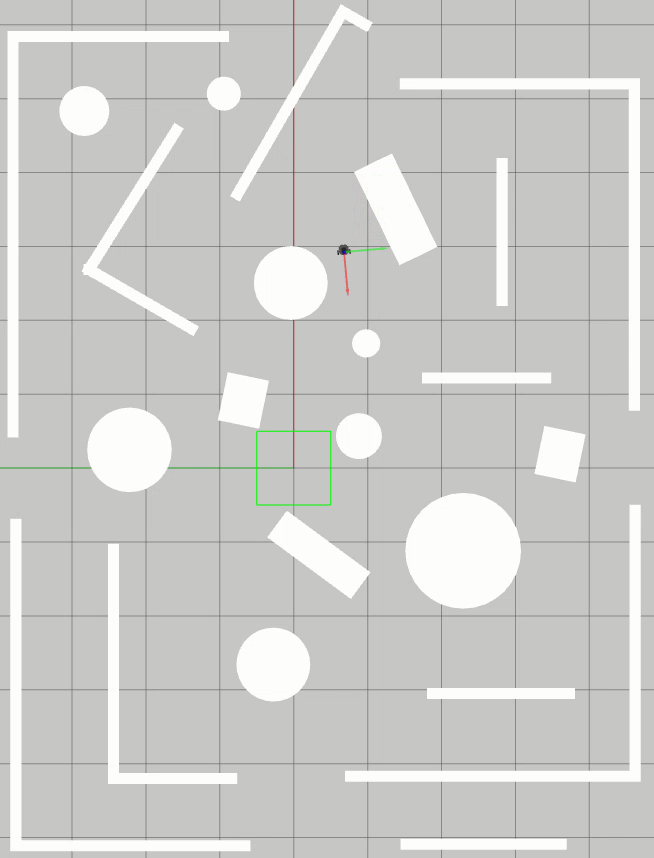}\label{Gt15}}
     \hspace{0.01cm}
     \subfloat[]{\includegraphics[width=0.37\linewidth,height=0.37\linewidth,keepaspectratio]{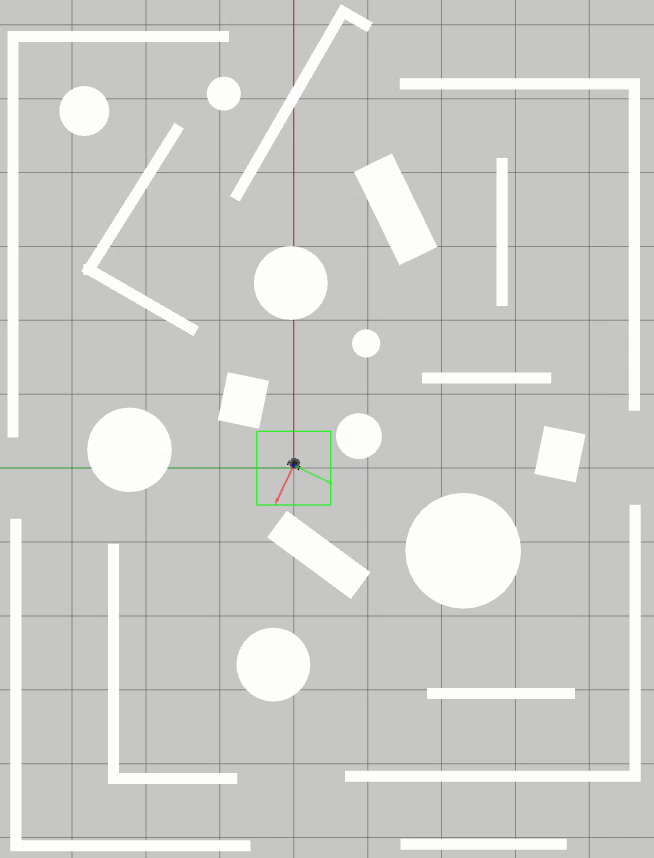}\label{Gt61}}\\
     \subfloat[]{\includegraphics[width=0.35\linewidth,height=0.35\linewidth,keepaspectratio]{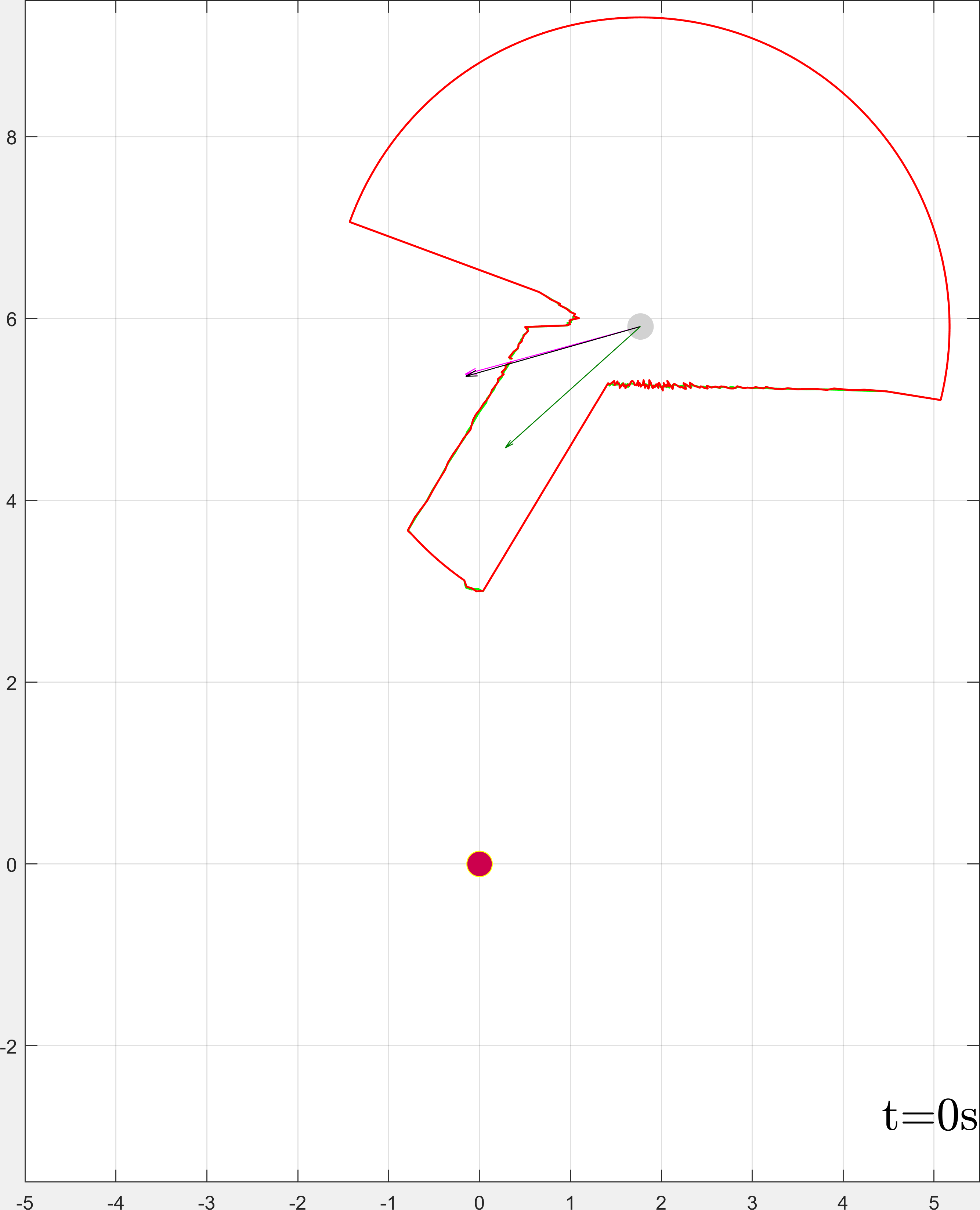}\label{t0}}
    \hspace{0.01cm}
    \subfloat[]{\includegraphics[width=0.35\linewidth,height=0.35\linewidth,keepaspectratio]{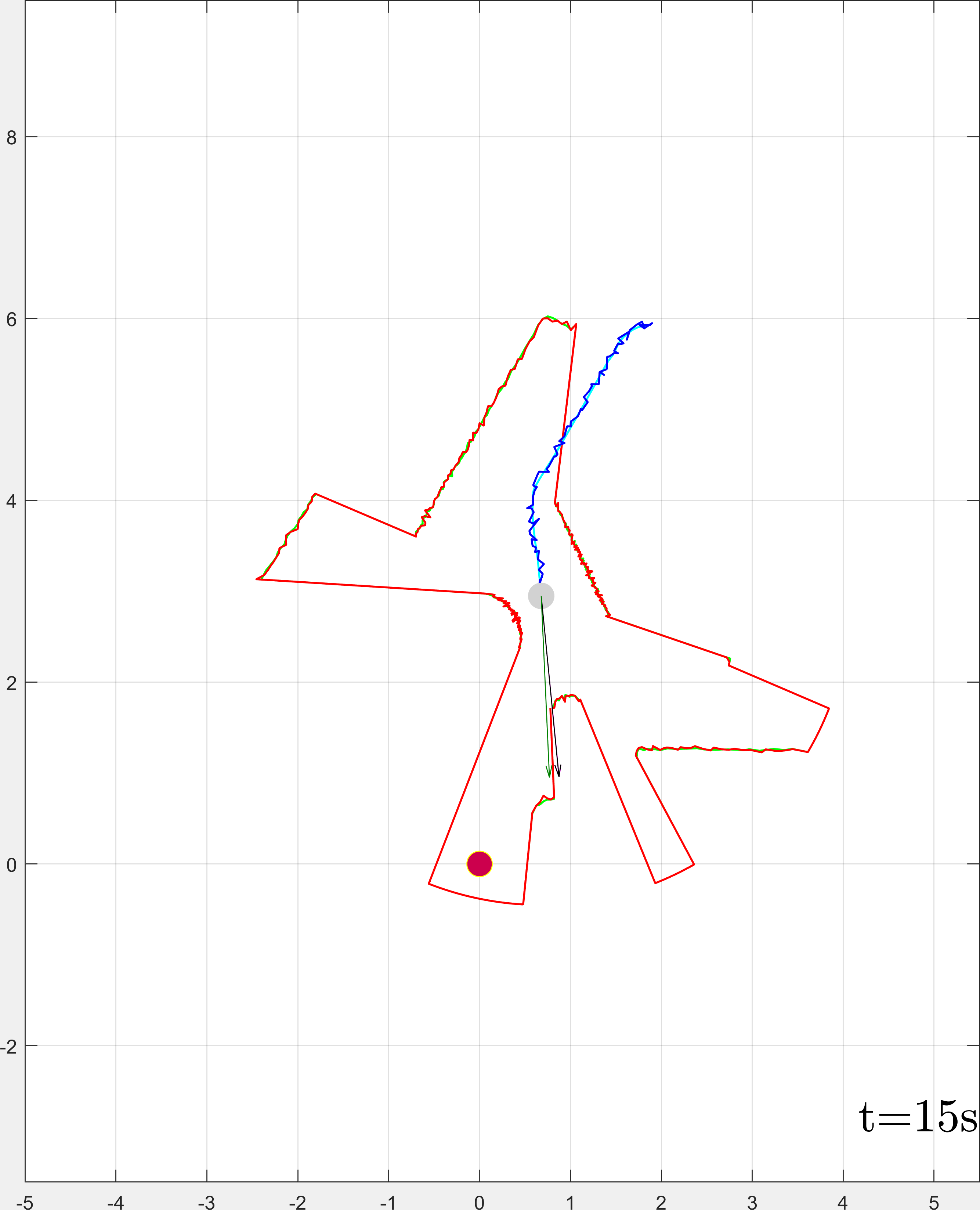}\label{t15}}
     \hspace{0.01cm}
     \subfloat[]{\includegraphics[width=0.35\linewidth,height=0.35\linewidth,keepaspectratio]{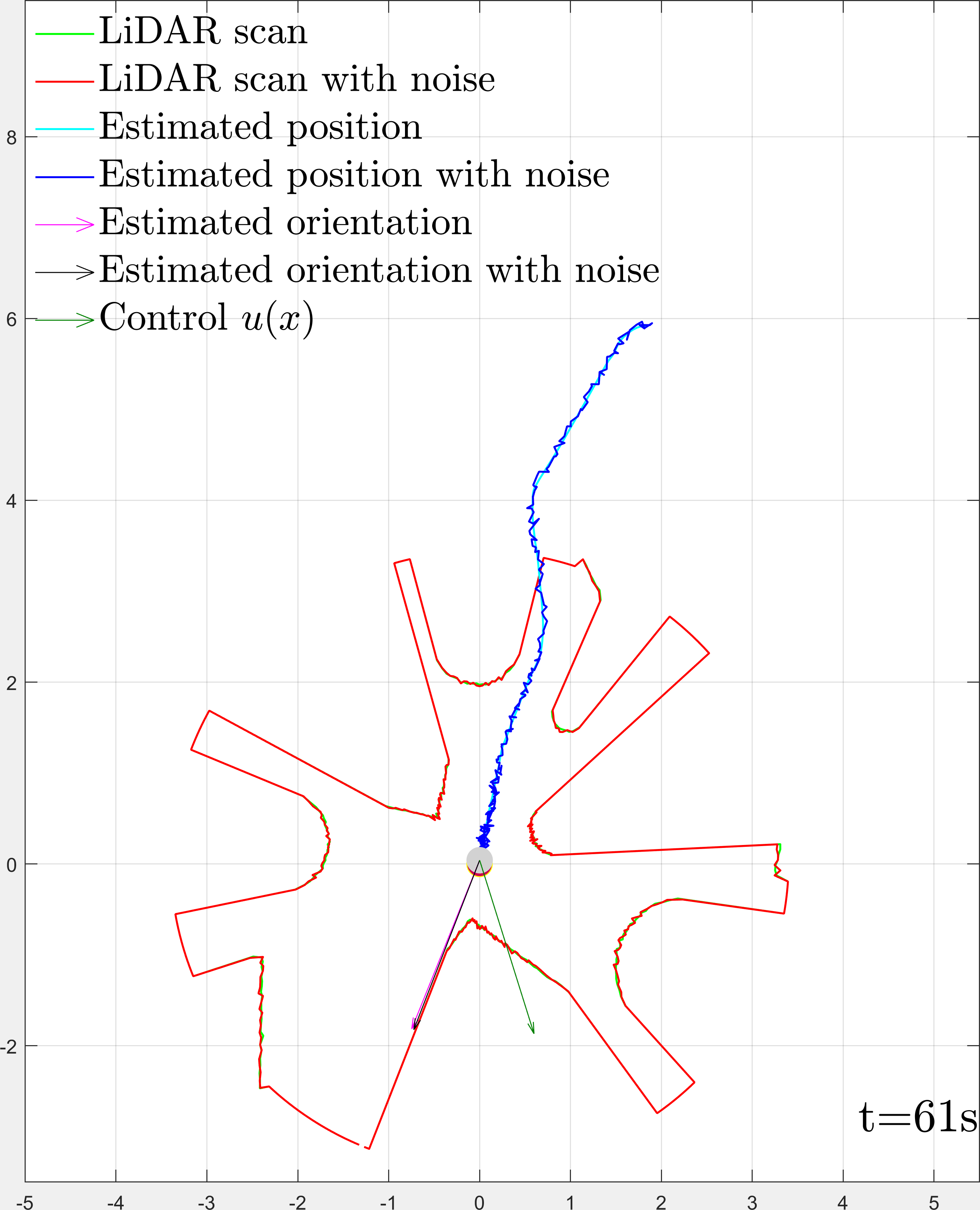}}\label{t61}
     \caption{Time-stamped shots of Turtlebot 3 navigating a Gazebo world.}
     \label{TurtlBot}
\end{figure}
\section{Conclusion}
A continuous feedback control strategy for the autonomous navigation problem in an $n$-dimensional sphere world has been proposed, with safety guarantees and {\it quasi-optimal} trajectory generation. The proposed strategy consists in steering the robot tangentially to the blocking obstacles through successive projections of the nominal control onto the obstacles enclosing cones. Consequently,  the deviations from the nominal direction to the target are minimized with respect to each blocking obstacle, resulting in a {\it quasi-optimal} overall collision-free trajectory. The price to pay for the almost global asymptotic stability result, in two-dimensional sphere worlds, is a somewhat restrictive assumption on the configuration of the obstacles (Assumption 3) that has been lifted in the sensor-based version, where the robot can navigate to the target location from almost everywhere in the free space without prior knowledge of environment containing sufficiently curved convex obstacles. Extending the proposed approach to arbitrarily shaped obstacles, with global asymptotic stability guarantees, is another interesting problem that will be the main focus of our future work.

\section*{Appendix}
\subsection*{Proof of Lemma \ref{lem1}}\label{appendix:Lemma 1}
Minimizing the angle $\angle(x_d-x,v_i)$ is equivalent to minimizing the cost function $g(v_i)=1-V_d^\top\frac{v_i}{\|v_i\|}$ with $V_d=(x_d-x)/\|x_d-x\|$ under the constraint $\Gamma(v_i)=\frac{v_i^\top V_{ci}}{\|v_i\|}-\cos(\theta_i)=0$ with $V_{ci}=(c_i-x)/\|c_i-x\|$. Define the Lagrangian associated to the optimization problem \eqref{min} by $ L_{\lambda}(v_i)=g(v_i)-\lambda \Gamma(v_i)$ where $\lambda$ is the Lagrange multiplier. The optimum is the solution of 
\begin{align*}
    \nabla_{v_i}L_{\lambda}(v_i)=0,
    \nabla_{\lambda}L_{\lambda}(v_i)=0,
\end{align*}
which gives
\begin{align}
        \pi^{\bot}(v_i)(V_d+\lambda V_{ci})=0,\quad 
        \frac{v_i^\top V_{ci}}{\|v_i\|}-\cos(\theta_i)=0.\label{32}
\end{align}
From the first equation, one has $v_i=\alpha(V_d+\lambda V_{ci})$ for some $\alpha\in\mathbb{R}$. Substituting this into the second equation, one gets
\begin{align}\label{aa}
    \alpha(\cos(\beta_i)+\lambda)=\cos(\theta_i)\|\alpha(V_d+\lambda V_{ci})\|.
\end{align}
Squaring \eqref{aa} and substituting $\|\alpha(V_d+\lambda V_{ci})\|^2=\alpha^2(\lambda^2+2\lambda\cos(\beta_i)+1)$, one can solve for $\lambda$ 
\begin{align}
    \lambda_{1,2}=-\frac{\sin(\theta_i\pm\beta_i)}{\sin(\theta_i)}.
\end{align}
Consequently, one can obtain $v_i^1$ and $v_i^2$ as follows:
\begin{align}
      v_i^{1,2}=\pm|\alpha|\left(V_d-\frac{\sin(\theta_i\pm\beta_i)}{\sin(\theta_i)}V_{ci}\right).
\end{align}
The value of $g$ at the two solutions is as follows:
\begin{align*}
    g(v_i^1)=1+\cos(\theta_i+\beta_i),\;
    g(v_i^2)=1-\cos(\theta_i-\beta_i),
\end{align*}
and $g(v_i^1)-g(v_i^2)=2\cos(\theta_i)\cos(\beta_i)\geq0$ which implies that
\begin{align}
  \mathcal{U}(x)=\{\Bar{\alpha}(V_d-\sin^{-1}(\theta_i)\sin(\theta_i-\beta_i)V_{ci})|\; \Bar{\alpha}\geq0\}.
\end{align}
When $x\in\mathcal{S}(x_d,c_i)$, $u_d(x)\in\mathcal{V}(c_i-x,\theta_i)$ which implies that $\theta_i=\beta_i$, and $u_d(x)\in\mathcal{U}$. Therefore, for all $x\in\mathcal{S}(x_d,c_i)$, $u(x)\in\mathcal{U}$ implies that $u(x)=\Bar{\alpha}V_d$, and if in addition $u(x)=u_d(x)$, then $\Bar{\alpha}=\gamma\|x_d-x\|$. One can conclude that the solution is unique and is given by
\begin{align*}
\small
    u(x)&=\gamma\|x_d-x\|\left(V_d-\frac{\sin(\theta_i-\beta_i)}{\sin(\theta_i)}V_{ci}\right)=\xi(u_d(x),x,i), 
\end{align*}
where the last equation is obtained after some straightforward manipulations.
\subsection*{Proof of Lemma \ref{lem2}}\label{appendix:Lemma 2} 
Let $x(0)\in\mathcal{F}\setminus\mathcal{L}_d(x_d,c_i)$. Then, one has two situations. First, when $x(0)\in\mathcal{VI}$, the trajectory $x(t)$ is a line-segment which is the closest path. Now, when $x(0)\in\mathcal{D}(x_d,c_i)$, there are two types of possible trajectories: trajectories inside the enclosing cone $\mathcal{C}^{\leq}_{\mathcal{F}}(x,c_i-x,\theta_i)$ and trajectories outside this cone. One can show that the trajectory generated by the closed-loop system \eqref{12}-\eqref{25}, on the enclosing cone $\mathcal{C}^{=}_{\mathcal{F}}(x,c_i-x,\theta_i)$, has minimum length. For the first type of trajectory, one only considers the ones between the line segment $\mathcal{L}_s(x(0), x_d)$ and the closest tangent to it (blue segment in Fig. \ref{fig:fig6}) among the cone enclosing the obstacle (the red trajectory in Fig. \ref{fig:fig6} is an example). All these trajectories will merge with our trajectory, which is on the closest tangent (as shown in Lemma \ref{lem1}), at the intersection point of the tangent with the obstacle. Since, before the intersection point, our trajectory is a line segment, one can conclude that it is the shortest path. The best that can be achieved outside the cone for a smooth trajectory is a dilated version of our trajectory (larger radius of curvature) which is longer than ours (black path in Fig. \ref{fig:fig6}).
\subsection*{Proof of Lemma \ref{lem3}}\label{appendix:lemma 3}
First, we prove that the closed-loop system admits a unique solution. The control is Lipschitz on $\mathcal{VI}$ since $u(x)=u_d(x)$ is continuously differentiable. When  $x\in\mathcal{BL}$, for simplicity,  $\sin(\theta_{\iota_x(p)}(q)-\beta_{\iota_x(p)}(u_{p-1}(x),q))$ is denoted by $s_{\iota_x(p)}^s(q)$ and $\frac{\sin(\beta_{\iota_x(p)}(u_{p-1}(x),q))}{\sin(\theta_{\iota_x(p)}(q))}$ by $s_{\iota_x(p)}^d(q)$, where $p\in\{1,\dots,h(x)\}$. After manipulation, the control \eqref{36} can be expressed as $u(x)=u_d(x)-\gamma\|x-x_d\|\sum\limits_{p=1}^{h(x)}\prod\limits_{k=1}^{p-1} s_{\iota_x(k)}^d(x)\frac{s_{\iota_x(p)}^s(x)}{r_{\iota_x(p)}}(c_{\iota_x(p)}-x)$, which is shown to be one-sided Lipschitz as follows:
    \begin{align*}
        \left(u(x)-u(y)\right)^\top(x-y)&=-\gamma\|x-y\|^2
        -\gamma\|x_d-x\|\sum\limits_{p=1}^{h(x)}\prod\limits_{k=1}^{p-1} s_{\iota_x(k)}^d(x)\frac{s_{\iota_x(p)}^s(x)}{r_{\iota_x(p)}}(c_{\iota_x(p)}-x)^\top(x-y)\\&\qquad+\gamma\|x_d-y\|\sum\limits_{p=1}^{h(y)}\prod\limits_{k=1}^{p-1} s_{\iota_x(k)}^d(y)\frac{s_{\iota_x(p)}^s(y)}{r_{\iota_x(p)}}(c_{\iota_x(p)}-y)^\top(x-y),\\
        &\leq-\gamma\|x-y\|^2
        +\gamma\|x_d-x\|\|x-y\|\sum\limits_{p=1}^{h(x)}\prod\limits_{k=1}^{p-1} s_{\iota_x(k)}^d(x)\frac{s_{\iota_x(p)}^s(x)}{r_{\iota_x(p)}}\|c_{\iota_x(p)}-x\|\\&\qquad+\gamma\|x_d-y\|\|x-y\|\sum\limits_{p=1}^{h(y)}\prod\limits_{k=1}^{p-1} s_{\iota_x(k)}^d(y)\frac{s_{\iota_x(p)}^s(y)}{r_{\iota_x(p)}}\|c_{\iota_x(p)}-y\|.
    \end{align*}
    Note that $\forall x\in\mathcal{BL}$ and $\forall p\in\{1,\dots,h(x)\}$, $0\leq s_{\iota_x(p)}^d(x)\leq1$, $0\leq s_{\iota_x(p)}^s(x)\leq1$, $\|c_{\iota_x(p)}-x\|\leq 2r_0-r_{\iota_x(p)}$ and $\|x_d-x\|\leq 2r_0$, which implies that there exists $M>0$ such that $\|x_d-x\|\sum_{p=1}^{h(x)}\frac{\|c_{\iota_x(p)}-x\|}{r_{\iota_x(p)}}\leq M\|x-y\|$. Therefore,
    \begin{align*}
        \left(u(x)-u(y)\right)^\top(x-y)&\leq-\gamma\|x-y\|^2+\gamma M_1\|x-y\|^2+\gamma M_2\|x-y\|^2\\&\leq\gamma(-1+M_1+M_2)\|x-y\|^2\\&\leq L\|x-y\|^2.
    \end{align*}
One can take $L=\gamma(-1+M_1+M_2)$ where $M_1>0$, $M_2>0$ and $M_1+M_2>1$. The control \eqref{36} is one-sided Lipschitz \citep{uniqueness} when $x\in\mathcal{BL}$, and is Lipschitz when $x\in\mathcal{VI}$. Thus, according to \cite[Proposition 2]{uniqueness}, the closed-loop system \eqref{12}-\eqref{36} has a unique solution for all $x(0)\in\mathcal{F}$.
Now, we prove forward invariance using Nagumo's theorem. We only need to verify Nagumo's condition at the free space boundary as it is trivially met when $x\in\mathring{\mathcal{F}}$ where $\mathcal{T}_{\mathcal{F}}(x)=\mathbb{R}^n$. Since the free space is a sphere world, the tangent cone on its boundary is the half-space $\mathcal{C}^{\leq}_{\mathbb{R}^n}(x,-x,\frac{\pi}{2})$ when $x\in\partial\mathcal{W}$ and $\mathcal{C}^{\geq}_{\mathbb{R}^n}(x,c_i-x,\frac{\pi}{2})$ when $x\in\partial\mathcal{O}_i$ (see Fig. \ref{fig:Tangent_cones}). Let us consider an obstacle $\mathcal{O}_i$  and verify Nagumo's condition in three regions of the free space.\\
In the first region, When $x\in\partial\mathcal{W}$, $\mathcal{T}_{\mathcal{F}}(x)=\mathcal{C}^{\leq}_{\mathbb{R}^n}(x,-x,\frac{\pi}{2})$ and two sub-regions must be considered.
            \begin{itemize}
             \item $x\in\partial\mathcal{W}\cap\partial\mathcal{BL}$ (brown arc in Fig. \ref{fig:Tangent_cones}): Since $u(x)\in \mathcal{C}^=_{\mathcal{F}}(x,c_i-x,\theta_i)$ and $\mathcal{C}^=_{\mathcal{F}}(x,c-x,\theta)\subseteq\mathcal{C}^{\leq}_{\mathbb{R}^n}(x,-x,\frac{\pi}{2})$, one concludes that $u(x)\in\mathcal{T}_{\mathcal{F}}(x)$.
             \item $x\in\partial\mathcal{W}\setminus\partial\mathcal{BL}$ (grey arc in Fig. \ref{fig:Tangent_cones}): Since $u_d(x)\in\mathcal{F}$ and $\mathcal{F}\subseteq\mathcal{C}^{\leq}_{\mathbb{R}^n}(x,-x,\frac{\pi}{2})$, one concludes that $u(x)=u_d(x)\in\mathcal{T}_{\mathcal{F}}(x)$.
            \end{itemize}
In the second region, $x\in\partial\mathcal{O}_i\cap\mathcal{AR}_i$ (green arc in Fig. \ref{fig:Tangent_cones}) and $\mathcal{T}_{\mathcal{F}}(x)=\mathcal{C}^{\geq}_{\mathbb{R}^n}(x,c_i-x,\frac{\pi}{2})$. Since $u(x)\in\mathcal{C}^{=}_{\mathbb{R}^n}(x,c_i-x,\frac{\pi}{2})\subset\mathcal{C}^{\geq}_{\mathbb{R}^n}(x,c_i-x,\frac{\pi}{2})$, one concludes that $u(x)\in\mathcal{T}_{\mathcal{F}}(x)$.
Finally, in the last region, $x\in\partial\mathcal{O}_i\setminus\mathcal{AR}_i$ (blue arc in Fig. \ref{fig:Tangent_cones}) and $\mathcal{T}_{\mathcal{F}}(x)=\mathcal{C}^{\geq}_{\mathbb{R}^n}(x,c_i-x,\frac{\pi}{2})$. Since $x\notin\mathcal{AR}_i$, $\forall p\in\{0,\cdots,h(x)\}$, obstacle $\mathcal{O}_i$ is not selected in the successive projections $(\iota_x(p)\neq i)$ and $u_p(x)\notin\mathcal{C}^{\leq}_{\mathbb{R}^n}(x,c_i-x,\frac{\pi}{2})$. Therefore, $u(x)$ must be in the complement of the enclosing cone to the obstacle $\mathcal{O}_i$. Thus, one can conclude that $u(x)\in\mathcal{C}^{\geq}_{\mathbb{R}^n}(x,c_i-x,\frac{\pi}{2})=\mathcal{T}_{\mathcal{F}}(x)$. 
Since $\forall x\in\mathcal{F},\;\; u(x)\in\mathcal{T}_{\mathcal{F}}(x)$ 
and the solution of the closed-loop system \eqref{12}-\eqref{36} is unique, it follows that the free space $\mathcal{F}$ is positively invariant and the closed-loop system \eqref{12}-\eqref{36} is safe.
\subsection*{Proof of Lemma \ref{lem5}}\label{appendix:Lemma 5}
Let $\mathcal{L}^e_i$ be the central half-line associated to obstacle $\mathcal{O}_i$, and let $y\in\mathcal{L}^e_i\setminus\mathcal{O}_i$. Since the control is tangent to the obstacle, for any $x\in\mathcal{AR}_i\setminus\mathring{\mathcal{H}}(y,c_i)$ the control, at a step $p$, can never point inside the hat $\mathcal{H}(y,c_ki)$, {\it i.e.,} there is no position $q\in\mathcal{AR}_i\cap\mathring{\mathcal{H}}(y,c_i)$ such that $\angle (q-x,u_p(x))=0$, where $p=\iota^{-1}_x(i)$. Assume that $\mathcal{M}_i\neq\varnothing$ and consider an obstacle $k\in\mathcal{M}_i$ such that $k=\kappa_i^{-1}(1)$, $c_k\in\mathring{\mathcal{H}}(y,c_i)$, and $x^*_{k,i}=y$. Assume that $\mathring{\mathcal{H}}(x^*_{k,i},c_i)\cap\mathcal{O}_l=\varnothing$ for all $l\in\mathbb{I}\setminus\{i,k\}$, which ensures that no other obstacle interferes and changes the direction of the control $u_p$ towards the hat $\mathring{\mathcal{H}}(x^*_{k,i},c_i)$. Consequently, there is no $x\in\mathcal{AR}_k\cap\mathcal{AR}_i$ such that $\angle (c_k-x,u_p)=\beta(c_k-x,u_p)=0$ where $p=\iota_x^{-1}(i)$. Thus, obstacle $k$ does not generate a set of undesired equilibria $\mathcal{L}_k$ (see Fig. \ref{fig:lemma4pr}). Following the same reasoning, one can show that obstacle $j=\kappa_i^{-1}(2)$ will not generate a set of undesired equilibria by considering the obstacles $\mathcal{O}_i$ and $\mathcal{O}_k$ as single obstacle whose hat is the union $\mathring{\mathcal{H}}(x^*_{j,i},c_i)\cup\mathring{\mathcal{H}}(x^*_{j,i},c_k)$ and $\mathcal{L}^{e}_i$ as its single central half-line since obstacle $k=\kappa_i^{-1}(1)$ is free of undesired equilibria. These considerations reduce to the first case where if $c_j\in\mathring{\mathcal{H}}(x^*_{j,i},c_i)\cup\mathring{\mathcal{H}}(x^*_{j,i},c_k)$ and $(\mathring{\mathcal{H}}(x^*_{j,i},c_i)\cup\mathring{\mathcal{H}}(x^*_{j,i},c_k))\cap\mathcal{O}_l=\varnothing$ for all $l\in\mathbb{I}\setminus\{i,k\}$, obstacle $j$ does not generate undesired equilibria. The same can be applied up to obstacle $\kappa_i^{-1}(p)$ by considering the union of the hat of obstacle $i$ and obstacles $\mathcal{M}_i^{p-1}$, and the central half-line $\mathcal{L}^e_i$. Therefore, the obstacles of indices in the set $\mathcal{M}_i^{p}$ are free of undesired equilibria if, for all $j\in \mathcal{M}_i^{p}$ with $p\in\{1,\dots,N_i\}$, 1) $c_j\in\cup_{l\in\mathcal{M}_i^{p1}}\mathring{\mathcal{H}}(x^*_{j,i},c_l)$ and 2) $\cup_{l\in\mathcal{M}_i^{p-1}}\mathring{\mathcal{H}}(x^*_{j,i},c_l)\cap\mathcal{O}_z=\varnothing$ for all $z\in\mathbb{I}\setminus(\mathcal{M}_i^j\cup\{j\})$. If, in addition, $p=N_i$, or obstacle $\kappa_i^{-1}(p+1)$, with $p<N_i=\textbf{card}(M_i)$, does not satisfy conditions 1) and 2), we say that $\Bar{N}_i=p$ is the order of the set $\mathcal{M}_i$ and the number of obstacles free of undesired equilibria with indices grouped in the set $\mathcal{M}_i^{\Bar{N}_i}$, which concludes the proof.
\begin{figure}[!h]
\centering
\includegraphics[scale=0.38]{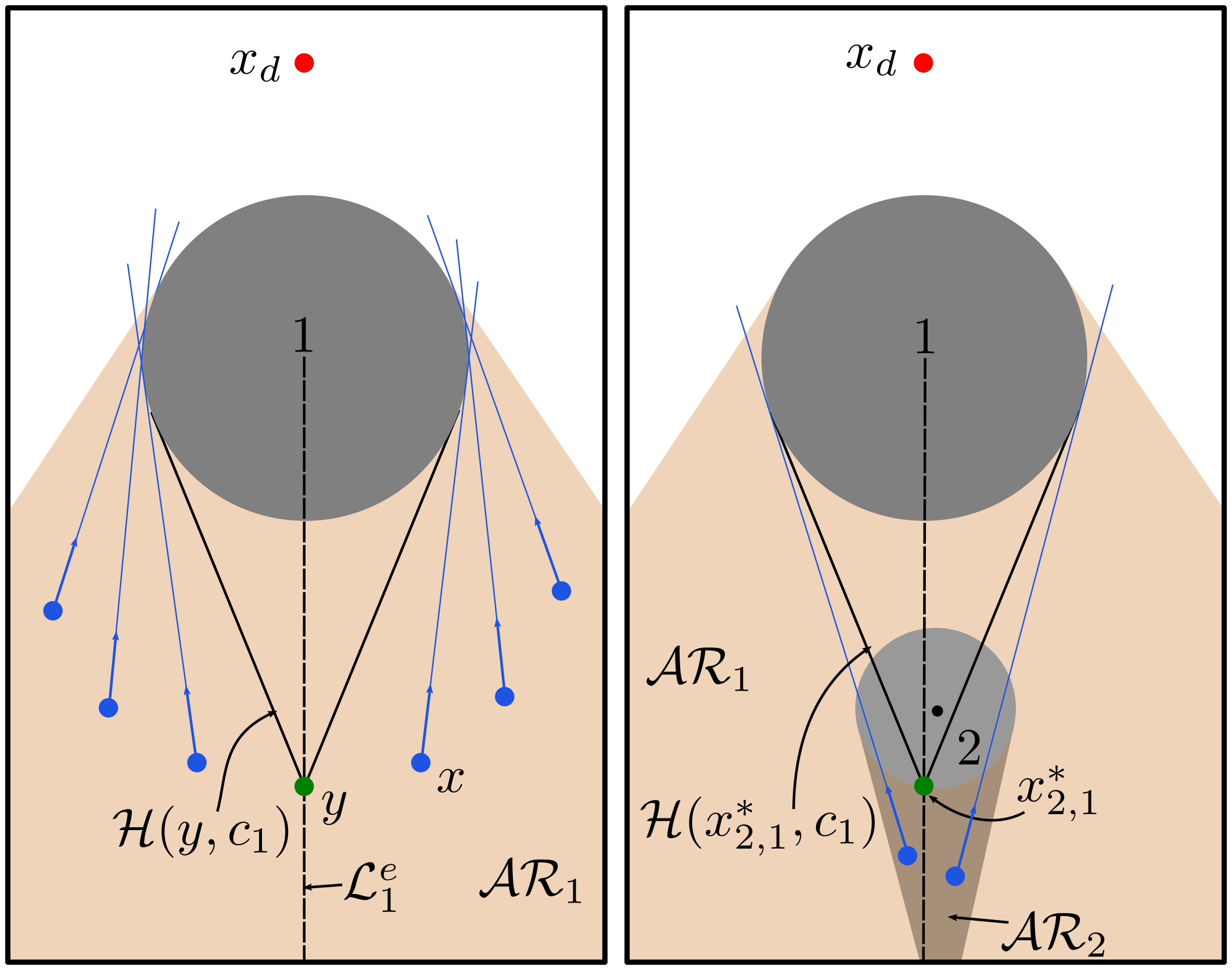}
\caption{The left figure shows that for all position $x\in\mathcal{AR}_1\setminus\mathring{\mathcal{H}}(y,c_1)$, the control cannot point inside $\mathcal{H}(y,c_1)$. In the right figure, obstacle $2$ is placed such that $c_2\in\mathring{\mathcal{H}}(y,c_1)$, the intermediary control $u_1$ cannot point inside $\mathcal{H}(y,c_1)$ and $y=x^*_{2,1}$. Then, the intermediary control $u_1$ cannot point into the center $c_2$ at any position $x\in\mathcal{AR}_2$ which implies that $\mathcal{L}_2$ is an empty set.}
\label{fig:lemma4pr}
\end{figure}
\subsection*{Proof of Theorem \ref{the1}}\label{appendix:the1}
Item i) and item ii) follow directly from lemma \ref{lem3} and lemma \ref{lem4} respectively. Now let us prove item iii).
Consider the set of undesired equilibria $\mathcal{L}_i$, $i\in\mathcal{Z}$. Define the tube surrounding $\mathcal{L}_i$ inside the free space $\mathcal{TU}_i:=\{x\in\mathcal{F}|d(x,\mathcal{L}_i)\leq e_i\}$ where $e_i$ is small such that $\mathcal{TU}_i\subset\mathcal{AR}_i$, and $e_i<r_i$. Let $V(x)=\frac{1}{2}d^2(\mathcal{L}_i,x)=\frac{1}{2}(x-c_i)^{\top}\pi^{\bot}(\bar v_i)(x-c_i)$ where $\bar v_i=(\bar x_i-c_i)/\|\bar x_i-c_i\|$, $\bar x_i\in\mathcal{L}_i$, $V(x)=0$ for all $x\in\mathcal{L}_i$, and $V(x)>0$ for all $x\in\mathcal{TU}_i\setminus\mathcal{L}_i$. Let $U:=\mathcal{TU}_i\cap\mathcal{P}_{\leq}(\bar x_i,\bar v_i)\setminus\mathcal{L}_i$ where $\bar x_i\in\mathcal{L}_i$ is such that $U\cap\mathcal{L}_k=\varnothing$ for all $k\in\mathcal{Z}\setminus\{i\}$, and $U\subset\mathcal{AR}_i^h$ with $\mathcal{AR}_i^h:=\{q\in\mathcal{AR}_i|\iota_q^{-1}(i)=h(x)\}$. Note that $V(x)>0$ for all $x\in\mathcal{U}$. The time-derivative of $V(x)$ is given by $\Dot{V}(x)=\frac{\partial V(x)}{\partial x}^\top\Dot{x}=(x-c_i)^{\top}\pi^{\bot}(\bar v_i)u(x)$. Since $e_i<r_i$ and for all $x\in U$, $u(x)$ lies on the cone enclosing obstacle $\mathcal{O}_i\subset\mathcal{P}_{\leq}(\bar x_i,\bar v_i)$, $0<\angle(\bar v_i,x-c_i)<\pi/2$ and $\pi/2<\angle(\bar v_i,u(x))<\pi$. Therefore, $\Dot{V}(x)>0$ for all $x\in U$. As $U$ is a compact set, $V(x)$ is increasing on $U$, and $V(x)=0$ on $\mathcal{L}_i$ (the axis of the tube), $x(t)$ must leave the set $U$. The set $U$ is bounded on top by obstacle $i$, its lateral boundary is the surface of the tube, and is bounded from the bottom by the hyperplane $\mathcal{P}_=(\bar x_i, \bar v_i)$. Due to the safety of the system, as per Lemma \ref{lem3}, $x(t)$ can not leave $U$ from the upper boundary (the boundary of obstacle $i$), and since $\pi/2<\angle(\bar v_i,u(x))<\pi$, $x(t)$ can only leave $U$ from the surface of the tube. 
Now, let us prove item iv). Since $x_d\in\mathcal{VI}$, there exists $r_d>0$ such that $\mathcal{B}(x_d,r_d)\subset\mathcal{VI}$. As the closed-loop system \eqref{12}-\eqref{36} reduces to $\Dot{x}=-\gamma(x-x_d)$ on the visible set $\mathcal{VI}$, the equilibrium $x=x_d$ is locally exponentially stable. Finally, let us prove item v). Consider a trajectory starting from $x(0)\in\mathcal{VL}$ generated by the closed-loop system \eqref{12}-\eqref{36}. Since the control on the visible set $\mathcal{VL}$ is the nominal one $u_d(x)=\gamma(x_d-x)$, the generated trajectory is the line segment connecting $x(0)$ to $x_d$, which has the shortest length. Now consider a trajectory with initial condition $x(0)\in\mathcal{BL}$, generated by the closed-loop system \eqref{12}-\eqref{36}. The velocity of a vehicle moving along the considered trajectory at an instant $t\in[0,\infty)$ is given by $\Dot{x}(t)=u_{h(x(t))}(x(t))$ where $h(x(t))=\mathbf{card}(\mathcal{I}(x(t)))$. Since the virtual destination at position $x(t)$ is the point given by $P(x(t))=x(t)+u_{h(x(t))}(x(t))$, then, the direction from $x(t)$ to the virtual destination is the vehicle's velocity $P(x(t))-x(t)=u_{h(x(t))}(x(t))=\Dot{x}(t)$. Therefore, one can conclude that for $x(0)\in\mathcal{BL}$, the tangent to the trajectory generated by the closed-loop system \eqref{12}-\eqref{36}, at any position $x(t)$, points to the virtual destination $P(x(t))$.
\subsection*{Proof of Lemma \ref{lem6}}\label{appendix:lemma 6}
Let $i\in\mathbb{L}$. Since $\textbf{Cell}_i$ is bounded by line segments of undesired equilibria $(\cup_{k\in\mathcal{Z}}\mathcal{L}_k)$ and the free space boundary, $u(x)$ points inside the cell when $x\in\partial \textbf{Cell}_i\cap\partial\mathcal{F}$, as per Lemma \ref{lem3}, and $u(x)=0$ when $x\in\partial\textbf{Cell}_i\cap(\cup_{k\in\mathcal{Z}}\mathcal{L}_k)$. Consequently, $\textbf{Cell}_i$ is forward invariant for the closed-loop system \eqref{12}-\eqref{36}.
\subsection*{Proof of Lemma \ref{lem7}}\label{appendix:lemma 7}
Since the nests are invariant, as per Lemma \ref{lem6}, and all the undesired equilibria are contained inside the nests, it remains to show that if we start outside nests, we will never get back in. We begin with the special nest ($\textbf{Nest}_0=\cup_{i\in\mathcal{Z}}\mathcal{L}_i\setminus\cup_{j\in\mathbb{L}}\textbf{Cell}_j$) formed by segments of undesired equilibria and show their repellency. These segments can be defined as $\mathcal{CL}_i:=\left\{ q\in\mathcal{L}_i|q\notin\cup_{k\in\mathbb{L}}\textbf{Cell}_k\right\}\subset\textbf{Nest}_0$ for $i\in\mathcal{Z}$.\\
Consider obstacle $i\in\mathcal{Z}$ and segment $\mathcal{CL}_i$ in the following three possible cases illustrated in (Fig. \ref{fig:2dnests}):\\ 
{\bf Case 1:} Consider the region $\mathcal{AR}_i^h:=\left\{q\in\mathcal{AR}_i|\iota_q^{-1}(i)=h(q) \right\}$ where obstacle $i$ is the last on the list of projections. Define the tube $\mathcal{TU}_i:=\{x\in\mathcal{F}|d(x,\mathcal{CL}_i)\leq e_i\}$ where $e_i$ is small enough to have $\mathcal{TU}_i\cap\mathcal{L}_j=\varnothing$ for all $j\in\mathcal{Z}\setminus\{i\}$ and $\mathcal{TU}_i\subset\mathcal{AR}_i$. Let $V(x)=1-\frac{(\Bar{x}_i-c_i)^\top}{\|\Bar{x}_i-c_i\|}\frac{(x-c_i)}{\|x-c_i\|}$ where $\Bar{x}_i\in\mathcal{CL}_i\cap\mathcal{AR}^{h}_i$. Note that $V(\Bar{x}_i)=0$, and $V(x)>0$ for all $x\in \mathcal{TU}_i\setminus\mathcal{CL}_i$.
Define the set $U:=(\mathcal{TU}_i\cap\mathcal{AR}^h_i
)\setminus\mathcal{CL}_i$. The time-derivative of $V(x)$ is given by
\begin{align*}
    \Dot{V}(x)&=\frac{\partial V(x)}{\partial x}^\top\Dot{x},\\&=-\frac{(\Bar{x}_i-c_i)^\top}{\|\Bar{x}_i-c_i\|}J_x\left(\frac{(x-c_i)}{\|x-c_i\|}\right)u(x),\\
    &=-K\Bar{V}^\top_{ci}\pi^{\bot}(V_{ci})\bar\xi_i,
\end{align*}
where $K=\frac{\|u(x)\|}{\|x-c_i\|}>0$, $V_{ci}=\frac{(c_i-x)}{\|c_i-x\|}$, $\bar{V}_{ci}=\frac{(\Bar{x}_i-c_i)}{\|\Bar{x}_i-c_i\|}$ and $\bar\xi_i=\frac{\sin(\theta_i)u_{h(x)-1}}{\sin(\beta_i)\|u_{h(x)-1}\|}-\frac{\sin(\theta_i-\beta_i)}{\sin(\beta_i)}V_{ci}$. 

The segment $\mathcal{CL}_i$ divides the set $\mathcal{AR}^{h}_i$ into two symmetric regions, a left-side $\mathcal{P}_{<}(c_i,R_1\Bar{V}_{ci})\cap\mathcal{AR}^{h}_i$, and a right-side $\mathcal{P}_{>}(c_i,R_1\Bar{V}_{ci})\cap\mathcal{AR}^{h}_i$. On the right side, the control is on the right tangent to obstacle $i$, while on the left, the control is on the right tangent to obstacle $i$. Therefore, the control can not point inside the region enclosed by the vectors $\Bar{V}_{ci}$, $V_{ci}$ ({\it i.e.,} $\forall x\in \mathcal{AR}^{h}_i\setminus\mathcal{CL}_i,\,u(x)\notin\{v\in\mathbb{R}^n|v=a\Bar{V}_{ci}+bV_{ci},\,a>0,b>0\}$). Thus, for all $x\in U$, $\Bar{V}_{ci}^{\top}\bar\xi_i=\cos(\sigma_i+\theta_i)$ where $0<\sigma_i=\angle(\Bar{V}_{ci},V_{ci})<\pi$ and $0<\theta_i=\angle(V_{ci},\Bar{\xi}_i)\leq\frac{\pi}{2}$. Then,
\begin{align*}
    \Dot{V}(x)&=
    -K\left(\cos(\sigma_i+\theta_i)-\cos(\sigma_i)\cos(\theta_i) \right),\\
    &=K\sin(\sigma_i)\sin(\theta_i).
\end{align*}
Therefore, $\Dot{V}(\Bar{x}_i)=0$ and $\Dot{V}(x)>0$ for all $x\in U$.\\
{\bf Case 2:} Consider the region $\mathcal{AR}^{h}_k$ where $k\in\mathcal{M}_i^{\Bar{N}_i}$. Define the tube $\mathcal{TU}_i:=\{x\in\mathcal{F}|d(x,\mathcal{CL}_i)\leq e_i\}$ where $e_i$ is small such that $\mathcal{TU}_i\cap\mathcal{L}_j=\varnothing$ for all $j\in\mathcal{Z}\setminus\{i\}$, and $\mathcal{TU}_i\subset\mathcal{AR}_k$. Let $V(x)=1-\frac{(\Bar{x}_i-x^*_{k,i})^\top}{\|\Bar{x}_i-x^*_{k,i}\|}\frac{(x-x^*_{k,i})}{\|x-x^*_{k,i}\|}$ where $\Bar{x}_i\in\mathcal{CL}_i\cap\mathcal{AR}^{h}_k$ and $x^*_{k,i}=\mathrm{arg}\max\limits_{q\in\mathcal{L}_i^e\cap\partial\mathcal{O}_k}||c_i-q||$. Note that $V(\Bar{x}_i)=0$, and $V(x)>0$ for all $x\in (\mathcal{TU}_i\cap\mathcal{AR}_k^h)\setminus  \mathcal{CL}_i$. Define the set $U:=(\mathcal{TU}_i\cap\mathcal{AR}_h^k)\setminus\mathcal{CL}_i$. The time-derivative of $V(x)$ is given by  
\begin{align*}
    \Dot{V}(x)&=\frac{\partial V(x)}{\partial x}^\top\Dot{x},\\&=-\frac{(\Bar{x}_i-x^*_{k,i})^\top}{\|\Bar{x}_i-x^*_{k,i}\|}J_x\left(\frac{(x-x^*_{k,i})}{\|x-x^*_{k,i}\|}\right)u(x),\\
    &=-K\Bar{V}^\top_{k,i}\pi^{\bot}(V_{k,i})\bar\xi_k,
\end{align*}
where $K=\frac{\|u(x)\|}{\|x-x^*_{k,i}\|}>0$, $V_{k,i}=\frac{(x^*_{k,i}-x)}{\|x^*_{k,i}-x\|}$, $\bar{V}_{k,i}=\frac{(\Bar{x}_i-x^*_{k,i})}{\|\Bar{x}_i-x^*_{k,i}\|}$ and $\bar\xi_k=\frac{\sin(\theta_k)u_{h(x)-1}}{\sin(\beta_k)\|u_{h(x)-1}\|}-\frac{\sin(\theta_k-\beta_k)}{\sin(\beta_k)}V_{ck}$. Similar to the previous case, segment $\mathcal{CL}_i$ divides set $\mathcal{AR}^{h}_k$ into two regions, a left-side $\mathcal{P}_{<}(c_i,R_1\Bar{V}_{k,i})\cap\mathcal{AR}^{h}_k$ and a right-side $\mathcal{P}_{>}(c_i,R_1\Bar{V}_{k,i})\cap\mathcal{AR}^{h}_k$. On the right side, the control is on the right tangent to obstacle $k$, while on the left, the control is on the left tangent to obstacle $k$. Therefore, by considering $V_{k,i}$ instead of $V_{ck}$ where $V_{k,i}=a\Bar{V}_{k,i}+bV_{ck},\,a>0,\,b>0$, $0<\sigma_{k,i}=\angle(\Bar{V}_{k,i},V_{k,i})< \pi$, and $\theta_k<\theta_k^*=\angle(\Bar{V}_{k,i},\Bar{\xi}_k)<\pi$, we can write $\Bar{V}_{k,i}^{\top}\Bar{\xi}_k=\cos(\sigma_{k,i}+\theta_k^*)$ for all $x\in U$. Then, 
\begin{align*}
    \Dot{V}(x)&=
    -K\left(\cos(\sigma_{k,i}+\theta_k^*)-\cos(\sigma_{k,i})\cos(\theta_k^*) \right),\\
    &=K\sin(\sigma_{k,i})\sin(\theta_k^*),
\end{align*}
Therefore, $\Dot{V}(\Bar{x}_i)=0$ and $\Dot{V}(x)>0$ for all $x\in U$.\\
{\bf Case 3:} Consider the region $\mathcal{AR}_k^h$ where $\mathcal{L}_i^e\cap\mathcal{O}_k=\varnothing$ and $\mathcal{CL}_i\cap\mathcal{AR}_k^h\neq\varnothing$. Let $\bar\Omega_i=R_b(c_i-\bar x_i)$ where $\Bar{x}_i\in\mathcal{CL}_i\cap\mathcal{AR}^{h}_k$, $R_b=\big(\begin{smallmatrix}
  0 & b\\
  -b & 0
\end{smallmatrix}\big)$, and $b\in\{-1,1\}$ is chosen such that $\bar\Omega_i^{\top}(c_k-\bar x_i)>0$. Since $c_k\in\mathcal{P}_>(\bar x_i,\bar\Omega_i)$ and $\mathcal{L}_i^e\cap\mathcal{O}_k=\varnothing$, $\mathcal{O}_k\subset\mathcal{P}_>(\bar x_i,\bar\Omega_i)$. Define the tube $\mathcal{TU}_i:=\{x\in\mathcal{F}|d(x,\mathcal{CL}_i)\leq e_i\}$ where $e_i$ is small such that $\mathcal{TU}_i\cap\mathcal{P}_>(\bar x_i,\bar\Omega_i)\cap\mathcal{L}_j=\varnothing$ for all $j\in\mathcal{Z}\setminus\{i\}$, and $\bar\Omega_i^{\top}u(x)>0$ for all $x\in\mathcal{TU}_i\cap\mathcal{P}_>(\bar x_i,\bar\Omega_i)$. Let $V(x)=\bar\Omega_i^{\top}(x-\bar x_i)$ where $V(\Bar{x}_i)=0$ and $V(x)>0$ for all $x\in\mathcal{P}_>(\bar x_i,\bar\Omega_i)$. Define the set $U:=\mathcal{TU}_i\cap\mathcal{P}_>(\bar x_i,\bar\Omega_i)\cap\mathcal{AR}^h_k$. 
\begin{align*}
    \Dot{V}(x)&=\frac{\partial V(x)}{\partial x}^\top\Dot{x}=\bar\Omega_i^{\top}u(x),
\end{align*}
where $\Dot{V}(x)>0$ for all $x\in U$ and $\Dot{V}(x)=0$ for $x\in\mathcal{CL}_i$.\\
In the three treated cases, $U$ is compact, $V(x)=0$ on the undesired equilibria $\mathcal{CL}_i$, and $V$ is increasing on $U$. Therefore, $x(t)$ must leave $U$.\\
\begin{figure}[!h]
\centering
\includegraphics[scale=0.4]{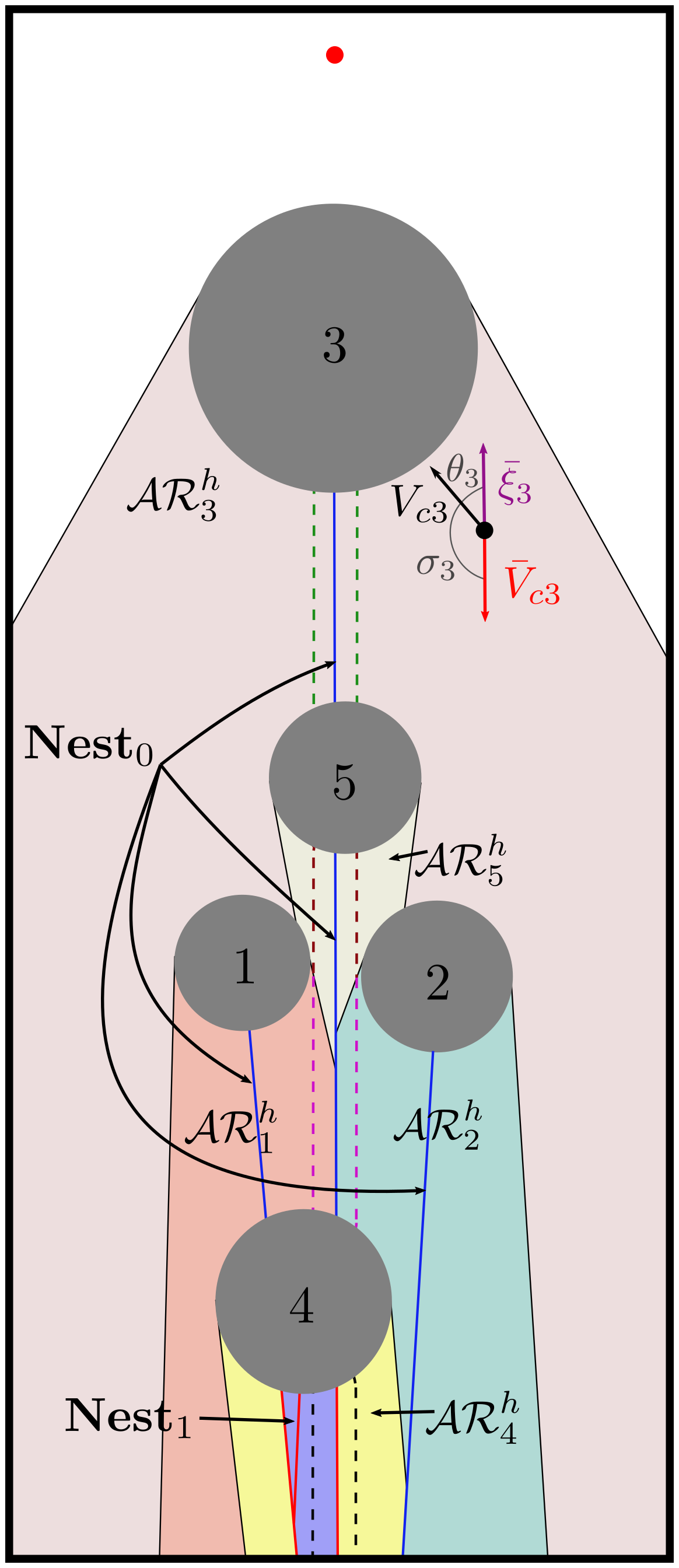}
\caption{Illustration of the nests (attraction region of the undesired equilibria.)}
\label{fig:2dnests}
\end{figure}
Now, we will show that if we start outside a given nest $\textbf{Nest}_k,\,k>0$, we will never get inside any nest.\\
 Consider the nest $\textbf{Nest}_k$, obstacle $j\in\mathbb{L}$, and the set of undesired equilibria $\mathcal{L}_i$ such that $k>0$, $i\in\mathcal{R}_j$, and $\partial\textbf{Nest}_k\cap\mathcal{L}_i\cap\mathcal{AR}_j^h\neq\varnothing$ ({\it i.e.,} segment (or segments) of $\mathcal{L}_i$ forms a portion of the boundary of the nest $\textbf{Nest}_k$ when the last projection involves obstacle $j$). Define the tube $\mathcal{TU}_i:=\{x\in\mathcal{F}|d(x,\mathcal{L}_i\cap\mathcal{AR}_j^h)\leq e_i\}$ where $e_i$ is small such that $(\mathcal{TU}_i\setminus\textbf{Nest}_k)\cap\mathcal{L}_p=\varnothing$ for all $p\in\mathcal{Z}\setminus\{i\}$, and $\mathcal{TU}_i\subset\mathcal{AR}_j$. This case amounts to case 2) with $U:=(\mathcal{TU}_i\setminus\textbf{Nest}_k)\cap\mathcal{AR}_j^h$. 
Since the nests are invariants,  $\cup_{i\in\mathcal{Z}}\mathcal{L}_i\subset\cup_k\textbf{Nest}_k$, and for all $x(0)\notin\cup_k\textbf{Nest}_k$, $\lim_{t \to+\infty}x(t)\notin\textbf{Nest}_k$, the set of nests $\cup_k\textbf{Nest}_k$, is the attraction region of the undesired equilibria.
\subsection*{Proof of Theorem \ref{the2}}\label{appendix:the2}
Item i) follow directly from Lemma \ref{lem7}. According to item v) of Theorem \ref{the1}, all trajectories converging to $x_d$ are generated by a {\it quasi-optimal} obstacle avoidance maneuver and item i) states that $x_d$ is attractive from all $x(0)\in\mathcal{F}\setminus\cup_k\textbf{Nest}_k$, which proves item ii). Since Assumption \ref{as:3} imposes that $\mathbb{L}=\varnothing$, no cell will be created, which implies that only the special nest exists. Therefore, $\cup_k\textbf{Nest}_k=\textbf{Nest}_0=\cup_{i\in\mathcal{Z}}\mathcal{L}_i$ is the region of attraction of the undesired equilibria $\cup_{i\in\mathcal{Z}}\mathcal{L}_i$, as per Lemma \ref{lem7}, and has Lebesgue measure zero, which shows the almost global asymptotic stability of $x_d$.
\subsection*{Proof of Lemma \ref{lem8}}\label{appendix:Lemma 8}
Let us look for the equilibria of the closed-loop system \eqref{12}-\eqref{sensor_based_ctrl} by setting $u(x)=0$. Then,
from the first equation of \eqref{sensor_based_ctrl}, the equilibrium point is $x_d$. From the second equation of \eqref{sensor_based_ctrl}, one gets $u_d(x)=\|u_d(x)\|\frac{\sin(\tilde{\theta}-\tilde{\beta})}{\sin(\tilde{\theta})}\frac{\tilde{c}-x}{\|\tilde{c}-x\|}$ which is true if and only if $\tilde\beta=0$ ({\it i.e.,}$\angle(u_d,(\tilde c-x))=0$). As $\widetilde{\mathcal{BL}}$ is the union of the disjoints practical shadow regions, there exists a unique $i\in\mathbb{I}$ such that if $x\in\widetilde{\mathcal{BL}}$, $x\in\tilde{\mathcal{D}}(x_d,c_i,R)$, and since $\tilde{c}$ is the projection of $x$ onto obstacle $i$, $\frac{\tilde c-x}{\|\tilde c -x\|}=\frac{c_i-x}{\|c_i -x\|}$. Therefore, $u(x)=0$ for all $x\in\tilde{\mathcal{L}}_d(x_d,c_i,R):=\mathcal{L}_d(x_d,c_i)\cap\tilde{\mathcal{D}}(x_d,c_i,R)$ where $\mathcal{L}_d(x_d,c_i)$ is defined in Lemma \ref{lem2}. 
    Finally, one can conclude that the set of undesired equilibria for the closed-loop system \eqref{12}-\eqref{sensor_based_ctrl} is $\zeta=\{x_d\}\cup\left(\cup_{i\in\mathbb{I}}\tilde{\mathcal{L}}_d(x_d,c_i,R)\right)$.
    
\subsection*{Proof of Theorem \ref{the3}}\label{appendix:the3}
Since the sensor-based case is a special case of the map-based scenario when each obstacle is considered as a unique obstacle in the workspace, item i) follows from item i) of Theorem \ref{the1}. Item ii) follows directly from Lemma \ref{lem8}. Now, let us prove item iii). 
Consider obstacle $i\in\mathbb{I}$ and the set of equilibrium points $\tilde{\mathcal{L}}_d(x_d,c_i,R)$. Define the tube $\mathcal{TU}_i:=\{x\in\tilde{\mathcal{D}}(x_d,c_i,R)|d(x,\tilde{\mathcal{L}}_d(x_d,c_i,R))\leq e_i\}$ surrounding $\tilde{\mathcal{L}}_d(x_d,c_i,R)$ inside the practical shadow region where $e_i$ is small such that $\tilde{c}=\mathrm{arg}\min\limits_{y\in\mathcal{O}_i}\|x-y\|$.  Let $V(x)=1-\frac{(\tilde x_i-c_i)^\top}{\|\tilde x_i-c_i\|}\frac{(x-\tilde c)}{\|x-\tilde c\|}$ where
$\tilde x_i\in\tilde{\mathcal{L}}_d(x_d,c_i,R)$. Note that $V(\tilde x_i)=0$ and $V(x)>0$ for all $x\in \mathcal{TU}_i\setminus  \tilde{\mathcal{L}}_d(x_d,c_i,R)$. Define the set $U:=\{x\in\mathcal{TU}_i|V(x)>0\}$. The time-derivative of $V(x)$ on $\mathcal{TU}_i$ is given by
\begin{align*}
    \Dot{V}(x)&=\frac{\partial V(x)}{\partial x}^\top\Dot{x},\\&=-\frac{(\tilde x_i-c_i)^\top}{\|\tilde x_i-c_i\|}J_x\left(\frac{(x-\tilde c)}{\|x-\tilde c\|}\right)u(x),\\
    &=\frac{-1}{\|\tilde x_i-c_i\|\|x-\tilde c\|}(\tilde x_i-c_i)^\top\pi^{\bot}\left(\frac{(x-\tilde c)}{\|x-\tilde c\|}\right)u(x),\\
    &=\frac{-\gamma}{\|\tilde x_i-c_i\|\|x-\tilde c\|}(\tilde x_i-c_i)^\top\pi^{\bot}\left(\frac{(x-\tilde c)}{\|x-\tilde c\|}\right)(x_d-x),\\
    &=-K(\tilde x_i-c_i)^\top\pi^{\bot}(V_{ci})(x_d-x),
\end{align*}
where $K=\frac{\gamma}{\|\tilde x_i-c_i\|\|x-\tilde c\|}$, and as $\tilde c$ is the projection of $x$ onto obstacle $i$ for all $x\in\mathcal{TU}_i$, $V_{ci}=\frac{(c_i-x)}{\|c_i-x\|}=\frac{(\tilde c-x)}{\|\tilde c-x\|}$. Since $\tilde x_i=c_i+\delta\frac{c_i-x_d}{\|c_i-x_d\|}$ with $\delta\geq r_i$, one gets
\begin{align*}
    \Dot{V}(x)&=-\frac{\delta K}{\|c_i-x_d\|}(c_i-x_d)^{\top}\pi^{\bot}(V_{ci})(x_d-x),\\
    &=-\frac{\delta K}{\|c_i-x_d\|}\bigl((c_i-x)+(x-x_d)\bigr)^{\top}\pi^{\bot}(V_{ci})(x_d-x),\\
    &=\frac{\delta K}{\|c_i-x_d\|}(x_d-x)^\top\pi^{\bot}(V_{ci})(x_d-x).
\end{align*}
where we used the fact that $(c_i-x)^{\top}\pi^{\bot}(V_{c_i})(x_d-x)=0$. It is clear that $\Dot{V}(x)>0$ for all $x\in U$, and $\Dot{V}(x)=0$  for all $x\in\tilde{\mathcal{L}}_d(x_d,c_i,R)$ over the set $\mathcal{TU}_i$. Since $U$ is a compact set, $V(x)$ is increasing on $U$, and $V(x)=0$ on $\tilde{\mathcal{L}}_d(x_d,c_i,R)$ (the tube axis), $x(t)$ must leave the set $U$. Note that the set $U$ is bounded by the free space boundary and the lateral surface of tube $\mathcal{TU}_i$. Due to the safety of the system, as per item i), $x(t)$ can not leave $U$ from the free space boundary and can only leave it from the surface of the tube for all $x(0)\in U$. Therefore, the set of equilibria $\tilde{\mathcal{L}}_d(x_d,c_i,R)$ is unstable. 
Lastly, we prove item iv). Consider the equilibrium point $x_d$ and the positive definite function $V_1(x)=\frac{1}{2}||x-x_d||^2$ whose time-derivative is given by
\begin{align*}
    \Dot{V}_1(x)&=\frac{\partial V_1(x)}{\partial x}^\top\Dot{x},\\&=(x-x_d)^\top u(x),\\&=\begin{cases}
    -\gamma\|x-x_d\|^2,&x\in\widetilde{\mathcal{VI}}\\
    -\gamma\|x-x_d\|^2+ 
     \gamma\|x-x_d\|\frac{\sin(\tilde\theta-\tilde\beta)}{\sin(\tilde\theta)}\frac{(x_d-x)^\top(\tilde c-x)}{\|\tilde c-x\|},
 & x\in\widetilde{\mathcal{BL}}
\end{cases}\\
      &=\begin{cases}
      -\gamma\|x-x_d\|^2,&x\in\widetilde{\mathcal{VI}}\\
      -\gamma\|x-x_d\|^2+
      \gamma\|x-x_d\|^2\frac{\sin(\tilde\theta-\tilde\beta)}{\sin(\tilde\theta)}\cos(\tilde\beta),
      &
      x\in\widetilde{\mathcal{BL}}
      \end{cases}\\
      &=\begin{cases}
      -\gamma\|x-x_d\|^2,&x\in\widetilde{\mathcal{VI}}\\
      -\gamma\|x-x_d\|^2\frac{\sin(\tilde\beta)}{\sin(\tilde\theta)}\cos(\tilde\theta-\tilde\beta),&x\in\widetilde{\mathcal{BL}}
      \end{cases}
\end{align*}
where we used the fact that $\sin(\tilde{\theta})-\sin(\tilde{\theta}-\tilde{\beta})\cos(\tilde{\beta})=\sin(\tilde\beta)\cos(\tilde\theta-\tilde\beta)$, $0<\tilde\theta\leq\frac{\pi}{2}$ and $0\leq\tilde\beta\leq\tilde\theta$. Therefore, $\Dot{V}_1(x)=0$ only for $x\in\zeta$ and $\Dot{V}_1(x)<0$ for all $x\in\mathcal{F}\setminus\zeta$. Since the practical shadow regions are disjoint by construction, and the undesired equilibria $\tilde{\mathcal{L}}_d(x_d,c_i,R)$ are limited to the shadow regions as per Lemma \ref{lem8}, $\mathbb{L}=\varnothing$ and $\textbf{Cell}_i=\varnothing$ for all $i\in\mathbb{I}$. Thus, for all $i\in\mathbb{I}$, the attraction region of the set of undesired equilibria $\tilde{\mathcal{L}}_d(x_d,c_i,R)$ reduces to the line segment $\mathcal{L}_d(x_d,c_i)\cap\mathcal{D}^t(x_d,c_i)$. Since the attraction region of the undesired equilibria is a set of measure zero, the equilibrium point $x_d$ is almost globally asymptotically stable in $\mathcal{F}$. 
\subsection*{Proof of Lemma \ref{lem9}}\label{appendix:Lemma 9}
Following the same procedure as in \ref{appendix:Lemma 8}, $u(x)=0$ if and only if $x=x_d$ or $x\in\tilde{\mathcal{L}}_d(x_d,\mathrm{x}_i,R)$, where for all $x\in\tilde{\mathcal{L}}_d(x_d,\mathrm{x}_i,R)$, $\tilde c=\mathrm{x}_i$ and 
$\frac{(\tilde{c}-x)^{\top}}{\|\tilde{c}-x\|}\frac{(x_d-x)^{\top}}{\|x_d-x\|}=1$. Therefore, one can conclude that the set of equilibria is $\tilde{\zeta}=\{x_d\}\cup\left(\cup_{i\in\mathbb{I}}\tilde{\mathcal{L}}_d(x_d,\mathrm{x}_i,R)\right)$.
\subsection*{Proof of Theorem \ref{the4}}\label{appendix:the4}
Since the considered convex obstacles have smooth boundaries, the tangent cone on the boundaries of the obstacles are half-planes, which is similar to the spherical obstacles case. Therefore, item i) follows from item i) of Theorem \ref{the3}. Item ii) follows directly from Lemma \ref{lem9}. Now, let us prove item iii). Consider obstacle $i\in\mathbb{I}$ and the equilibrium point $\bar x_i\in \tilde{\mathcal{L}}_d(x_d,\mathrm{x}_i,R)$.
Define the tube $\mathcal{TU}_i:=\{x\in\tilde{\mathcal{D}}(x_d,i,R)|d(x,\tilde{\mathcal{L}}_d(x_d,\mathrm{x}_i,R))\leq e_i\}$ surrounding $\tilde{\mathcal{L}}_d(x_d,\mathrm{x}_i,R)$ inside the practical shadow region where $e_i$ is small such that $\tilde{c}=\mathrm{arg}\min\limits_{y\in\tilde{\mathcal{O}}_i}\|x-y\|$.  Let $V(x)=1-\frac{(\bar x_i-x_d)^\top}{\|\bar x_i-x_d\|}\frac{(x-\tilde c)}{\|x-\tilde c\|}$ where
$\bar x_i\in\tilde{\mathcal{L}}_d(x_d,\mathrm{x}_i,R)$. Note that $V(\bar x_i)=0$ and $V(x)>0$ for all $x\in \mathcal{TU}_i\setminus  \tilde{\mathcal{L}}_d(x_d,\mathrm{x}_i,R)$. Define the set $U:=\{x\in\mathcal{TU}_i|V(x)>0\}$. The time-derivative of $V(x)$ on $\mathcal{TU}_i$ is given by
\begin{align*}
    \Dot{V}(x)&=\frac{\partial V(x)}{\partial x}^\top\Dot{x}=-\frac{(\bar x_i-x_d)^\top}{\|\bar x_i-x_d\|}J_x\left(\frac{(x-\tilde c)}{\|x-\tilde c\|}\right)u(x),\\
    &=\frac{-1}{\|x-\tilde c\|}\frac{(\bar x_i-x_d)^\top}{\|\bar x_i-x_d\|}\pi^{\bot}\left(\frac{(x-\tilde c)}{\|x-\tilde c\|}\right)u(x),\\
    &=\frac{\gamma}{\|x-\tilde c\|}\frac{(\bar x_i-x_d)^\top}{\|\bar x_i-x_d\|}\pi^{\bot}\left(\frac{(x-\tilde c)}{\|x-\tilde c\|}\right)(x-x_d),\\
    &=K(\cos(\varphi_d)-\cos(\tilde{\varphi})\cos(\tilde \beta)),
\end{align*}
where $K=\gamma\frac{\|x-x_d\|}{\|x-\tilde c\|}$, $\tilde{\varphi}=\angle(\bar x_i-x_d,x-\tilde c)$, $0<\varphi_d=\angle(\bar x_i-x_d,x-x_d)\leq\varphi_d^{max}$, and $\varphi_d^{max}=\arcsin(e_i/\|x-x_d\|)\in(0,\pi/2]$. Since obstacle $i$ satisfies the curvature condition, as per Assumption \ref{as:4}, and $\tilde c$ is the projection of $x$ onto obstacle $i$, $0<\tilde\varphi=\varphi_d+\tilde\beta<\pi$, where $ 0\leq\tilde{\beta}=\angle(\tilde{c}-x,x_d-x)\leq\frac{\pi}{2}$. Thus, $ \Dot{V}(x)=K\sin(\tilde{\varphi})\sin(\tilde\beta).$
 It is clear that, over the set $\mathcal{TU}_i$, $\Dot{V}(x)=0 $ for $x\in\tilde{\mathcal{L}}_d(x_d,\mathrm{x}_i,R)$({\it i.e.,} $\tilde{\beta}=0$), and $\Dot{V}(x)>0$ for all $x\in U$. Since $U$ is a compact set, $V(x)$ is increasing on $U$, and $V(x)=0$ on $\tilde{\mathcal{L}}_d(x_d,\mathrm{x}_i,R)$ (the tube axis), $x(t)$ must leave the set $U$. Note that the set $U$ is bounded by the free space boundary and the lateral surface of tube $\mathcal{TU}_i$. Due to the safety of the system, as per item i), $x(t)$ can not leave $U$ from the free space boundary and can only leave it from the surface of the tube for all $x(0)\in U$. Therefore, the set of equilibria $\tilde{\mathcal{L}}_d(x_d,\mathrm{x}_i,R)$ is unstable. Finally, proof of item iv) is similar to that of item iv) in Appendix \ref{appendix:the3}.

\bibliographystyle{unsrtnat}
\bibliography{library}  






\end{document}